\documentclass[11pt]{article}
\usepackage{fullpage}
\usepackage{amsmath,amssymb,amsfonts,amsthm}
\usepackage{caption} 
\usepackage{subcaption} 
\usepackage{graphicx}
\usepackage[backref=page]{hyperref}
\usepackage{url}
\usepackage{color}
\usepackage{wrapfig}
\usepackage{tikz}
\usetikzlibrary{decorations.pathreplacing}
\usepackage{algorithm}
\usepackage[noend]{algpseudocode}
\usepackage[framemethod=tikz]{mdframed}
\usepackage{xspace}
\usepackage{framed}
\usepackage{thmtools}
\usepackage{thm-restate}

\newtheorem{theorem}{Theorem}[section]
\newtheorem{corollary}[theorem]{Corollary}
\newtheorem{lemma}[theorem]{Lemma}

\newtheorem{definition}[theorem]{Definition}

\newenvironment{proofof}[1]{\begin{trivlist} \item {\bf Proof
#1:~~}}
  {\qed\end{trivlist}}

\newcommand{\namedref}[2]{\hyperref[#2]{#1~\ref*{#2}}}

\newcommand{\thmref}[1]{\namedref{Theorem}{thm:#1}}

\newcommand{\lemref}[1]{\namedref{Lemma}{lem:#1}}

\newcommand{\secref}[1]{\namedref{Section}{sec:#1}}

\newcommand{\figref}[1]{\namedref{Figure}{fig:#1}}

\renewcommand{\algref}[1]{\namedref{Algorithm}{alg:#1}}

\newcommand{\tableref}[1]{\namedref{Table}{tab:#1}}

\def \roundtrunc    {\mdef{\mathsf{RoundTrunc}}}
\def \lewisiterate    {\mdef{\mathsf{LewisIterate}}}
\def \approxlewisweights    {\mdef{\mathsf{ApproxLewisWeights}}}



\newcommand\norm[1]{\left\lVert#1\right\rVert}
\newcommand{\PPr}[1]{\ensuremath{\mathbf{Pr}\left[#1\right]}}

\newcommand{\Ex}[1]{\ensuremath{\mathbb{E}\left[#1\right]}}

\renewcommand{\O}[1]{\ensuremath{\mathcal{O}\left(#1\right)}}
\newcommand{\tO}[1]{\ensuremath{\tilde{\mathcal{O}}\left(#1\right)}}
\newcommand{\eps}{\varepsilon}

\def \OPT    {\mdef{\text{OPT}}}
\def \ba    {\mdef{\mathbf{a}}}

\def \calP    {\mdef{\mathcal{P}}}

\def \bA    {\mdef{\mathbf{A}}}
\def \bB    {\mdef{\mathbf{B}}}

\def \bG    {\mdef{\mathbf{G}}}

\def \bK    {\mdef{\mathbf{K}}}

\def \bM    {\mdef{\mathbf{M}}}

\def \bR    {\mdef{\mathbf{R}}}
\def \bS    {\mdef{\mathbf{S}}}
\def \bT    {\mdef{\mathbf{T}}}
\def \bQ    {\mdef{\mathbf{Q}}}
\def \bV    {\mdef{\mathbf{V}}}

\def \bW    {\mdef{\mathbf{W}}}
\def \bb    {\mdef{\mathbf{b}}}
\def \be    {\mdef{\mathbf{e}}}

\def \bk    {\mdef{\mathbf{k}}}
\def \bm    {\mdef{\mathbf{m}}}

\def \bs    {\mdef{\mathbf{s}}}

\def \bw    {\mdef{\mathbf{w}}}
\def \bv    {\mdef{\mathbf{v}}}
\def \bx    {\mdef{\mathbf{x}}}
\def \by    {\mdef{\mathbf{y}}}
\def \bz    {\mdef{\mathbf{z}}}

\newcommand{\mdef}[1]{{\ensuremath{#1}}\xspace}  
\DeclareMathOperator*{\argmin}{argmin}
\DeclareMathOperator*{\argmax}{argmax}

\DeclareMathOperator*{\poly}{poly}
\DeclareMathOperator*{\sgn}{sign}



\newcommand{\flr}[1]{\mdef{\left\lfloor#1\right\rfloor}}              

\newcommand{\ignore}[1]{}

\newif\ifnotes\notestrue 
\ifnotes
\newcommand{\samson}[1]{\textcolor{purple}{{\bf (Samson:} {#1}{\bf ) }} \marginpar{\tiny\bf
             \begin{minipage}[t]{0.5in}
               \raggedright S:
            \end{minipage}}}            							
\else
\newcommand{\samson}[1]{}
\fi

\makeatletter
\renewcommand*{\@fnsymbol}[1]{\textcolor{mahogany}{\ensuremath{\ifcase#1\or *\or \dagger\or \ddagger\or
 \mathsection\or \triangledown\or \mathparagraph\or \|\or **\or \dagger\dagger
   \or \ddagger\ddagger \else\@ctrerr\fi}}}
\makeatother

\providecommand{\email}[1]{\href{mailto:#1}{\nolinkurl{#1}\xspace}}

\definecolor{mahogany}{rgb}{0.75, 0.25, 0.0}
\definecolor{darkgreen}{rgb}{0.0, 0.2, 0.13}
\definecolor{darkpastelgreen}{rgb}{0.01, 0.75, 0.24}

\newcommand{\tsfrac}[2]{{\textstyle\frac{#1}{#2}}}

\hypersetup{
     colorlinks   = true,
     citecolor    = darkpastelgreen,
	 linkcolor	  = mahogany,
	 urlcolor     = darkpastelgreen
}

\title{Fast Regression for Structured Inputs}
\author{\hspace{0.2in}
Raphael A. Meyer\thanks{New York University. 
E-mail: \email{ram900@nyu.edu}}
\and
Cameron Musco\thanks{University of Massachusetts Amherst.
E-mail: \email{cmusco@cs.umass.edu}}
\and
Christopher Musco\thanks{New York University. 
E-mail: \email{cmusco@nyu.edu}}\hspace{0.2in}
\and
David P. Woodruff\thanks{Carnegie Mellon University. 
E-mail: \email{dwoodruf@cs.cmu.edu}}
\and
Samson Zhou\thanks{Carnegie Mellon University. 
E-mail: \email{samsonzhou@gmail.com}}}

\begin{document}
\maketitle

\begin{abstract}
We study the $\ell_p$ regression problem, which requires finding $\mathbf{x}\in\mathbb R^{d}$ that minimizes $\|\mathbf{A}\mathbf{x}-\mathbf{b}\|_p$ for a matrix $\mathbf{A}\in\mathbb R^{n \times d}$ and response vector $\mathbf{b}\in\mathbb R^{n}$. There has been recent interest in developing subsampling methods for this problem that can outperform standard techniques when $n$ is very large. However, all known subsampling approaches have run time that depends exponentially on $p$, typically, $d^{\mathcal{O}(p)}$, which can be prohibitively expensive. 
We improve on this work by showing that for a large class of common \emph{structured matrices}, such as combinations of low-rank matrices, sparse matrices, and Vandermonde matrices, there are subsampling based methods for $\ell_p$ regression that depend polynomially on $p$. For example, we give an algorithm for $\ell_p$ regression on Vandermonde matrices that runs in time $\mathcal{O}(n\log^3 n+(dp^2)^{0.5+\omega}\cdot\text{polylog}\,n)$, where $\omega$ is the exponent of matrix multiplication. The polynomial dependence on $p$ crucially allows our algorithms to extend naturally to efficient algorithms for $\ell_\infty$ regression, via approximation of $\ell_\infty$ by $\ell_{\mathcal{O}(\log n)}$. Of practical interest, we also develop a new subsampling algorithm for $\ell_p$ regression for arbitrary matrices, which is simpler than previous approaches for $p \ge 4$.
\end{abstract}


\section{Introduction}
Given a matrix $\bA\in\mathbb{R}^{n\times d}$ and a vector $\bb\in\mathbb{R}^n$, the goal of linear regression is to find a vector $\bx\in\mathbb{R}^d$ such that $\bA \bx$ is as close as possible to $\bb$. 
In approximate $\ell_p$ linear regression in particular, we seek to find  $\tilde{\bx}\in\mathbb{R}^d$ such that, for some approximation parameter $\eps>0$,
\[\|\bA\tilde{\bx}-\bb\|_p\le(1+\eps)\min_{\bx\in\mathbb{R}^d}\|\bA\bx-\bb\|_p.\]
Here for a vector $\by \in \mathbb{R}^n$, $\|\by\|_p = \left(\sum_{i=1}^n |\by_i|^p\right)^{1/p}$.  
$\ell_p$ regression is central in statistical data analysis, and  has numerous applications in machine learning, data science, and applied mathematics~\cite{Friedman01,Chatterjee06}. 
There are a number of algorithmic approaches to solving  $\ell_p$ regression. For example, we can directly apply iterative methods like gradient descent or stochastic gradient descent. Alternatively, we can use iteratively reweighted least squares, which reduces the  regression problem to solving $\poly(d)$ linear systems \cite{AdilPS19,AdilKPS19}. 

Both the above approaches require repeated passes over the matrix $\bA$, so while their runtimes are typically linear in $nd$, or more generally on the time to multiply the matrix $\bA$ by a vector, that factor is multiplied by other parameters, such as the number of iterations to convergence. An alternative approach, which can lead to faster running time when $n$ is large, is to apply ``sketch-and-solve'' methods. This approaches begins with an inexpensive subsampling step, which selects a subset of rows in $\bA$ to produce a smaller matrix $\bM$ with $\poly\left(d,\frac{1}{\eps},\log n\right) \ll n $ rows. 
$\bM$ can be written as $\bM=\bS\bA$ where $\bS$ is a row sampling and rescaling matrix. 
The goal is for  $\|\bS\bA\bx-\bS\bb\|_p$ to be a good approximation to $\|\bA\bx-\bb\|_p$ for all $\bx\in\mathbb{R}^d$. 
If this the case, then an approximate solution to the original $\ell_p$ regression problem can be obtained by solving the subsampled problem, which has smaller size, thus allowing for more efficient computation. 

The standard approach to subsampling for $\ell_p$ regression is to sample rows with probability proportional to their so-called \emph{$\ell_p$ Lewis weights}~\cite{CohenP15}. Unfortunately, for general inputs $\bA$, $\ell_p$ Lewis weight sampling requires  $\O{d^{\max(1, p/2)}}$ rows, and it can be shown that no subsampling method can take fewer than $d^{\O{p}}$~\cite{LiWW21}. 
This means that sampling is only helpful in the limited regime where $n\gg d^{p/2}$. 
However, there are many applications in which the  matrix $\bA$ has additional structure, which can be leveraged to design more efficient algorithms. 
For example, Vandermonde matrices are used in the polynomial regression problem, which has been studied for over 200 years~\cite{Gergonne74} and has applications to machine learning~\cite{KalaiKMS08}, applied statistics~\cite{MacNeill78}, and computer graphics~\cite{Pratt87}. 
The goal is to fit a signal, which is measured at time points $t_1,\ldots,t_n$ using a degree $d$ polynomial. This problem can be formulated as $\ell_p$ regression with a Vandermonde feature matrix $\bA$, whose $i^{th}$ row $\ba_i$ is of the form $[1,t_i,(t_i)^2,\ldots,(t_i)^{d-1}]$. 
Regression problems with Vandermonde matrices also arise in the settings of Fourier-constrained function fitting~\cite{AvronKMMVZ19} and Toeplitz covariance estimation~\cite{EldarLMM20}. 

Notably, \cite{ShiW19} leverages the structure of Vandermonde matrices to more quickly build a subsampled matrix $\bM$ given any Vandermonde matrix $\bA$ than would be possible for a general input.
Their method does not change how many rows are in the matrix $\bM$, so the overall algorithm still incurs an exponential dependence in $p$.
So while the approach is a helpful improvement for Vandermonde regression where $p$ is small, this leaves an infeasible runtime for important problems like $\ell_\infty$ regression, which can be approximated by $\ell_p$ regression for $p=\O{\log n}$.

\subsection{Our Contributions}
We first show that for $\ell_p$ regression on Vandermonde matrices, it is possible to reduce the size of the subsampled matrix $\bM$ to depend polynomially instead of exponentially on $p$. 

\begin{theorem}
\label{thm:vander:regression}
Given $\eps\in(0,1)$ and $p\ge1$, a Vandermonde matrix $\bA\in\mathbb{R}^{n\times d}$, and $\bb\in\mathbb{R}^n$, there exists an algorithm that uses $\O{n\log^3 n}+d^{0.5+\omega}\,\poly\left(\frac{1}{\eps},p,\log n\right)$ time to compute a sampling matrix $\bS\in\mathbb{R}^{m\times n}$ with $m=\O{\frac{p^2d}{\eps^3}\log^2 n}$ rows so that with high probability, for all $\bx\in\mathbb{R}^d$,
\[(1-\eps)\|\bA\bx-\bb\|_p\le\|\bS\bA\bx-\bS\bb\|_p\le(1+\eps)\|\bA\bx-\bb\|_p,\]
and then to return a vector $\widehat{\bx}\in\mathbb{R}^d$ such that $\|\bA\widehat{\bx}-\bb\|_p\le(1+\eps)\min_{\bx\in\mathbb{R}^d}\|\bA\bx-\bb\|_p$.
\end{theorem}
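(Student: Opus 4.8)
The plan is the standard ``sketch-and-solve'' route, the one new ingredient being a bound of $\O{p^2 d\polylog n}$ --- rather than $d^{\O{p}}$ --- on the number of $\ell_p$ Lewis-weight samples needed, obtained by exploiting that low-degree polynomials cannot concentrate their mass too sharply. First I would append $\bb$ as an extra column to form $\bM=[\bA\mid\bb]\in\mathbb R^{n\times(d+1)}$, a Vandermonde matrix plus one extra column. It then suffices to construct a row sampling-and-rescaling matrix $\bS$ that is a $(1\pm\eps)$ $\ell_p$ subspace embedding for $\operatorname{colspan}(\bM)$: applying the embedding to the vectors $\bA\bx-\bb$ and returning any $\widehat\bx\in\argmin_{\bx}\norm{\bS\bA\bx-\bS\bb}_p$ gives the stated $(1+\eps)$ regression guarantee. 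This last minimization is an $\ell_p$ regression instance with only $m=\tO{p^2 d}$ rows and $d$ columns, which I would solve to high accuracy with known solvers \cite{AdilPS19,AdilKPS19}; this step is routine and is what contributes the $d^{0.5+\omega}\poly(\tfrac1\eps,p,\log n)$ running-time term (the $d^\omega$ from $d\times d$ linear algebra inside the solver, the residual $d^{1/2}$ from its iteration count on an $m\times d$ instance).

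\textbf{Building $\bS$ quickly.} I would have $\bS$ sample $m$ rows with probabilities proportional to constant-factor approximations $\tilde w_i$ of the $\ell_p$ Lewis weights of $\bM$, rescaling a sampled row $i$ by $(m\tilde w_i)^{-1/p}$. Lewis weights come from the usual contractive fixed-point iteration, whose only nontrivial per-step operation is computing $\ell_2$ leverage scores of a reweighted matrix $\bW^{1/2-1/p}\bM$. Since $\bM$ is Vandermonde plus a rank-one column, each such computation reduces to forming and inverting a $d\times d$ Hankel-type Gram matrix and then evaluating one degree-$\O{d}$ polynomial (plus a rank-one correction) at all $n$ nodes; the fast transposed-Vandermonde and multipoint-evaluation routines of \cite{ShiW19} do this in $\tO{n}+\poly(d)$ time per step, and over the $\polylog n$ steps needed this totals $\O{n\log^3 n}+\poly(d,p,\log n)$.

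\textbf{Reducing to a flatness bound.} For a fixed $\bx$, writing $\bv=\bM\bx$, the estimator $\norm{\bS\bv}_p^p=\sum_{i\text{ sampled}}(m\tilde w_i)^{-1}\abs{\bv_i}^p$ is an unbiased sum of independent terms each at most $\max_i\frac{\abs{\bv_i}^p}{m\,\tilde w_i}$, so Bernstein's inequality gives sharp concentration once $m\gtrsim\eps^{-2}\log(\cdot)\cdot\max_i\frac{\abs{\bv_i}^p}{\tilde w_i\norm{\bv}_p^p}$, and a Dudley-type chaining argument over the Lewis body (metric entropy $\O{d\log\tfrac1\eps}$) promotes this to the uniform-in-$\bx$ statement, at the cost of another factor of $d$ and one more $\eps^{-1}$. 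Thus it is enough to prove the \emph{flatness bound} $\max_i\frac{\abs{\bv_i}^p}{\tilde w_i\norm{\bv}_p^p}\le\poly(p,\log n)$ for all $\bv\in\operatorname{colspan}(\bM)$; for a generic matrix this ratio can be as large as $\Theta(d^{p/2-1})$, which is exactly the source of the $d^{\O{p}}$ in prior work. This is the main obstacle.

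\textbf{The flatness bound (the crux).} Polynomial structure enters twice. \emph{(i)} For Vandermonde matrices the $\ell_p$ Lewis weights are, up to $\poly(p,\log n)$ factors, comparable to the $\ell_2$ leverage scores / equilibrium density; in particular they do not collapse to the $\tau_i^{p/2}$ scale one might fear for $p>2$, because the extremal behaviour of a degree-$d$ polynomial in \emph{any} $\ell_p$ norm is governed by the same Chebyshev/endpoint geometry. Making this precise requires a careful argument that also accommodates irregularly spaced nodes and the rank-one $\bb$-column (when $\bb$ is supported on few coordinates those coordinates have $\tilde w_i=\Omega(1)$ and are harmless). \emph{(ii)} If $\bv$ is large at a node $t_i$, then Markov's and Bernstein's inequalities force $\abs{\bv}\ge(1-\tfrac1p)\abs{\bv_i}$ throughout an interval about $t_i$ of width $\Omega\!\big(\tfrac{\sqrt{1-t_i^2}}{pd}\big)$ --- only a factor $\approx p$ shorter than the natural ``flat scale,'' so the $p$-th power incurs no $2^{\Theta(p)}$ loss --- and summing $\abs{\bv_j}^p$ over the nodes lying in this interval (of which there are $\gtrsim 1/(\poly(p,\log n)\,\tilde w_i)$, using (i)) gives $\norm{\bv}_p^p\gtrsim\abs{\bv_i}^p/(\poly(p,\log n)\,\tilde w_i)$, which rearranges to the flatness bound. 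Combining the flatness bound with the Bernstein-plus-chaining argument shows $\bS$ is a $(1\pm\eps)$ $\ell_p$ subspace embedding with $m=\O{\eps^{-3}p^2 d\log^2 n}$ rows with high probability, and together with the small regression solve this completes the proof.
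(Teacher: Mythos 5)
Your route is genuinely different from the paper's, and its central step is a gap rather than a proof. You reduce everything to the ``flatness bound'' $\max_i |\bv_i|^p/(\tilde w_i\|\bv\|_p^p)\le\poly(p,\log n)$ for all $\bv$ in the column span of $[\bA\mid\bb]$, but your justification (Markov--Bernstein: a degree-$d$ polynomial that is large at a node stays large on an interval of width $\Omega(\sqrt{1-t_i^2}/(pd))$, and that interval contains many nodes) only applies to vectors of the form $\bA\bx$, i.e., to genuine polynomial evaluations at nodes in $[-1,1]$. The vectors that matter are $\bA\bx-\beta\bb$, and $\bb$ is an \emph{arbitrary dense} vector with no polynomial structure; your only remark about it (``when $\bb$ is supported on few coordinates those coordinates have $\tilde w_i=\Omega(1)$'') addresses a sparse special case that is not assumed. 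Moreover the nodes of a Vandermonde matrix here are arbitrary reals, not confined to $[-1,1]$, and your claim (i) that $\ell_p$ Lewis weights of a Vandermonde-plus-one-column matrix track the $\ell_2$ leverage/equilibrium density up to $\poly(p,\log n)$, as well as the Bernstein-plus-chaining ``promotion'' to a uniform embedding with $\tO{p^2d}$ rows, are exactly the hard statements and are asserted rather than proved (for $p>2$ this chaining is where the generic $d^{p/2}$ barrier lives). This is precisely the obstruction the paper flags: the loss is $|\langle\ba_i,\bx\rangle-b_i|^p$, not $|\langle\ba_i,\bx\rangle|^p$, and the $n$ arbitrary values $b_i$ destroy the structure.

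The paper avoids this by never forming a subspace embedding of $[\bA\mid\bb]$ at all. It observes that $(\langle\ba_i,\bx\rangle)^{2^r}$ (with $2^r\le p<2^{r+1}$) is itself a Vandermonde inner product in $\O{dp}$ columns, so a constant-factor solution $\tilde\bx$ can be obtained by $\ell_{p/2^r}$ Lewis-weight sampling (note $p/2^r\in[1,2)$) \emph{without using $\bb$'s structure}; then the residual $\bb'=\bb-\bA\tilde\bx$ has $\|\bb'\|_p=\O{\OPT}$, is rounded and truncated to $\O{\tfrac{\log n}{\eps}}$ distinct values, and the rows are grouped by value so that within each group the constant offset $t_k$ can be absorbed into an extended Vandermonde structure with $\O{dp^2}$ columns, again sampled via $\ell_{p/2^r}$ Lewis weights. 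This also fixes a second concrete flaw in your plan: you propose computing $\ell_p$ Lewis weights of $[\bA\mid\bb]$ for large $p$ by ``the usual contractive fixed-point iteration,'' but that iteration is only known to be a contraction for $p<4$; for $p\ge4$ (and certainly for $p=\Theta(\log n)$, which the $\ell_\infty$ application needs) it would require convex programming, which breaks both your correctness and your $\O{n\log^3 n}+\poly(d,p,\log n)$ runtime claim. The paper's restriction to $\ell_{p/2^r}$ weights with $p/2^r\in[1,2)$ is what keeps the weight computation cheap.
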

The best known previous work required at least $\Omega({n\log^2 n+d^{p/2}\cdot\poly(1/\eps)})$  time~\cite{ShiW19}. 
This has an exponential dependence in $p$, while our algorithm has just a polynomial dependence.

Building on Theorem \ref{thm:vander:regression}, we observe that to obtain a $(1+\eps)$-approximation to the fundamental problem of $\ell_\infty$ polynomial regression, it suffices to consider $\ell_p$ regression for $p=\O{\frac{\log n}{\eps}}$. Since our results have polynomial dependence on $p$ rather than exponential, we thus obtain the first subsampling guarantees for Vandermonde $\ell_\infty$ regression.

\begin{theorem}
\label{thm:vander:infty:regression}
Given $\eps\in(0,1)$, a Vandermonde matrix $\bA\in\mathbb{R}^{n\times d}$, and $\bb\in\mathbb{R}^d$, there exists an algorithm that uses $\O{n\log^3 n}+d^{0.5+\omega}\,\poly\left(\frac{1}{\eps},\log n\right)$ time to compute a sampling matrix $\bS\in\mathbb{R}^{m\times n}$ with $m=\O{\frac{d}{\eps^5}\log^4 n}$ such that with high probability, for all $\bx\in\mathbb{R}^d$,
\[(1-\eps)\|\bA\bx-\bb\|_\infty\le\|\bS\bA\bx-\bS\bb\|_\infty\le(1+\eps)\|\bA\bx-\bb\|_\infty,\]
and then to return a vector $\widehat{\bx}\in\mathbb{R}^d$ such that $\|\bA\widehat{\bx}-\bb\|_\infty\le(1+\eps)\min_{\bx\in\mathbb{R}^d}\|\bA\bx-\bb\|_\infty$.
\end{theorem}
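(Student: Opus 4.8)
The plan is to reduce $\ell_\infty$ Vandermonde regression to $\ell_p$ Vandermonde regression with exponent $p=\Theta\!\left(\frac{\log n}{\eps}\right)$ and then invoke \thmref{vander:regression}. The basic tool is the norm-interpolation inequality: for every $\by\in\mathbb{R}^k$, $\|\by\|_\infty\le\|\by\|_p\le k^{1/p}\|\by\|_\infty$. Taking $p=c\cdot\frac{\log n}{\eps}$ for a suitable constant $c$ makes $n^{1/p}=e^{(\ln n)/p}\le 1+O(\eps)$; moreover, since the sampling matrix produced below will have $m=O\!\left(\frac{p^2 d}{\eps^3}\log^2 n\right)\le\poly(n)$ rows (under the harmless assumption that $d$ and $\frac1\eps$ are at most $\poly(n)$), the same choice of $p$ also gives $m^{1/p}\le 1+O(\eps)$. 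These two estimates are exactly what let an $\ell_p$ guarantee double as an $\ell_\infty$ guarantee at the cost of only a constant factor in $\eps$.

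First I would run the algorithm of \thmref{vander:regression} on the given Vandermonde matrix $\bA$ and vector $\bb$ with exponent $p=\Theta\!\left(\frac{\log n}{\eps}\right)$ and accuracy parameter $\eps'=\Theta(\eps)$, obtaining a sampling matrix $\bS\in\mathbb{R}^{m\times n}$ and a vector $\widehat{\bx}$. The number of rows is $m=O\!\left(\frac{p^2 d}{(\eps')^3}\log^2 n\right)=O\!\left(\frac{d}{\eps^5}\log^4 n\right)$, and the running time is $O(n\log^3 n)+d^{0.5+\omega}\,\poly\!\left(\frac1{\eps'},p,\log n\right)=O(n\log^3 n)+d^{0.5+\omega}\,\poly\!\left(\frac1\eps,\log n\right)$, since $p$ and $\frac1{\eps'}$ are themselves $\poly\!\left(\frac1\eps,\log n\right)$.

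Next I would upgrade the $\ell_p$ subspace embedding to an $\ell_\infty$ one. Fix $\bx\in\mathbb{R}^d$ and set $\by=\bA\bx-\bb$. For the upper direction, $\|\bS\by\|_\infty\le\|\bS\by\|_p\le(1+\eps')\|\by\|_p\le(1+\eps')\,n^{1/p}\|\by\|_\infty\le(1+O(\eps))\|\by\|_\infty$. For the lower direction, applying interpolation to the $m$-dimensional vector $\bS\by$, $\|\bS\by\|_\infty\ge m^{-1/p}\|\bS\by\|_p\ge m^{-1/p}(1-\eps')\|\by\|_p\ge m^{-1/p}(1-\eps')\|\by\|_\infty\ge(1-O(\eps))\|\by\|_\infty$. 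Rescaling $\eps$ by an absolute constant gives $(1-\eps)\|\bA\bx-\bb\|_\infty\le\|\bS\bA\bx-\bS\bb\|_\infty\le(1+\eps)\|\bA\bx-\bb\|_\infty$ for all $\bx$, and it does not change the stated bounds on $m$ or the runtime up to constants.

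For the regression guarantee, let $\bx^\star\in\argmin_\bx\|\bA\bx-\bb\|_\infty$. The vector $\widehat{\bx}$ returned by \thmref{vander:regression} satisfies $\|\bA\widehat{\bx}-\bb\|_p\le(1+\eps')\min_\bx\|\bA\bx-\bb\|_p$, so chaining with interpolation, $\|\bA\widehat{\bx}-\bb\|_\infty\le\|\bA\widehat{\bx}-\bb\|_p\le(1+\eps')\|\bA\bx^\star-\bb\|_p\le(1+\eps')\,n^{1/p}\|\bA\bx^\star-\bb\|_\infty\le(1+\eps)\min_\bx\|\bA\bx-\bb\|_\infty$ after rescaling $\eps$. (Alternatively, one could solve the $m\times d$ subsampled $\ell_\infty$ regression directly as a linear program in $\poly(d,\log n,\frac1\eps)\le d^{0.5+\omega}\,\poly(\frac1\eps,\log n)$ additional time and apply the subspace embedding, with the same conclusion.) There is no substantial obstacle here beyond this bookkeeping; the one point that genuinely needs attention is the lower bound on $\|\bS\by\|_\infty$, which loses a factor $m^{1/p}$ rather than merely $n^{1/p}$, so one must check that $m=\poly(n)$ — which holds because $m=O\!\left(\frac{d}{\eps^5}\log^4 n\right)$ and $d,\frac1\eps\le\poly(n)$ without loss of generality — to ensure that $p=\Theta\!\left(\frac{\log n}{\eps}\right)$ is still large enough.
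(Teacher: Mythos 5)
Your proposal is correct and follows essentially the same route as the paper: approximate $\ell_\infty$ by $\ell_p$ with $p=\Theta\!\left(\frac{\log n}{\eps}\right)$ via the interpolation inequality (the paper's \lemref{infty:lp:regression}), then invoke the polynomial-in-$p$ guarantees of \thmref{vander:regression} to get $m=\O{\frac{d}{\eps^5}\log^4 n}$ rows and the stated runtime. Your explicit check that the lower direction only loses a factor $m^{1/p}\le 1+\O{\eps}$ because $m=\poly(n)$ is a detail the paper leaves implicit, but it is the same argument.
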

To the best of our knowledge, this is the first known dimensionality reduction for $\ell_\infty$ regression with provable guarantees for any (nontrivial) input matrix. 
We summarize these results in \tableref{results}.

\begin{table}
\centering
\begin{tabular}{c|c|c}\hline
Rows Sampled, $\ell_p$ Regression &  Rows Sampled, $\ell_\infty$ Regression & Reference \\\hline
$d^p\poly\left(\log n,\frac{1}{\eps}\right)$ & $n$ & \cite{AvronSW13} \\
$d^p\poly\left(\log n,\frac{1}{\eps}\right)$ & $n$  & \cite{ShiW19} \\
$dp^2\poly\left(\log n,\frac{1}{\eps}\right)$ (Theorem~\ref{thm:vander:regression}) & $d\poly\left(\log n,\frac{1}{\eps}\right)$ (Theorem~\ref{thm:vander:infty:regression}) & Our Results \\
\end{tabular}
\caption{Sample complexity for regression on Vandermonde matrices}
\label{tab:results}
\end{table}

Our second contribution is to show that improved sampling bounds for $\ell_p$ regression can be extended to a broad class of inputs, beyond Vandemonde matrices. 
We introduce the following definition to capture the ``true dimension'' of regression problems for structured input matrices. 
\begin{definition}[Rank of Regression Problem]
Given an integer $p\ge 1$ and a matrix $\bA\in\mathbb{R}^{n\times d}$, suppose there exists a matrix $\bM\in\mathbb{R}^{n\times t}$ and a fixed function $f:\mathbb{R}^d\to\mathbb{R}^t$ so that for all $\bx\in\mathbb{R}^d$,
\[|\langle\ba_i,\bx\rangle|^p=|\langle\bm_i,f(\bx)\rangle|.\]
Then we call the minimal such $t$ the \emph{rank of the $\ell_p$ regression problem}. 
\end{definition}
Theorems~\ref{thm:vander:regression} and \ref{thm:vander:infty:regression} rely on the following key structural property that we prove about the $p$-fold tensor product of rows of the Vandermonde matrix. 
\begin{lemma}
\label{lem:vander:truerank}
For integer $p\ge 1$, the rank of the $\ell_p$ regression problem on a Vandermonde matrix $\bA\in\mathbb{R}^{n\times d}$ is $\O{dp}$. 
\end{lemma}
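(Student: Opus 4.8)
The plan is to exploit the fact that a Vandermonde row acts on $\bx$ by polynomial evaluation. Write $\ba_i=(1,t_i,t_i^2,\ldots,t_i^{d-1})$, so that $\langle\ba_i,\bx\rangle=P_\bx(t_i)$ where $P_\bx(t)=\sum_{j=0}^{d-1}x_j t^j$ is the degree-$(d-1)$ polynomial whose coefficient vector is $\bx$. The key observation I would make is that raising such an evaluation to the $p$-th power is again a polynomial evaluation: $P_\bx(t)^p$ is a polynomial in $t$ of degree at most $p(d-1)$, and — crucially — its coefficients depend only on $\bx$, not on the index $i$. Concretely, by the multinomial theorem the coefficient of $t^k$ in $P_\bx(t)^p$ is $\sum_{j_1+\cdots+j_p=k,\ 0\le j_\ell\le d-1}x_{j_1}\cdots x_{j_p}$, a fixed homogeneous degree-$p$ form in the entries of $\bx$.

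First I would define the matrix $\bM\in\mathbb{R}^{n\times(p(d-1)+1)}$ to be the ``extended'' Vandermonde matrix with rows $\bm_i=(1,t_i,t_i^2,\ldots,t_i^{p(d-1)})$, and define $f:\mathbb{R}^d\to\mathbb{R}^{p(d-1)+1}$ to send $\bx$ to the coefficient vector of $t\mapsto P_\bx(t)^p$; by the previous paragraph this $f$ is a well-defined fixed function, independent of the data points $t_1,\ldots,t_n$. Then $\langle\bm_i,f(\bx)\rangle=P_\bx(t_i)^p=\langle\ba_i,\bx\rangle^p$ for every $i$ and every $\bx$. Since $|a^p|=|a|^p$ for every real $a$ and every positive integer $p$, this yields $|\langle\bm_i,f(\bx)\rangle|=|\langle\ba_i,\bx\rangle^p|=|\langle\ba_i,\bx\rangle|^p$, which is exactly the identity required by the definition, with $t=p(d-1)+1=\O{dp}$.

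There is no substantive obstacle; the only point requiring care is that for odd $p$ the map $z\mapsto|z|^p=\sgn(z)\,z^p$ is not itself a polynomial. This is resolved precisely by the absolute value built into the definition of regression rank: we need $|\langle\ba_i,\bx\rangle|^p$ to equal the absolute value of a linear functional of $f(\bx)$, and that linear functional can be taken to be the honest polynomial $\langle\ba_i,\bx\rangle^p$, so the construction above works uniformly for all integer $p\ge 1$. (One could additionally observe the bound is essentially tight, since $1,t,\ldots,t^{p(d-1)}$ are linearly independent as functions, but only the upper bound $\O{dp}$ is needed for Theorems~\ref{thm:vander:regression} and~\ref{thm:vander:infty:regression}.)
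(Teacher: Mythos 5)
Your proof is correct and follows essentially the same route as the paper: both arguments observe that $\langle\ba_i,\bx\rangle^p = P_\bx(t_i)^p$ is the evaluation at $t_i$ of a degree-$p(d-1)$ polynomial whose coefficients are fixed functions of $\bx$, so the extended Vandermonde matrix with $p(d-1)+1=\O{dp}$ columns together with the coefficient map $f$ witnesses the rank bound. Your explicit remark that the absolute value in the definition absorbs the sign issue for odd $p$ is a nice touch that the paper leaves implicit.
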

\lemref{vander:truerank} implies that the $\ell_p$ loss function for a row of a Vandermonde matrix can potentially be expressed as a linear combination of $\O{dp}$ variables even though the entries of the measurement vector $\bb$ can be arbitrary. 
By comparison, the $\ell_p$ loss function for a row $\ba_i$ of a general matrix $\bA$ is $|\langle\ba_i,\bx\rangle-b_i|_p^p$ can only be expressed as a linear combination of $\O{pd^p}$ variables, corresponding to each of the $d^k$ $k$-wise products of coordinates of $\bx$, for each $k\in[p]$.
As a corollary of \lemref{vander:truerank}, Theorems~\ref{thm:vander:regression} and \ref{thm:vander:infty:regression} obtain a small coreset for $\ell_p$ regression (as well as $\ell_\infty$ regression) on a Vandermonde matrix, which can thus be used as a preconditioner for $\ell_p$ regression. 


We generalize \lemref{vander:truerank} to similar guarantees for $\ell_p$ regression on a matrix $\bA$ that is the sum of a low-rank matrix $\bK$ and an $s$-sparse matrix $\bS$ that has at most $s$ non-zero entries per row. 
Thus using the notion of the rank of the regression problem for such a matrix $\bA$, we obtain the following guarantee: 

\begin{theorem}
\label{thm:lowrank:sparse:regression}
Given $\eps\in(0,1)$ and $p\ge1$, a rank $k$ matrix $\bK\in\mathbb{R}^{n\times d}$ and a $s$-sparse matrix $\bS\in\mathbb{R}^{n\times d}$ so that $\bA:=\bK+\bS$, and $\bb\in\mathbb{R}^n$, there exists an algorithm that uses $\O{nk^{\omega-1}}+n\,\poly\left(2^p,d^s,k^p,s^p,\frac{1}{\eps},\log n\right)$  time to compute a sampling matrix $\bT\in\mathbb{R}^{m\times n}$ containing $m=\O{\frac{pd^s(k+s)^p(s+p)}{\eps^3}\log^2(pn)}$ rows so that with high probability, for all $\bx\in\mathbb{R}^d$,
\[(1-\eps)\|\bA\bx-\bb\|_p\le\|\bT\bA\bx-\bT\bb\|_p\le(1+\eps)\|\bA\bx-\bb\|_p,\]
and then to return a vector $\widehat{\bx}\in\mathbb{R}^d$ such that $\|\bA\widehat{\bx}-\bb\|_p\le(1+\eps)\min_{\bx\in\mathbb{R}^d}\|\bA\bx-\bb\|_p$.
Further, if the low-rank factorization of $\bK$ is given explicitly, then this runtime can be improved to $n\,\poly\left(2^p,d^s,k^p,s^p,\frac{1}{\eps},\log n\right)$. 
\end{theorem}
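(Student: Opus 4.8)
The plan is to mimic the Vandermonde argument: first bound the \emph{rank of the $\ell_p$ regression problem} for $\bA = \bK + \bS$, then feed this bound into a generic coreset-via-Lewis-weights machinery (the same engine behind Theorem~\ref{thm:vander:regression}) to obtain the sampling matrix $\bT$ and the final approximate solution. So the argument splits into a structural part and an algorithmic part.

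\textbf{Step 1: Bounding the rank of the regression problem.} Fix a row $\ba_i = \bk_i + \bs_i$, where $\bk_i$ lies in the $k$-dimensional rowspace of $\bK$ (write $\bk_i = \bU \by_i$ for a fixed basis $\bU \in \mathbb{R}^{d\times k}$) and $\bs_i$ has at most $s$ nonzero coordinates. Then for any $\bx$, $\langle \ba_i, \bx\rangle - b_i = \langle \by_i, \bU^\top \bx\rangle + \langle \bs_i, \bx\rangle - b_i$; the first term is a linear form in the $k$ ``low-rank coordinates'' $\bU^\top\bx$, the second is a linear form in the (at most $s$) coordinates of $\bx$ selected by $\mathrm{supp}(\bs_i)$, and the last is a constant. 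The key point is that even though different rows select different $s$-subsets, there are only $\binom{d}{s} \le d^s$ possible support patterns, so globally the affine form $\langle \ba_i,\bx\rangle - b_i$ depends on $\bx$ only through a vector in $\mathbb{R}^{k+s+1}$ (its exact identity depending on which of the $d^s$ patterns row $i$ uses). Raising to the $p$-th power and expanding $(\,\cdot\,)^p$ as a sum of monomials of degree $\le p$ in these $k+s+1$ quantities shows $|\langle \ba_i,\bx\rangle - b_i|^p$ (on the sign-class where it equals the $p$-th power) is a linear combination of at most $\binom{k+s+p}{p} = \O{(k+s)^p}$ monomials; multiplying by the $d^s$ choices of support pattern and the factor handling $|\cdot|^p$ versus $(\cdot)^p$ via a sign indicator (contributing the factor $2^p$ / the extra $(s+p)$-type terms in the stated bound) gives rank $t = \O{d^s (k+s)^p (s+p)}$. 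This is exactly the $m/(\text{polylog})$ appearing in the theorem, so this step is where all the parameter dependence comes from, and it is the main obstacle: one must be careful that the function $f:\mathbb{R}^d\to\mathbb{R}^t$ in the definition is \emph{fixed} (independent of $i$) — the resolution is that $f(\bx)$ collects, over \emph{all} $d^s$ support patterns simultaneously, every monomial that could ever arise, and the dependence on $i$ lives entirely in $\bm_i$, which is zero outside the block for row $i$'s pattern.

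\textbf{Step 2: From rank bound to coreset.} Given the rank-$t$ representation $|\langle\ba_i,\bx\rangle - b_i|^p = |\langle\bm_i,f(\bx)\rangle|$, the $\ell_p$ regression objective $\|\bA\bx - \bb\|_p^p = \sum_i |\langle \bm_i, f(\bx)\rangle|$ becomes an $\ell_1$ objective in the variable $z = f(\bx)$ ranging over the image of $f$. Sampling rows of $\bM$ by $\ell_1$ Lewis weights (equivalently $\ell_1$ leverage scores, computed over the $\O{t}$-column matrix $\bM$) yields, with high probability and with $\O{\frac{t}{\eps^2}\log t}$ samples, a matrix $\bT$ with $\sum_i (\bT)_{i'i}|\langle\bm_i,z\rangle| \in (1\pm\eps)\sum_i|\langle\bm_i,z\rangle|$ for all $z$ in the column space of $\bM$, hence in particular for all $z = f(\bx)$; translating back through $f$ gives the stated $(1\pm\eps)$ guarantee on $\|\bT\bA\bx - \bT\bb\|_p$ versus $\|\bA\bx-\bb\|_p$ (after the routine $(1\pm\eps)^{1/p}$-to-$(1\pm\eps)$ rescaling of $\eps$, and a union-bound-free net argument since $\ell_1$ subspace embedding holds for the whole subspace at once). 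Then $\widehat{\bx}$ is obtained by solving the small $\ell_p$ regression problem $\min_{\bx}\|\bT\bA\bx - \bT\bb\|_p$ with any polynomial-time solver; its objective is within $(1+\eps)$ of optimal on the subsampled problem, hence within $(1+\eps)^2 = 1+\O{\eps}$ on the original.

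\textbf{Step 3: Runtime.} Computing the low-rank factorization of $\bK$ (if not given) costs $\O{nk^{\omega-1}}$ via standard numerical linear algebra on an $n\times d$ rank-$k$ matrix; given the factorization, forming the relevant entries of $\bM$ row by row costs $\poly(2^p, d^s, k^p, s^p, \log n)$ per row — each row touches only the $\O{(k+s)^p d^s}$-sized block for its support pattern — for a total of $n\,\poly(2^p, d^s, k^p, s^p, \frac1\eps, \log n)$, and the Lewis-weight/leverage-score computation on the resulting $n \times \O{t}$ matrix, plus the final small solve, fit within the same budget. Summing gives the claimed $\O{nk^{\omega-1}} + n\,\poly(\cdots)$, dropping the first term when the factorization is supplied. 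I expect Steps 2 and 3 to be essentially bookkeeping once the generic ``rank of regression problem $\Rightarrow$ coreset'' lemma is in hand; the real content is the combinatorial argument of Step 1 that the sparse part contributes only a $d^s$ factor (number of patterns) rather than a $d^p$ factor.
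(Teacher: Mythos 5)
Your proposal takes a genuinely different route from the paper's. The paper never absorbs the response vector into the tensorized matrix in one shot: its proof (\algref{sparse:lp}, \lemref{lr:sparse:regression:correctness}) is two-phase. First it tensorizes the rows of $\bA$ \emph{alone}, raising to the integer power $2^r$ with $2^r\le p<2^{r+1}$ (giving $\O{d^s(k+s)^{2^r}}$ columns), samples by $\ell_{p/2^r}$ Lewis weights with $p/2^r\in[1,2)$, and invokes \lemref{const:to:const} to get a constant-factor solution $\tilde{\bx}$; then it applies $\roundtrunc$ to the residual $\bb'=\bb-\bA\tilde{\bx}$ so that it has only $\O{\tsfrac{\log n}{\eps}}$ distinct values, partitions the rows into groups sharing a common value $t_k$, and Lewis-weight samples each group's extension $[\bG_k;t_k]$ — the grouping is what lets each group's loss live on a fixed extended matrix, and it is the source of the extra $\tsfrac{\log n}{\eps}$ factor in $m$. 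Your one-shot alternative — pushing all powers of $b_i$ into the row-dependent coefficients $\bm_i$ while keeping $f$ fixed, then sampling once — does appear workable for this particular theorem, since the Lewis-weight subspace embedding of \thmref{lewis:weights} holds over all of $\mathbb{R}^t$ and hence on the image of $f$, and here the extended matrix is formed explicitly row by row anyway (no multiplication structure needs preserving, unlike the Vandermonde case where the paper's template keeps the extended matrices Vandermonde-times-diagonal to get $\O{n\log^3 n}$ time). If carried out correctly it would even avoid the $\tsfrac{\log n}{\eps}$ group factor; but it is a different algorithm than the one the paper analyzes.

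That said, Step 1 as written has two genuine gaps. First, you expand $(\langle\ba_i,\bx\rangle-b_i)^p$ into monomials of degree at most $p$, which is only meaningful for integer $p$, while the theorem is stated for arbitrary real $p\ge1$; the paper's fix is to raise only to the even integer power $2^r$ and sample by $\ell_{p/2^r}$ Lewis weights (still computable by the iterative scheme since $p/2^r<2$), and your $\ell_1$ reduction should be replaced by exactly this, or else the claim restricted to integer $p$. Second, your handling of $|\cdot|^p$ versus $(\cdot)^p$ ``via a sign indicator'' contributing a $2^p$-type factor is not a valid argument: the sign of $\langle\ba_i,\bx\rangle-b_i$ depends on $\bx$, so it cannot be encoded in a fixed $f$ or in $\bm_i$. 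No such device is needed — with the even exponent $2^r$ there is no sign issue at all, and for odd integer $p$ one simply uses $|u|^p=|u^p|$ together with the outer absolute value already present in the definition of the rank of the regression problem. Two minor inaccuracies: each row's tensorized coefficients occupy only the $\O{(k+s)^p}$-sized block of its own support pattern (not $\O{(k+s)^p d^s}$), and $\binom{k+s+p}{p}$ is not $\O{(k+s)^p}$ in general, though it is absorbed by the theorem's $(k+s)^p(s+p)$-style accounting. Your Steps 2 and 3 otherwise track the paper's use of \thmref{lewis:weights}, the $n\,\poly(\cdot)$ cost of forming the extension and running the Lewis-weight iteration, and the $\O{nk^{\omega-1}}$ factorization cost, as in \lemref{lr:sparse:regression:runtime}.
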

Similarly, we obtain efficient guarantees for $\ell_p$ regression on a matrix $\bA$ that is the sum of a Vandermonde matrix $\bV$ and a sparse matrix $\bS$ that has at most $s$ non-zero entries per row.
\begin{theorem}
\label{thm:vander:sparse:regression}
Given $\eps\in(0,1)$ and $p\ge1$, a Vandermonde matrix $\bV\in\mathbb{R}^{n\times d}$ and an $s$-sparse matrix $\bS\in\mathbb{R}^{n\times d}$ such that $\bA:=\bV+\bS$, and $\bb\in\mathbb{R}^n$, there exists an algorithm that uses $\O{n\log^3 n+\poly\left(p^2d,d^s,s^p,\log n,\frac{1}{\eps}\right)}$ time to compute a sampling matrix $\bT\in\mathbb{R}^{m\times n}$ with $m=\O{\frac{p^2d^{s+1}s^p(s+p)}{\eps^3}\log^2(pn)}$ rows, so that with high probability for all $\bx\in\mathbb{R}^d$,
\[(1-\eps)\|\bA\bx-\bb\|_p\le\|\bT\bA\bx-\bT\bb\|_p\le(1+\eps)\|\bA\bx-\bb\|_p,\]
and then to return a vector $\widehat{\bx}\in\mathbb{R}^d$ such that $\|\bA\widehat{\bx}-\bb\|_p\le(1+\eps)\min_{\bx\in\mathbb{R}^d}\|\bA\bx-\bb\|_p$.
\end{theorem}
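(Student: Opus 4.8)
The plan is to follow the route already used for \thmref{vander:regression} and \thmref{lowrank:sparse:regression}: bound the \emph{rank of the $\ell_p$ regression problem} for $\bA=\bV+\bS$, and then feed that bound into the same sampling machinery. First I would remove the response vector by the standard augmentation. The matrix $[\bA\mid-\bb]=[\bV\mid\mathbf 0]+[\bS\mid-\bb]$ is the sum of a Vandermonde matrix and an $(s+1)$-sparse matrix, and a $(1\pm\eps)$ $\ell_p$ subspace embedding for $[\bA\mid-\bb]$, restricted to vectors whose last coordinate equals $1$, is exactly the desired regression guarantee. So it suffices to handle the homogeneous problem $\|\bA\bx\|_p$ with $s$ replaced by $s+1$ and $d$ by $d+1$.

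Next I would bound the regression rank. Take $p$ to be an integer (the regime needed for the $\ell_\infty$ application; general $p$ is handled exactly as in \thmref{vander:regression}), fix a row $i$ with Vandermonde node $t_i$ and sparse support $T_i\subseteq[d]$, $|T_i|\le s$, and expand
\[
\big(\langle\ba_i,\bx\rangle\big)^p=\sum_{\ell=0}^{p}\binom{p}{\ell}\big(\langle\bv_i,\bx\rangle\big)^{\ell}\big(\langle\bs_i,\bx\rangle\big)^{p-\ell}.
\]
For the first factor, the proof of \lemref{vander:truerank} writes $(\langle\bv_i,\bx\rangle)^{\ell}=\langle\bm^{V}_{i,\ell},f^{V}_{\ell}(\bx)\rangle$ with $\bm^{V}_{i,\ell}$ ranging over the $\O{\ell d}$ powers $1,t_i,\dots,t_i^{\ell(d-1)}$ and $f^{V}_{\ell}(\bx)$ collecting the matching degree-$\ell$ symmetric functions of $\bx$; the collapse from $d^{\ell}$ to $\O{\ell d}$ coordinates is precisely because entries of the $\ell$-fold tensor of a Vandermonde row depend only on the sum of their exponents. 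For the second factor, $\langle\bs_i,\bx\rangle$ is a linear form in the $\le s$ variables indexed by $T_i$, so $(\langle\bs_i,\bx\rangle)^{p-\ell}$ is a combination of at most $s^{p-\ell}$ monomials; enumerating the $\le\binom{d}{s}\le d^{s}$ possible supports gives a single map $f^{S}_{p-\ell}$ of dimension $\O{d^{s}s^{p-\ell}}$ with $(\langle\bs_i,\bx\rangle)^{p-\ell}=\langle\bm^{S}_{i,\ell},f^{S}_{p-\ell}(\bx)\rangle$, where the coordinates of $\bm^{S}_{i,\ell}$ attached to any support $T\ne T_i$ are zero. Tensoring the two representations within each term and concatenating over $\ell$ yields $(\langle\ba_i,\bx\rangle)^p=\langle\bm_i,f(\bx)\rangle$ with $f$ of dimension $t=\sum_{\ell=0}^{p}\O{\ell d}\cdot\O{d^{s}s^{p-\ell}}=\O{p^2 d^{s+1}s^{p}(s+p)}$; since $|y|^p=|y^p|$ for integer $p$, this certifies that the rank of the $\ell_p$ regression problem for $\bA$ is $\O{p^2 d^{s+1}s^{p}(s+p)}$.

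The remainder is the template of \thmref{vander:regression} and \thmref{lowrank:sparse:regression}. Let $\bM\in\mathbb R^{n\times t}$ be the lifted matrix with rows $\bm_i$ and let $\bW=\mathrm{diag}(w_1,\dots,w_n)$ with $w_i\ge0$; then $\sum_i w_i|\langle\ba_i,\bx\rangle|^p=\sum_i w_i|\langle\bm_i,f(\bx)\rangle|=\|\bW\bM f(\bx)\|_1$, so a $(1\pm\eps)$ $\ell_1$ subspace embedding of the column span of $\bM$, obtained by sampling rows according to (overestimates of) their $\ell_1$ Lewis weights, translates directly into the required $\ell_p$ embedding for $[\bA\mid-\bb]$, with $m=\O{t\,\eps^{-3}\log^2(pn)}=\O{\eps^{-3}p^2 d^{s+1}s^{p}(s+p)\log^2(pn)}$ sampled rows; this produces $\bT$. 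For the running time, the columns of $\bM$ inherited from the Vandermonde part are again of Vandermonde type in the $t_i$, so the corresponding weight computation runs via fast multipoint polynomial evaluation as in \cite{ShiW19} in $\O{n\log^3 n}$ time, while the remaining $\poly(p^2 d,d^{s},s^{p})$ columns and the solution of the $m\times d$ subsampled problem $\min_{\bx}\|\bT\bA\bx-\bT\bb\|_p$ take only $\poly(p^2 d,d^{s},s^{p},\log n,1/\eps)$ time; returning that minimizer $\widehat\bx$ gives $\|\bA\widehat\bx-\bb\|_p\le\tfrac{1+\eps}{1-\eps}\min_{\bx}\|\bA\bx-\bb\|_p$ by the usual sketch-and-solve argument, after rescaling $\eps$.

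The main obstacle is the rank bound, and two points inside it in particular: that the dependence on the \emph{location} of the sparse support costs only a multiplicative $d^{s}$ factor (achieved by letting $f$ range over all $\binom{d}{s}$ supports at once and zeroing the mismatched blocks of each $\bm_i$), and that after multiplying by the sparse monomials the Vandermonde factor still contributes only $\O{\ell d}$ — rather than $d^{\ell}$ — coordinates per term, which is exactly where \lemref{vander:truerank} is used. A secondary obstacle is keeping the weight computation at $\O{n\log^3 n+\poly}$: since $t\ge d^{s}$ we cannot afford to touch $\bM$ densely, so the algorithm must exploit that $\bM$ decomposes into blocks indexed by sparse support, each block a reweighted Vandermonde-type matrix amenable to the $\O{n\log^3 n}$ polynomial-evaluation machinery of \cite{ShiW19}.
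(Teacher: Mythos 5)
Your route is genuinely different from the paper's. The paper proves this theorem by rerunning its two-phase pipeline (the template of \lemref{vander:correctness} and \algref{sparse:lp}): a first round of $\ell_{p/2^r}$ Lewis weight sampling on the lifted \emph{homogeneous} matrix to get a constant-factor solution $\tilde{\bx}$ via \lemref{const:to:const}, then the residual $\bb'=\bb-\bA\tilde{\bx}$, the $\roundtrunc$ step, partitioning into $\O{\frac{\log n}{\eps}}$ groups with a common response value $t_k$, and per-group Lewis weight sampling on lifted matrices of width $d'=(d^s)(s^p)(pd)$. You instead fold $\bb$ directly into the sparse component via the augmentation $[\bA\mid-\bb]=[\bV\mid\mathbf{0}]+[\bS\mid-\bb]$ and take a single subspace embedding of the lifted augmented matrix. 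That simplification is legitimately available here (and not for the pure Vandermonde theorem) precisely because the sparse block already tolerates arbitrary per-row values; it removes the pre-solve, the rounding, and the grouping entirely, and your rank bound for $\bV+\bS$ itself (Vandermonde factor costing $\O{\ell d}$ per term, sparse factor costing $\O{d^s s^{p-\ell}}$) is the same structural calculation the paper invokes.

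There is, however, a genuine quantitative gap in how you execute the augmentation. You reduce to the homogeneous problem ``with $s$ replaced by $s+1$ and $d$ by $d+1$,'' but then compute the rank with the original $s$ and $d$. Taken literally, your reduction replaces $s^{p-\ell}$ by $(s+1)^{p-\ell}$, and for small $s$ this is fatal: at $s=1$ the theorem's bound has $s^p=1$, so $m$ is polynomial in $p$, whereas your reduction yields $2^{p}$ --- exponential in $p$, defeating the point of the result. The fix is not to merge $-b_i$ into the sparse support: in the expansion of $(\langle\bv_i,\bx\rangle+\langle\bs_i,\bx\rangle-b_ix_{d+1})^{2^r}$, treat $-b_ix_{d+1}$ as a third block with the \emph{fixed} support $\{d+1\}$, so it contributes only the $\O{p}$ choices of its exponent (with the scalar $(-b_i)^k$ absorbed into $\bm_i$), keeping the count at $\O{p^2 d^{s+1}s^p(s+p)}$ as you ultimately assert. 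Secondarily, your runtime claim --- that the Lewis weights of the lifted $n\times t$ matrix can be obtained in $\O{n\log^3 n}$ plus an additive $\poly$ term --- is asserted rather than argued: each lifted row has $\Omega(p\,d\,s^{p})$ nonzeros in its own support block, so even after batching the Vandermonde-times-diagonal blocks with fast multipoint evaluation, a direct implementation of the matrix-vector products costs on the order of $n\cdot\poly(p,s^p)\polylog n$ rather than $\O{n\log^3 n}$; the paper's one-sentence proof does not spell this out either, but your write-up should not present the claimed runtime as immediate.
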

Surprisingly, our methods even yield practical algorithms for general matrices \emph{with absolutely no structure}. 
Although the rank of the $\ell_p$ regression problem for general matrices is $\O{d^p}$, we still obtain the optimal sample complexity (see~\cite{LiWW21} for a lower bound) of roughly $\O{d^{p/2}}$. 
Furthermore, an advantage of our approach is that we only need to perform $\ell_q$ Lewis weight sampling for $q\in[1,4]$ and there are known efficient iterative methods for computing the $\ell_q$ Lewis weights for $q\in[1,4]$, e.g., see \secref{prelims} of this paper or Section 3 in~\cite{CohenP15}. 
By contrast, previous methods relied on computing general $\ell_p$ weights for $p>4$, which, prior to the recent work of \cite{FazelLPS21}, required solving a large convex program, e.g., through semidefinite programming, see Section 4 in~\cite{CohenP15}. 

Finally, we experimentally validate our theory on synthetic data.
In particular, we consider matrices and response vectors that are motivated by existing lower bounds for subsampling and sketching methods.
For Vandermonde regression, our experiments demonstrate that the number of rows needed is polynomial in $p$, reinforcing how structured matrices outperform the worst-case bound.
For unstructured matrix \(\ell_p\) regression, we demonstrate that our \(\ell_q\) Lewis Weight subsampling scheme is effective and accurate.

\subsection{Overview of our Techniques}
Our algorithmic contributions rely on two key observations, which we describe below. 
For the sake of presentation, we assume $p$ is an integer in this overview, though we handle arbitrary real values of $p$ in the subsequent algorithms and analyses. 

\paragraph{Reduced rank of the $\ell_p$ regression problem on structured inputs.}
The first main ingredient is the simple yet powerful observation that the rank of the $\ell_p$ regression problem on structured inputs such as Vandermonde matrices $\bA\in\mathbb{R}^{n\times d}$ does not need to be the $\O{d^p}$ rank of the $\ell_p$ regression problem on general matrices. 
For a general matrix, $\langle\ba_i,\bx\rangle^p$ for an integer $p$ can be rewritten as a linear combination $\sum\alpha_iy_i$ of $\O{d^p}$ terms, where each coefficient $\alpha_i$ is a $p$-wise product of entries of the row vector $\ba_i\in\mathbb{R}^d$ and similarly each $y_i$ is a $p$-wise product of entries of the vector $\bx\in\mathbb{R}^d$. 
However, when $\bA$ is a Vandermonde matrix, then the $j$-th entry of $\ba_i$ is simply $A_{i,2}^{j-1}$. 
Thus each coefficient in $\langle\ba_i,\bx\rangle^p$ can be written as a linear combination of $1,A_{i,2},(A_{i,2})^2,\ldots,(A_{i,2})^{p(d-1)}$ and so we can express $\langle\ba_i,\bx\rangle^p$ as a linear combination of $\O{dp}$ variables rather than $\O{d^p}$ variables. 

We can similarly show that the rank of the $\ell_p$ regression problem on rank $k$-matrix $\bA$ is $\O{k^p}$ by writing each row $\ba_i$ as a linear combination of basis vectors $\bv_1,\ldots,\bv_k$. 
Then, using the Hadamard Product-Kronecker Product mixed-product property, we can rewrite $\langle\ba_i,\bx\rangle^p$ as a linear combination of $p$-wise products of the variables $\langle\bv_1,\bx\rangle$, $\ldots$, $\langle\bv_k,\bx\rangle$, i.e., a linear combination of $\O{k^p}$ variables rather than $\O{d^p}$ variables. 
It follows that the rank of the $\ell_p$ regression problem on a matrix $\bA$ whose rows have at most $s$ non-zero entries is $\O{d^ss^p}$ by noting that (1) there are $\binom{d}{s}=\O{d^s}$ sparsity patterns and that (2) for a fixed sparsity pattern, the rank of the induced matrix is at most $s$, which induces a linear combination of $\O{s^p}$ variables for the $\ell_p$ regression problem. 
These decomposition techniques can also be generalized to show that the rank of the $\ell_p$ regression problem is low on matrices $\bA$ such that $\bA=\bK+\bV$, $\bA=\bK+\bS$ or $\bA=\bV+\bS$, where $\bK$ is a low-rank matrix, $\bV$ is a Vandermonde matrix, and $\bS$ is a sparse matrix. 

However, we remark that although the observation that the rank of the $\ell_p$ regression problem on structured inputs can be low, this itself does not yield an algorithm for $\ell_p$ regression. 
This is because the loss function is $|\langle\ba_i,\bx\rangle-b_i|^p$ rather than $\langle\ba_i,\bx\rangle^p$ and there can be $n$ possible different values of $b_i$ across all $i\in[n]$. 

\paragraph{Rounding and truncating the measurement vector: a novel algorithmic technique.} 
Thus, the second main ingredient is manipulating the measurement vector $\bb\in\mathbb{R}^n$ so that the loss function can utilize the low-rank property of the $\ell_p$ regression problem on structured inputs. 
A natural approach would be to round the entries of $\bb$, say to the nearest power of $(1+\eps)$. 
Unfortunately, such a rounding approach would roughly preserve $\|\bb\|_p$ up to a multiplicative $(1+\eps)$ factor, but it would not preserve $\|\bA\bx-\bb\|_p$. 
For example, suppose $\langle\ba_i,\bx\rangle=b_i=N$ for some arbitrarily large value $N$.  
Then $|\langle\ba_i,\bx\rangle-b_i|^p=0$, but if $b_i$ were rounded to $(1+\eps)N$, we would have $|\langle\ba_i,\bx\rangle-b_i|^p=\eps^pN^p$, which can be arbitrarily large. 

The lesson from this counterexample is that when $b_i$ is significantly larger than $\langle\ba_i,\bx\rangle-b_i$, any rounding technique can be arbitrarily bad because $\|\bb\|_p$ can be significantly larger than $\OPT:=\min_{\bx\in\mathbb{R}^d}\|\bA\bx-\bb\|_p$.  
On the other hand, if $\|\bb\|_p$ is a constant factor approximation to $\OPT$, then the previous counterexample cannot happen because either (1) $b_i$ is large relative to $\|\bb\|_p$ and $\langle\ba_i,\bx\rangle$ cannot be too close to $b_i$ so that rounding $b_i$ will not significantly affect the difference $\langle\ba_i,\bx\rangle-b_i$ or (2) $b_i$ is small relative to $\|\bb\|_p$ and so any rounding of $b_i$ will not affect the contribution of the $i$-th row of $\bA$ in the overall loss. 
Thus our task is reduced to manipulating the input so that $\|\bb\|_p=\O{\OPT}$. 

To that end, we note that if $\|\bA\tilde{\bx}-\bb\|_p=\O{\OPT}$ for a vector $\tilde{\bx}\in\mathbb{R}^d$, then by a triangle inequality argument, we have that the residual vector $\bb'=\bb-\bA\tilde{\bx}\in\mathbb{R}^d$ satisfies $\|\bb'\|_p=\O{\OPT}$. 
Namely, we show that to find such a vector $\tilde{\bx}$, it suffices by triangle inequality to find a subspace embedding for $\bA$, i.e., a matrix $\bM\in\mathbb{R}^{m\times d}$ with $m\ll n$ such that 
\[(1-C)\|\bA\bx\|_p\le\|\bM\bx\|_p\le(1+C)\|\bA\bx\|_p\]
for some constant $C\in(0,1)$. 
Typically, such a matrix $\bM$ can be found by sampling the rows of $\bA$ according to their $\ell_p$ Lewis weights to generate a matrix with $m=\O{d^{\max(1,p/2)}}$ rows. 
However, using our observation that the rank of the $\ell_p$ regression problem on structured inputs can be low, we can sample $\O{dp}$ rows of $\bA$ with probabilities proportional to their $\ell_q$ Lewis weights, where $q=\frac{p}{2^r}$ for an integer $r$ such that $2^r\le p<2^{r+1}$. 
Crucially, we can find such a vector $\tilde{\bx}$ \emph{without reading the coordinates} of $\bb$ since we only require to read the rows of $\bA$ to perform Lewis weight sampling in this phase. 
Thus we can efficiently find a residual vector $\bb'$ such that $\|\bb'\|_p=\O{\OPT}$. 

\paragraph{Partitioning the matrix into groups.}
We then round the entries of $\bb'$ to obtain a vector $\bb''$ with at most $\ell:=\O{\frac{\log n}{\eps}}$ unique values, truncating all entries that are less in magnitude than $\frac{1}{\poly(n)}$ to zero instead. 
We now solve the $\ell_p$ regression problem $\min_{\bx\in\mathbb{R}^d}\|\bA\bx-\bb''\|_p$ by partitioning the rows of $\bA$ into $\ell$ groups $G_1,\ldots,G_\ell$ based on their corresponding values of $\bb''$. 
The main point is that all rows in each group $G_k$ have the same entry $t_k$ in $\bb''$. 
Thus we can again observe that $|\langle\ba_i,\bx\rangle-t_k|^p$ can be written as a linear combination of $\O{p^2d}$ variables and therefore use $\ell_q$ Lewis weight sampling (for the same $q$) to reduce each group $G_k$ down to $\O{\frac{p^2d}{\eps^2}\log d}$ rows. 
Since there are $\ell=\O{\frac{\log n}{\eps}}$ groups, then there are roughly $\O{p^2d}$ total rows that have been sampled across all the groups. 
It follows from the decomposition of the $\ell_p$ loss function across each group that the matrix $\bT$ formed by these rows is a coreset for the $\ell_p$ regression problem. 
Therefore, by solving the $\ell_p$ regression problem on $\bT$, which has significantly smaller dimension than the original input matrix $\bA$, we obtain a vector $\widehat{\bx}$ with the desired property that
\[\|\bA\widehat{\bx}-\bb\|_p\le(1+\O{\eps})\min_{\bx\in\mathbb{R}^d}\|\bA\bx-\bb\|_p.\]

\paragraph{Practical $\ell_p$ regression for arbitrary matrices.}
We now describe a practical procedure for $\ell_p$ regression on general matrices that avoids the necessity for convex programming to approximate the $\ell_p$ Lewis weights for $p>4$. 
We first pick an integer $r$ such that $\frac{p}{2^r}\in[2,4)$. 
By the same structural argument as in \lemref{vander:truerank}, we can tensor product each row $\ba_i$ of $\bA$ with itself $2^r$ times, thus obtaining an extended matrix of size $n\times d'$, where $d'=d^{2^r}$, independent of the entries in $\bb$.  
We then $\ell_{p/2^r}$ Lewis weight sample on the extended matrix, using the iterative method in \figref{lewis:iter} since $\frac{p}{2^r}<4$ rather than solving a convex program for $\ell_p$ Lewis weight sampling for $p>4$. 
Since $\frac{p}{2^r}\in[2,4)$, it follows from \thmref{lewis:weights} that Lewis weight sampling requires roughly $(d')^{p/2}$ rows. 
Thus we obtain a matrix with $\O{(d^{2^r})^{p/2^{r+1}}}=\O{d^{p/2}}$ rows, from which we can compute a residual vector $\bb'$ such that $\|\bb'\|_p=\O{\OPT}$, where $\OPT:=\min_{\bx\in\mathbb{R}^d}\|\bA\bx-\bb\|_p$. 

We then round and truncate the entries of $\bb'$ using the subroutine $\roundtrunc$ to obtain a vector $\bb''$, which allows us to partition the rows of $\bA$ and the entries of $\bb''$ into $\O{\frac{\log n}{\eps}}$ groups. 
Due to the constant values of $\bb''$ in each group, we can again $\ell_{p/2^r}$ Lewis weight sample on an extended matrix using an iterative method and finally solve the $\ell_p$ regression problem on the subsequent rows that have been sampled.  
We give our algorithm in full in \algref{general:lp}. 

\subsection{Related Work}
As previously mentioned, subspace embeddings are common tools used to approximately solve $\ell_p$ regression. 
Given an input matrix $\bA\in\mathbb{R}^{n\times d}$, a subspace embedding is a matrix $\bM\in\mathbb{R}^{m\times d}$ with $m\ll n$ such that
\[(1-\eps)\|\bA\bx\|_p\le\|\bM\bx\|_p\le(1+\eps)\|\bA\bx\|_p,\]
for all $\bx\in\mathbb{R}^d$. 
Thus given an instance of $\ell_p$ regression, where the goal is to minimize $\|\bA\bx-\bb\|_p$, we can set $\bB=[\bA;\bb]$, compute a subspace embedding for $\bB$, and then solve a constrained $\ell_p$ regression problem on the smaller matrix $\bB$.

A subspace embedding of a matrix $\bA\in\mathbb{R}^{n\times d}$ can be formed by sampling rows of $\bA$ with probabilities proportional to their $\ell_p$ leverage scores, e.g.,~\cite{CormodeDW18,DasguptaDHKM08} their $\ell_p$ sensitivities, e.g.,~\cite{ClarksonWW19,BravermanDMMUWZ20,BravermanHMSSZ21,MuscoMWY21}; or their $\ell_p$ Lewis weights, e.g.,~\cite{CohenP15,DurfeeLS18,ChhayaC0S20,ChenD21,ParulekarPP21,MeyerMMWZ21}. 
In any of these cases, the resulting matrix $\bM$ will contain a subset of rows of $\bA$ that are rescaled by a function of their sampling probability, as to give an unbiased estimate of the actual $\ell_p$ mass.

In addition to sampling methods, sketching is a common approach for subspace embeddings. 
In these cases, the subspace embedding $\bM$ is formed by setting $\bM=\bR\bA$ for some (often random) matrix $\bR\in\mathbb{R}^{m\times n}$. 
The advantage of sketching over sampling is that sometimes $\bR$ can be computed oblivious to the structure of $\bA$, whereas the sampling probabilities for each of the above distributions ($\ell_p$ leverage scores, $\ell_p$ sensitivities, and $\ell_p$ Lewis weights) are data dependent and thus require a pass over the matrix $\bA$. 
The sketching matrix can be generated from a family of random matrices whose entries are Cauchy random variables for $p=1$, e.g.,~\cite{SohlerW11,MengM13,ClarksonDMMMW16}, or sub-Gaussian random variables for $p=2$, e.g.,~\cite{Sarlos06,NelsonN13,ClarksonW13}. 
More generally, exponential random variables can be used for $p\ge 2$, though the number of rows $m$ in the resulting sketch matrix now has a polynomial dependency in $n$~\cite{WoodruffZ13}. 
A line of recent works has studied the tradeoffs between oblivious linear sketches, sampling-based algorithms, and other sketches for $\ell_p$ subspace embeddings~\cite{WangW19,LiWW21}. 
In this paper, we focus on sampling-based algorithms for $\ell_p$ regression due to preservation of structure when the input matrix $\bA$ itself has structure. 

\subsection{Preliminaries on Lewis Weight Sampling}
\label{sec:prelims}
There are a number of known sampling distributions for dimensionality reduction for the $\ell_p$ regression problem, such as the $\ell_p$ leverage scores, e.g.,~\cite{CormodeDW18,DasguptaDHKM08} and the $\ell_p$ sensitivities, e.g.,~\cite{ClarksonWW19,BravermanDMMUWZ20,BravermanHMSSZ21,MuscoMWY21}; in this paper we focus on the $\ell_p$ Lewis weights, e.g.,~\cite{CohenP15,DurfeeLS18,ChenD21,ParulekarPP21}. 

\begin{definition}[$\ell_p$ Lewis Weights, \cite{CohenP15}]
Given a matrix $\bA\in\mathbb{R}^{n\times d}$ and $p\ge1$, the $\ell_p$ Lewis weights $w_1(\bA),\ldots,w_n(\bA)$ are the \textbf{unique} quantities that satisfy 
\[(w_i(\bA))^{2/p}=\ba_i^\top(\bA^\top\bW^{1-2/p}\bA)^{-1}\ba_i\]
for all $i\in[n]$, where $\bW\in\mathbb{R}^{n\times n}$ is the diagonal matrix with $W_{i,i}=w_i(\bA)$ for all $i\in[n]$. 
\end{definition}

\begin{definition}[Lewis Weight Sampling]
For an input matrix $\bA\in\mathbb{R}^{n\times d}$ and $m$ samples, let the sampling matrix $\bS\in\mathbb{R}^{m\times n}$ be generated by independently setting each row of $\bS$ to be the $i$-th standard basis vector multiplied by $\left(\frac{d}{m\cdot w_i^p(\bA)}\right)^{1/p}$ with probability $\frac{w_i^p(\bA)}{d}$.
\end{definition}

\begin{theorem}[$\ell_p$ Subspace Embedding from Lewis Weight Sampling, \cite{CohenP15}]
\label{thm:lewis:weights}
For any $\bA\in\mathbb{R}^{n\times d}$, let the matrix $\bS\in\mathbb{R}^{m\times n}$ be generated from Lewis weight sampling with $m=\O{\frac{d^{\max(1,p/2)}\log(d/\delta)\log(1/\eps)}{\eps^\gamma}}$, where $\gamma=2$ for $p\in[1,2]$ and $\gamma=5$ for $p>2$. 
Then with probability at least $1-\delta$, we have that simultaneously for all $\bx\in\mathbb{R}^d$,
\[(1-\eps)\|\bA\bx\|_p\le\|\bS\bA\bx\|_p\le(1+\eps)\|\bA\bx\|_p.\]
\end{theorem}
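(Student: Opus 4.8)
The plan is to follow the reweighting argument of \cite{CohenP15}, reducing the $\ell_p$ subspace embedding guarantee to an $\ell_2$ leverage-score sampling argument on a rescaled matrix, and then controlling the resulting importance-sampling estimator by a net-plus-concentration argument. Fix $\bx\in\mathbb{R}^d$, write $y_i=\ba_i^\top\bx$, and let $\bW=\mathrm{diag}(w_1(\bA),\ldots,w_n(\bA))$. First I would reinterpret the Lewis weights as leverage scores: setting $\tilde\bA=\bW^{1/2-1/p}\bA$, the defining fixed-point equation $w_i^{2/p}=\ba_i^\top(\bA^\top\bW^{1-2/p}\bA)^{-1}\ba_i$ is exactly the statement that the $i$-th $\ell_2$ leverage score of $\tilde\bA$ equals $w_i$, so that $\sum_i w_i=\mathrm{rank}(\bA)\le d$ and, pointwise, $|y_i|\le w_i^{1/p}\|\tilde\bA\bx\|_2$ for every $i$ and every $\bx$. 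A short Hölder computation then converts $\|\tilde\bA\bx\|_2$ back to $\|\bA\bx\|_p$, giving $\|\tilde\bA\bx\|_2\le d^{\max(0,1/2-1/p)}\|\bA\bx\|_p$, and combining with the leverage inequality yields the crucial per-row contribution bound
\[|y_i|^p\le w_i\,d^{\max(0,\,p/2-1)}\,\|\bA\bx\|_p^p.\]
This is the engine of the whole argument: Lewis weights \emph{equalize} the worst-case relative contribution of each row, and since $\sum_i w_i\le d$ the total relative contribution sums to at most $d^{\max(1,p/2)}$.

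Next I would establish concentration at a fixed $\bx$, normalized so that $\|\bA\bx\|_p=1$. Writing $\|\bS\bA\bx\|_p^p=\sum_{j=1}^m Z_j$, where each $Z_j$ is the rescaled contribution of one independently sampled row, the debiasing rescale built into the sampling rule gives $\Ex{\|\bS\bA\bx\|_p^p}=\|\bA\bx\|_p^p=1$, while the per-row bound caps each summand by $B=d^{\max(1,p/2)}/m$. A Bernstein bound then yields $\PPr{\big|\|\bS\bA\bx\|_p^p-1\big|>\eps}\le 2\exp(-c\,\eps^2 m/d^{\max(1,p/2)})$, and taking $p$-th roots (absorbing the harmless $(1\pm\eps)^{1/p}$ factor into the constants) converts this into the two-sided norm guarantee at the point $\bx$. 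To make the statement uniform I would build an $\eps$-net of the sphere $\{\bx:\|\bA\bx\|_p=1\}$ inside the $d$-dimensional subspace, of size $\exp(\O{d\log(1/\eps)})$, union-bound the fixed-point concentration over the net, and extend to all $\bx$ by a standard Lipschitz/continuity argument (first certifying a constant-factor embedding to bound the operator norm of $\bS\bA$, then bootstrapping).

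The hard part will be the quantitative comparison in this last step when $p>2$. The naive net-plus-Bernstein bound as sketched above requires $\eps^2 m/d^{p/2}\gtrsim d\log(1/\eps)$, i.e. $m\gtrsim d^{p/2+1}$, an extra factor of $d$ beyond the claimed $d^{p/2}$, and it does not recover the stated $\eps^{-5}$ dependence either. Removing this loss is the genuinely technical core and requires the finer two-level analysis of \cite{CohenP15}: for each $\bx$ one splits the rows into \emph{heavy} rows, whose relative contribution $|y_i|^p/\|\bA\bx\|_p^p$ exceeds a threshold, and \emph{light} rows. The per-row bound limits how many rows can be heavy and lets the heavy part be handled essentially deterministically, while the light part enjoys a much smaller effective bound $B$ and hence concentrates without the lossy union bound; a Dudley-type chaining argument tailored to the $\ell_p$ geometry then stitches the two pieces together uniformly over $\bx$. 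I would expect to spend the bulk of the effort on this heavy/light decomposition and the chaining bound, whereas the reweighting and per-row estimate of the first paragraph are short and essentially algebraic.
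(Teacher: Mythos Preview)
The paper does not prove this theorem at all: it is stated in the preliminaries as a known result, attributed to \cite{CohenP15}, and used as a black box throughout. There is nothing to compare your proposal against.

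That said, your sketch is a faithful outline of the Cohen--Peng argument (itself descending from Bourgain--Lindenstrauss--Milman and Ledoux--Talagrand): the reinterpretation of Lewis weights as $\ell_2$ leverage scores of $\bW^{1/2-1/p}\bA$, the per-row bound $|y_i|^p\le w_i\,d^{\max(0,p/2-1)}\|\bA\bx\|_p^p$, and the recognition that a naive Bernstein-plus-net argument loses a factor of $d$ for $p>2$ are all correct. Your diagnosis of where the real work lies---the chaining argument that replaces the crude union bound---is also accurate, though the actual execution in \cite{CohenP15} goes through a symmetrization and Dudley entropy integral rather than an explicit heavy/light split; the $\eps^{-5}$ for $p>2$ arises from how the entropy estimate interacts with the tail bound in that integral. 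If your goal were to reproduce the cited result, the outline is sound but the final paragraph would expand into the bulk of the proof.
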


Since the Lewis weights are implicitly defined, it may not be clear how to compute them exactly. 
In fact, it suffices to compute constant factor approximations to the Lewis weights. 
\cite{CohenP15} show that for $p\le 4$, there exists a simple iterative approach to compute a constant factor approximation to the $\ell_p$ Lewis weights in input sparsity time, which we present in \figref{lewis:iter}. 

\begin{figure}[H]
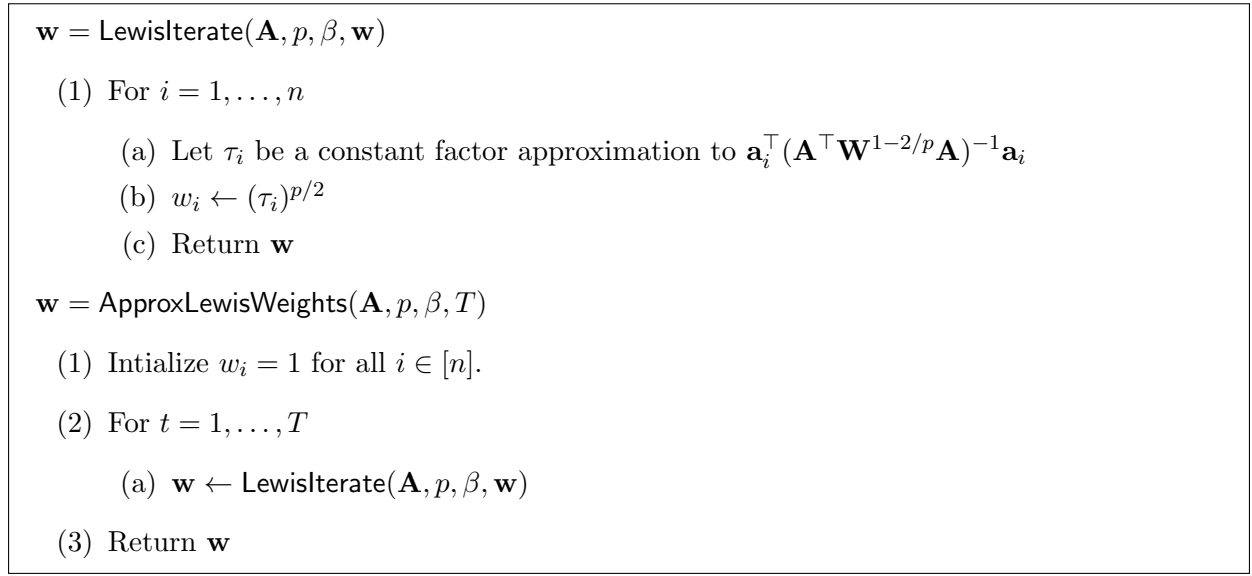

\begin{mdframed}
$\bw=\lewisiterate(\bA,p,\beta,\bw)$
\begin{enumerate}
\item For $i=1,\ldots,n$
\begin{enumerate}
\item Let $\tau_i$ be a constant factor approximation to $\ba_i^\top(\bA^\top\bW^{1-2/p}\bA)^{-1}\ba_i$
\item $w_i\gets(\tau_i)^{p/2}$
\item Return $\bw$
\end{enumerate}
\end{enumerate}
$\bw=\approxlewisweights(\bA,p,\beta,T)$
\begin{enumerate}
\item Intialize $w_i=1$ for all $i\in[n]$.
\item For $t=1,\ldots,T$
\begin{enumerate}
\item $\bw\gets\lewisiterate(\bA,p,\beta,\bw)$
\end{enumerate}
\item Return $\bw$
\end{enumerate}
\end{mdframed}
\caption{Iterative algorithm for approximate $\ell_p$ Lewis weights with $p<4$.}\label{fig:lewis:iter}
\end{figure}

At a high level, the correctness of \figref{lewis:iter} follows from Banach's fixed point theorem and the fact that the subroutine $\lewisiterate$ is a contraction mapping, because $|2/p-1|<1$ for $p<4$~\cite{CohenP15}. 
However, for $p\ge 4$, \figref{lewis:iter} no longer works. 
Instead, prior to the recent work of \cite{FazelLPS21}, approximating $\ell_p$ Lewis weights for $p>4$ seems to require solving the convex program
\begin{align*}
\bQ=\argmax_{\bM}\det(\bM),\qquad\text{subject to }\sum_i(\ba_i^\top\bM\ba)^{p/2}\le d,\qquad \bM\succeq 0,
\end{align*}
and setting $w_i=(\ba_i^\top\bQ\ba)^{p/2}$. 
Unfortunately, this is often infeasible in practice, and we could not obtain empirical results using it. 
Therefore, a nice advantage of our algorithms, both for structured and unstructured matrices, is that we only use $\ell_q$ Lewis weight sampling for $q\le[1,4)$, even if $p \geq 4$, whereas previous algorithms required using $\ell_p$ Lewis weight sampling for $p>4$. 

\section{\texorpdfstring{$\ell_p$}{Lp} Regession on Vandermonde Matrices}
We first describe the general framework of our algorithm for efficient $\ell_p$ regression, so that given $\bA\in\mathbb{R}^{n\times d}$ and $\bb\in\mathbb{R}^d$, the goal is to approximately compute $\OPT:=\min_{\bx\in\mathbb{R}^d}\|\bA\bx-\bb\|_p$.  
Recall that in order to apply our structural results, we first require a measurement vector $\bb''$ with a small number of distinct entries. 
We obtain $\bb''$ by first finding a constant factor approximation, i.e., $\tilde{\bx}\in\mathbb{R}^d$ such that $\|\bA\tilde{\bx}-\bb\|_p\le\O{\OPT}$. 
We can $\ell_p$ Lewis weight sample the rows of $\bA$ to do this, but the time to solve the subsequent polynomial regression problem would have exponential dependence on $p$, due to \thmref{lewis:weights}. 
Instead, we use our structural properties to implicitly create a matrix $\bM$ from $\bA$ with fewer than $d^p$ columns and then $\ell_{p/2^r}$ Lewis weight sample the rows of $\bA$, where $r$ is the integer that satisfies $2^r\le p<2^{r+1}$, and then we solve the $\ell_p$ regression problem on the sampled rows of $\bA$ and entries of $\bb$ to obtain $\tilde{\bx}$. 
We set $\bb'$ to be the residual vector $\bb-\bA\tilde{\bx}$ so that $\|\bb'\|_p=\O{\OPT}$. 
We then use the procedure $\roundtrunc$, i.e., as in \algref{round:trunc}, which sets the entries of $\bb'$ that are the smallest in magnitude to zero, and rounds the remaining entries of $\bb'$ to the nearest power of $(1+\eps)$. 

\begin{algorithm}[!tbh]
\caption{$\roundtrunc$: Round and truncate coordinates of input vector $\bb$}
\label{alg:round:trunc}
\begin{algorithmic}[1]
\Require{Vector $\bb\in\mathbb{R}^n$, accuracy parameter $\eps>0$}
\Ensure{Vector $\bx\in\mathbb{R}^n$ that is a rounded and truncated version of $\bb$}
\State{$M\gets\max_{i\in n}|b_i|$}
\For{$i=1$ to $i=n$}
\State{$p_i\gets\lfloor\log_{1+\eps}|b_i|\rfloor$}
\State{$x_i\gets(1+\eps)^{p_i}\cdot\sgn(b_i)$}
\If{$|x_i|\le\frac{M}{n^5}$}
\State{$x_i\gets 0$}
\EndIf
\EndFor
\State{\Return $\bx$}
\end{algorithmic}
\end{algorithm}

Since $\|\bb'\|_p=\O{\OPT}$, it then follows by triangle inequality that it suffices to approximately solve the $\ell_p$ regression problem on the vector $\bb''$ instead. 
We partition the rows of $\bA$ into groups $G_1,\ldots,G_\ell$ based on the values of $\bb''$ and perform $\ell_{p/2^r}$ Lewis weight sampling again on an implicit matrix that we create from the rows of each group. Here we leverage our theory about the rank of the $\ell_p$ regression problem for structured inputs. 
It then suffices to solve the $\ell_p$ regression problem on the matrix formed by the sampled rows across all the groups. 

\begin{algorithm}[!tbh]
\caption{Faster regression for Vandermonde matrices}
\label{alg:vandermonde:lp}
\begin{algorithmic}[1]
\Require{Vandermonde matrix $\bA\in\mathbb{R}^{n\times d}$, measurement vector $\bb$, accuracy parameter $\eps>0$}
\Ensure{$\widehat{\bx}\in\mathbb{R}$ with $\|\bA\widehat{\bx}-\bb\|_p\le(1+\O{\eps})\min_{\bx\in\mathbb{R}^{d}}\|\bA\bx-\bb\|_p$}
\State{$r\gets\flr{\log p}$}
\Comment{$2^r\le p<2^{r+1}$.}
\State{Extend $\bA$ to a Vandermonde matrix $\bM$ with dimensions $n\times d'$, where $d'=(2^r(d-1)+1)$.}
\State{Use $\ell_{p/2^r}$-Lewis weight sampling on $\bM$ to find a set $S$ of $\O{d'\log d'}$ indices in $[n] = \{1, 2, \ldots, n\}$, and corresponding rescaling factors.}
\State{Let $\bA'$ be the corresponding submatrix of $\bA$ with indices in $S$, and scaled accordingly.}
\State{Compute $\tilde{\bx}\le5\min_{\bx\in\mathbb{R}^d}\|\bA'\bx-\bb\|_p$.}
\State{$\bb'\gets\bb-\bA\tilde{\bx}$, $\bb''\gets\roundtrunc(\bb')$, $\ell\gets\O{\frac{\log n}{\eps}}$}
\State{Partition the rows of $\bA$ into groups $G_1,\ldots,G_\ell$, each containing all rows with the same value of $\bb''$}
\State{Let $\bG_k$ be the submatrix of $G_k$ extended to $d''=2^{2r}(d-1)+1$ columns.}
\State{Let $t_k$ be the coordinate of $\bb''$ corresponding to $G_k$ for each $k\in[\ell]$.}
\State{Use $\ell_{p/2^r}$-Lewis weight sampling on $[\bG_k; t_k]$ to find a set $S'_k$ of $\O{\frac{d''}{\eps^2}\log d''}$ indices in $[n]$ and rescaling factors.}
\State{Let $\bT_k$ be the corresponding sampling and rescaling matrix for $S'_k$.}
\State{$\bT\gets[\bT_1;\ldots;\bT_k]^\top$}
\State{Compute $\widehat{\bx}\le(1+\eps)\min_{\bx\in\mathbb{R}^d}\|\bT\bA\bx-\bT\bb\|_p$.}
\State{\Return $\widehat{\bx}$}
\end{algorithmic}
\end{algorithm}

We first prove a simple statement that shows a ``good'' solution to the $\ell_p$ regression problem on a subspace embedding $\bS\bA$ is also a ``good'' solution to the original input matrix $\bA$.  
\begin{lemma}
\label{lem:const:to:const}
Let $\bS$ be a sampling and rescaling matrix such that 
\[\frac{11}{12}\|\bS\bA\bx\|_p\le\|\bA\bx\|_p\le\frac{13}{12}\|\bS\bA\bx\|_p,\qquad\Ex{\|\bS\bA\bx\|_p^p}=\|\bA\bx\|_p^p\]
for all $\bx\in\mathbb{R}^d$. 
Let $\OPT=\min_{\bx\in\mathbb{R}^d}\|\bA\bx-\bb\|_p$ and let $\bx\in\mathbb{R}^d$ be any vector for which $\|\bS\bA\bx-\bS \bb\|_p\le 5\OPT$. 
Then with probability at least $0.79$,
\[\|\bA\bx-\bb\|_p\le 12\OPT.\] 
\end{lemma}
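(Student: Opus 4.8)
The plan is to pass between the two norms $\|\bA\bx\|_p$ and $\|\bS\bA\bx\|_p$ using the given two-sided subspace-embedding bound, apply the triangle inequality in the $\ell_p$ norm, and use Markov's inequality on the unbiasedness hypothesis $\Ex{\|\bS\bA\bx\|_p^p}=\|\bA\bx\|_p^p$ to control the one bad direction. The key observation is that the subspace-embedding hypothesis applies to every vector of the form $\bA\bx$, but the residual $\bA\bx-\bb$ is \emph{not} of that form, so to pass $\bS$ through residuals I need to split $\bb$ off separately. Let $\bx^\star$ denote an optimal solution, so $\|\bA\bx^\star-\bb\|_p=\OPT$.

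First I would lower-bound $\|\bA\bx-\bb\|_p$ in terms of the sketched residual. Writing $\bA\bx-\bb = (\bA\bx-\bA\bx^\star) + (\bA\bx^\star-\bb)$ and then $\bS(\bA\bx-\bb) = \bS\bA(\bx-\bx^\star) + \bS(\bA\bx^\star-\bb)$, a triangle inequality gives
\[
\|\bA\bx-\bb\|_p \;\ge\; \|\bA(\bx-\bx^\star)\|_p - \|\bA\bx^\star-\bb\|_p \;\ge\; \tfrac{11}{12}\|\bS\bA(\bx-\bx^\star)\|_p - \OPT,
\]
using the left embedding inequality on the vector $\bA(\bx-\bx^\star)$. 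Next, $\|\bS\bA(\bx-\bx^\star)\|_p \ge \|\bS(\bA\bx-\bb)\|_p - \|\bS(\bA\bx^\star-\bb)\|_p$, and the first term is at most $5\OPT$ by hypothesis — wait, that is an upper bound, not a lower bound, so this particular chaining is the wrong direction. Instead I would directly estimate $\|\bA\bx-\bb\|_p$ from above: by triangle inequality $\|\bA\bx-\bb\|_p \le \|\bA(\bx-\bx^\star)\|_p + \OPT \le \tfrac{13}{12}\|\bS\bA(\bx-\bx^\star)\|_p + \OPT$, and then $\|\bS\bA(\bx-\bx^\star)\|_p \le \|\bS(\bA\bx-\bb)\|_p + \|\bS(\bA\bx^\star-\bb)\|_p \le 5\OPT + \|\bS(\bA\bx^\star-\bb)\|_p$.

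So the whole argument reduces to the single quantity $\|\bS(\bA\bx^\star-\bb)\|_p$, i.e.\ the sketched norm of the \emph{fixed} optimal residual vector $\br := \bA\bx^\star - \bb$. This is where I would invoke the unbiasedness: the hypothesis $\Ex{\|\bS\bA\bx\|_p^p}=\|\bA\bx\|_p^p$ should be read as holding for the column-space-augmented matrix (or equivalently the sampling-with-rescaling guarantee extends to any fixed vector, since $\Ex{\|\bS\bv\|_p^p}=\|\bv\|_p^p$ is a per-coordinate expectation computation that does not need $\bv$ to lie in the span of $\bA$). Thus $\Ex{\|\bS\br\|_p^p} = \|\br\|_p^p = \OPT^p$, and by Markov's inequality, $\PPr{\|\bS\br\|_p^p \ge t\cdot\OPT^p} \le 1/t$; taking $t$ so that $1/t = 0.21$, i.e.\ $t = 100/21$, gives $\|\bS\br\|_p \le (100/21)^{1/p}\OPT \le (100/21)\OPT < 5\OPT$ with probability at least $0.79$. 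Plugging this back: $\|\bA\bx-\bb\|_p \le \tfrac{13}{12}(5\OPT + 5\OPT) + \OPT = \tfrac{13}{12}\cdot 10\,\OPT + \OPT = \tfrac{130}{12}\OPT + \OPT = \tfrac{142}{12}\OPT < 12\,\OPT$, as claimed.

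The main obstacle — really the only subtle point — is justifying that the expectation/unbiasedness statement applies to the fixed vector $\br=\bA\bx^\star-\bb$ rather than only to vectors in the column span of $\bA$; I expect this is immediate from the definition of Lewis-weight sampling as independent coordinate-wise rescaling (each kept coordinate $i$ is scaled by $(\Pr[i])^{-1/p}$, so $\Ex{|\bS\bv|_i^p}$ summed equals $\|\bv\|_p^p$ for any $\bv$), but it should be stated explicitly. Everything else is bookkeeping with the constants $\tfrac{11}{12},\tfrac{13}{12}$ and the triangle inequality, and the constant $12$ in the conclusion is comfortably loose, so there is slack to absorb the $(100/21)^{1/p}\le 100/21$ step even at $p=1$.
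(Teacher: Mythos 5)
Your proof is correct and is essentially the paper's argument: both split the residual through an optimizer $\bx^\star$, use the triangle inequality together with the embedding bound on $\bA(\bx-\bx^\star)$, and control the single random quantity $\|\bS(\bA\bx^\star-\bb)\|_p$ via Markov from the unbiasedness hypothesis; you argue the implication directly where the paper phrases it as a contrapositive, and you apply Markov to the $p$-th power where the paper first invokes Jensen, but these differences are cosmetic. Your explicit observation that the unbiasedness must be applied to the fixed vector $\bA\bx^\star-\bb$ (which need not lie in the column span of $\bA$) is a subtlety the paper's proof uses silently, and your justification of it via the coordinate-wise sampling-and-rescaling definition is the right one.
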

\begin{proof}
Let $\bx^*$ be a minimizer of $\|\bA\bx-\bb\|_p$ so that $\OPT=\|\bA\bx^*-\bb\|_p$. 
We will prove the claim by contrapositive, so we first suppose that $\|\bA\bx-\bb\|_p\le 12\OPT$. 
Then by the triangle inequality,
\[\|\bS\bA\bx-\bS\bb\|_p\ge\|\bS\bA(\bx-\bx^*)\|_p-\|\bS\bA^*-\bS\bb\|_p.\]
Since $\frac{11}{12}\|\bS\bA\bx\|_p\le\|\bA\bx\|_p\le\frac{13}{12}\|\bS\bA\bx\|_p$ for all $\bx\in\mathbb{R}^d$, then
\[\|\bS\bA\bx-\bS\bb\|_p\ge\frac{11}{12}\|\bA(\bx-\bx^*)\|_p-\|\bS\bA^*-\bS\bb\|_p.\]
By the triangle inequality,
\[\|\bS\bA\bx-\bS\bb\|_p\ge\frac{11}{12}\left(\|\bA\bx-\bb\|_p-\|\bA\bx^*-\bb\|_p\right)-\|\bS\bA^*-\bS\bb\|_p.\]
Note that by Jensen's inequality, $\mathbb{E}[\|\bS\bA\bx^*-\bS\bb\|_p]\le\OPT$, so that by Markov's inequality, 
\[\PPr{\|\bS\bA\bx^*-\bS\bb\|_p\ge 5\OPT}\le\frac{1}{5}.\] 
Thus with probability at least $0.8$, 
\[\|\bS\bA\bx-\bS\bb\|_p\ge\frac{11}{12}\left(\|\bA\bx-\bb\|_p-\|\bA\bx^*-\bb\|_p\right)-5\OPT.\]
Thus if $\|\bA\bx-\bb\|_p\le 12\OPT$, then
\[\|\bS\bA\bx-\bS\bb\|_p\ge\frac{11}{12}\left(12\OPT-\OPT\right)-5\OPT>5\OPT,\]
as desired.
\end{proof}

Before justifying the correctness of \algref{vandermonde:lp}, we first recall the following algorithm for efficient $\ell_p$ subspace embeddings. 
\begin{theorem}
\cite{ShiW19}
\label{thm:str:lewis}
Given $\eps\in(0,1)$ and $p\in[1,2]$, a Vandermonde matrix  $\bA\in\mathbb{R}^{n\times d}$, and $\bb\in\mathbb{R}^d$, let $T(\bA)$ be the time it takes to perform matrix-vector multiplication, i.e., compute $\bA\bv$ for an arbitrary $\bv\in\mathbb{R}^d$. 
There exists an algorithm that uses $\O{T(\bA)\log n+d^q\cdot\poly(1/\eps)}$ time, where $q=\omega$ for $p\in[1,4)$ and $q=p/2+C$ for some fixed constant $C>0$ for $p>4$, and with high probability, returns $\tilde{\bx}\in\mathbb{R}^d$ such that
\[\|\bA\tilde{\bx}-\bb\|_p\le(1+\eps)\min_{\bx\in\mathbb{R}^d}\|\bA\bx-\bb\|_p.\]
\end{theorem}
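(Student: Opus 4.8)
The plan is to reconstruct the argument of \cite{ShiW19}, whose engine is that every linear-algebraic primitive a sampling-based $\ell_p$ regression pipeline needs can be executed in near-linear time when $\bA$ is Vandermonde, because multiplication by $\bA$ (and by $\bA^\top$) is exactly multipoint polynomial evaluation (and its adjoint), computable in $\tO{n}$ time by FFT-based methods; in particular $T(\bA)=\tO{n}$ and $\bv\mapsto\bA^\top\bv$ also costs $\tO{n}$.

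The first step is to reduce the regression problem to building an $\ell_p$ subspace embedding for the augmented matrix $\bB=[\bA\mid\bb]\in\mathbb{R}^{n\times(d+1)}$: if $\bS$ satisfies $(1-\eps)\|\bB\by\|_p\le\|\bS\bB\by\|_p\le(1+\eps)\|\bB\by\|_p$ for all $\by\in\mathbb{R}^{d+1}$, then specializing to $\by=(\bx,-1)$ shows that $\min_{\bx}\|\bS\bA\bx-\bS\bb\|_p$ is a $(1+\O{\eps})$-approximation to $\OPT$, so any minimizer of the sketched instance — which has only $m=\O{d\,\poly(1/\eps,\log n)}$ rows — can be returned as $\tilde\bx$ after rescaling $\eps$. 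The small $\ell_p$ solve costs $\poly(d,1/\eps)$ for $p\le 4$; for $p>4$ this final solve (equivalently, a recursive round of Lewis-weight sampling on the sketched instance) is what contributes the $d^{p/2+C}$ term. Crucially $\bB$ still has an $\tO{n}$-time matvec, since $\bB\by=\bA\by_{1:d}+y_{d+1}\bb$ and $\bB^\top\bv=(\bA^\top\bv,\ \langle\bb,\bv\rangle)$, so appending the single column $\bb$ preserves the structure.

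The second step is to construct $\bS$ by Lewis-weight sampling, which by \thmref{lewis:weights} uses $m=\O{d^{\max(1,p/2)}\poly(1/\eps,\log d)}$ rows — i.e.\ $\O{d\,\poly(1/\eps,\log d)}$ in the $p\in[1,2]$ regime since $\max(1,p/2)=1$ — and which only needs constant-factor approximations to the $\ell_p$ Lewis weights of $\bB$. I would compute those by running the iteration of \figref{lewis:iter} (valid for $p<4$ by the contraction argument), implemented structurally: in a round with current weights $\bw$ and $\bD=\bW^{1-2/p}$, the $(d+1)\times(d+1)$ matrix $\bB^\top\bD\bB$ has its leading $d\times d$ block equal to $\bA^\top\bD\bA$, whose entries are the weighted power sums $\sum_i D_{ii}t_i^{j+k}$ and hence together form a single transposed Vandermonde matvec (the remaining row and column involve $\bA^\top\bD\bb$ and $\bb^\top\bD\bb$, both $\tO{n}$), so $\bB^\top\bD\bB$ is assembled in $\tO{n}+\poly(d)$ time; forming $\bM=(\bB^\top\bD\bB)^{-1}$ costs $d^\omega$; and each updated score $\tau_i=[\ba_i\mid b_i]\,\bM\,[\ba_i\mid b_i]^\top$, after splitting $\bM$ into its $d\times d$ block, its off-diagonal vector, and its corner scalar, is a degree-$\O{d}$ polynomial in $t_i$ plus $O(1)$ corrections linear in $b_i$, so all $n$ scores follow from a constant number of multipoint evaluations in $\tO{n}+\poly(d)$ time. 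Since a constant-factor approximation is reached after $\O{\log\log n}$ rounds, computing the weights — and hence forming $\bS$ — costs $\tO{n}+d^\omega\,\poly(1/\eps,\log n)$ in total.

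The main obstacle is bookkeeping rather than any single hard idea: one must (i) check that appending the column $\bb$ does not spoil the $\tO{n}$-time implementability of $\bB^\top\bD\bB$ and of the $n$ quadratic forms, by handling the $O(1)$ ``dense'' rows/columns that $\bb$ introduces separately from the Vandermonde part; (ii) verify that a constant-factor error in the Lewis weights is absorbed by \thmref{lewis:weights} (which is stated to tolerate approximate weights) and that Lewis-weight sampling on $\bB$, not on $\bA$, is what yields the for-all-$\bx$ guarantee on $\|\bA\bx-\bb\|_p$; and (iii) for $p>4$, supply a substitute for \figref{lewis:iter} — which no longer contracts — either the convex-program computation of \cite{CohenP15} run on a crude pre-embedded instance or a recursive sampling step, which is the source of the $q=p/2+C$ exponent. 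Everything else is a direct composition of fast multipoint evaluation with the Lewis-weight sampling guarantee of \thmref{lewis:weights}.
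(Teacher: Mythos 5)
You should know at the outset that the paper never proves this statement: \thmref{str:lewis} is imported verbatim from \cite{ShiW19} and used as a black box (indeed only in the regime where the sampling exponent $p/2^r$ lies in $[1,2)$), so there is no internal proof to compare yours against; what can be judged is whether your reconstruction is sound, and it essentially is. You follow the route one would expect: reduce to an $\ell_p$ subspace embedding of the augmented matrix $[\bA\mid\bb]$, sample by Lewis weights with the row count of \thmref{lewis:weights}, and implement the iteration of \figref{lewis:iter} in $\tO{n}+d^{\omega}\,\poly(\log n)$ time per round by exploiting that multiplication by a Vandermonde matrix (and its transpose) is multipoint evaluation and its adjoint; your block treatment of the appended column $\bb$ and your observation that $\ba_i^\top\bM\ba_i$ collapses to a degree-$\O{d}$ polynomial in $t_i$, so all $n$ scores follow from $O(1)$ multipoint evaluations, are both correct. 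The one divergence from \cite{ShiW19} is in implementation rather than substance: there the per-iteration scores are approximated via Gaussian/JL sketching using $\O{\log n}$ matrix-vector products (which is where the $T(\bA)\log n$ term in the statement comes from), whereas you evaluate the quadratic forms exactly by collecting polynomial coefficients; for Vandermonde matrices both yield the claimed bound, though your version uses the Vandermonde structure specifically rather than only a fast-matvec oracle. Your flag that the $p>4$ branch needs a different Lewis-weight computation (a convex program, prior to \cite{FazelLPS21}, or a recursive sampling step), contributing the $d^{p/2+C}$ term, is consistent with the paper's own discussion in \secref{prelims}; note finally that the quoted statement's hypothesis ``$p\in[1,2]$'' sits awkwardly with its own case split over $p\in[1,4)$ and $p>4$ --- an infelicity of the theorem as stated in the paper, not a gap in your argument.
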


We now justify the correctness of \algref{vandermonde:lp}. 
\begin{restatable}{lemma}{lemvandercorrectness}
\label{lem:vander:correctness}
Given $\eps\in(0,1)$ and $p\ge1$, a Vandermonde matrix $\bA\in\mathbb{R}^{n\times d}$, and $\bb\in\mathbb{R}^d$, then with high probability, \algref{vandermonde:lp} returns a vector $\widehat{\bx}\in\mathbb{R}^d$ such that
\[\|\bA\widehat{\bx}-\bb\|_p\le(1+\O{\eps})\min_{\bx\in\mathbb{R}^d}\|\bA\bx-\bb\|_p.\]
\end{restatable}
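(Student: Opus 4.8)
The plan is to follow the three stages of \algref{vandermonde:lp}, converting at each stage a Lewis-weight subspace embedding of an \emph{extended} Vandermonde matrix (in the $\ell_q$ norm, $q:=p/2^{r}\in[1,2)$) into an $\ell_p$ statement about $\bA$. First, for the constant-factor preconditioner: since $\langle\ba_i,\bx\rangle=\sum_{j=0}^{d-1}x_{j+1}t_i^{\,j}$ has degree $d-1$ in $t_i$, its $2^{r}$-th power has degree $2^{r}(d-1)$, so --- this is the construction behind \lemref{vander:truerank} with exponent $2^{r}$ --- there is a fixed map $g$ with $(\langle\ba_i,\bx\rangle)^{2^{r}}=\langle\bm_i,g(\bx)\rangle$ for $\bm_i$ the $i$-th row of the extended Vandermonde matrix $\bM$ on $d'=2^{r}(d-1)+1=\O{dp}$ columns; hence $\|\bA\bx\|_p^p=\|\bM g(\bx)\|_q^q$ for all $\bx$. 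Since $q\in[1,2)$, \thmref{lewis:weights} ($\gamma=2$) says $\ell_q$-Lewis-weight sampling of $\O{d'\log(d'/\delta)}$ rows is an unbiased $\ell_q$ subspace embedding of $\bM$; keeping those rows of $\bA$ but rescaling each by its inverse sampling probability raised to the power $1/p$ rather than $1/q$ yields a matrix $\bS$ with $\Ex{\|\bS\bA\bx\|_p^p}=\|\bA\bx\|_p^p$ and $\frac{11}{12}\|\bS\bA\bx\|_p\le\|\bA\bx\|_p\le\frac{13}{12}\|\bS\bA\bx\|_p$ for all $\bx$, w.p.\ $1-\delta$. We then solve the small $\O{dp\log(dp)}\times d$ problem $\min_{\bx}\|\bS\bA\bx-\bS\bb\|_p$ with any $\ell_p$ solver; a Markov/Jensen bound shows its optimum is $\le5\OPT$ with constant probability, so \lemref{const:to:const} gives $\|\bA\tilde{\bx}-\bb\|_p\le12\OPT$ with constant probability, which a standard $\O{\log n}$-fold repetition (keeping the candidate with smallest sketched cost) boosts to high probability. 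In particular the residual $\bb':=\bb-\bA\tilde{\bx}$ has $\|\bb'\|_p\le12\OPT$.

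\textbf{Rounding and per-group coresets.} By definition $\bb'':=\roundtrunc(\bb')$ agrees with $\bb'$ up to a $(1+\eps)$ factor on surviving coordinates and zeroes those of magnitude at most $M/n^{5}$ with $M:=\max_i|b_i'|$, so $\|\bb'-\bb''\|_p\le\eps\|\bb'\|_p+n^{1/p}M/n^{5}=\O{\eps}\cdot\OPT$, while $\bb''$ has only $\ell=\O{\tfrac{\log n}{\eps}}$ distinct values; reparametrizing $\bx\mapsto\bx+\tilde{\bx}$ shows $\min_{\bx}\|\bA\bx-\bb'\|_p=\OPT$, hence $\min_{\bx}\|\bA\bx-\bb''\|_p=\OPT\pm\O{\eps}\OPT$. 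Next, partition $[n]$ into $G_1,\dots,G_\ell$ by the value of $\bb''$, with $t_k$ the common value on $G_k$. For $i\in G_k$, $|\langle\ba_i,\bx\rangle-t_k|^p=|\langle\ba_i,\bx-t_k\be_1\rangle|^p$ is a homogeneous Vandermonde $\ell_p$-loss in the shifted variable, so exactly as above it equals $|\langle\bm_i^{(k)},h_k(\bx)\rangle|^{q}$ for $\bm_i^{(k)}$ a row of an extended Vandermonde matrix $\bG_k$ on $d''=\O{dp^2}$ columns (as in \algref{vandermonde:lp}), and the augmented matrix $[\bG_k;t_k]$ additionally keeps the constant vector $t_k\mathbf{1}$ in its column span. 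Applying $\ell_q$-Lewis-weight sampling to $[\bG_k;t_k]$ with $\O{d''\log(d''/\delta)/\eps^2}$ rows and rescaling exponents as before gives, w.p.\ $1-\delta$, a matrix $\bT_k$ with, simultaneously for all $\bx$,
\[\sum_{i\in S'_k}\sigma_i^{p}|\langle\ba_i,\bx\rangle-t_k|^{p}=(1\pm\O{\eps})\sum_{i\in G_k}|\langle\ba_i,\bx\rangle-t_k|^{p},\qquad\sum_{i\in S'_k}\sigma_i^{p}|t_k|^{p}=(1\pm\O{\eps})\,|G_k|\,|t_k|^{p}.\]
Stacking $\bT:=[\bT_1;\dots;\bT_\ell]$ and summing the first identity over $k$ yields $\|\bT\bA\bx-\bT\bb''\|_p^p=(1\pm\O{\eps})\|\bA\bx-\bb''\|_p^p$ for all $\bx$.

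\textbf{Combining the pieces.} Write $\bb=\bA\tilde{\bx}+\bb'$ and $\bb'=\bb''+(\bb'-\bb'')$, so $\bT\bA\bx-\bT\bb=\bT\bA(\bx-\tilde{\bx})-\bT\bb''-\bT(\bb'-\bb'')$. On each $G_k$ the entries of $\bb'-\bb''$ have magnitude at most $\eps|b_i'|\le\O{\eps}|t_k|$ up to a $1/\poly(n)$ tail, so the second identity forces $\|\bT(\bb'-\bb'')\|_p\le\O{\eps}\|\bb''\|_p\le\O{\eps}\OPT$. Together with the subspace-embedding identity for $\bb''$, minimizing $\|\bT\bA\bx-\bT\bb\|_p$ is --- up to a $(1+\O{\eps})$ factor and an additive $\O{\eps}\OPT$ --- the same as minimizing $\|\bA(\bx-\tilde{\bx})-\bb''\|_p$, whose optimum is $(1+\O{\eps})\OPT$. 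Hence the returned $\widehat{\bx}$ satisfies $\|\bA(\widehat{\bx}-\tilde{\bx})-\bb''\|_p\le(1+\O{\eps})\OPT$, and therefore in the original norm
\[\|\bA\widehat{\bx}-\bb\|_p\le\|\bA(\widehat{\bx}-\tilde{\bx})-\bb''\|_p+\|\bb'-\bb''\|_p\le(1+\O{\eps})\OPT.\]
Rescaling $\eps$ by a constant and union-bounding over the $\ell+1$ Lewis-weight samplings (each with $\delta=1/\poly(n)$) and the $\O{\log n}$ preconditioner trials gives the high-probability guarantee.

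\textbf{Main obstacle.} The delicate step is the last one: the algorithm minimizes against the \emph{un-rounded} $\bb$ (equivalently $\bb'$), yet $\bT$ is only certified as a subspace embedding for the affine family $\{\bA\bx-\bb''\}$, and $\bb'-\bb''$ lies in no controlled low-dimensional subspace. Bounding $\|\bT(\bb'-\bb'')\|_p$ is precisely what forces the extra constant-$t_k$ column into $[\bG_k;t_k]$, and it also needs the Stage-1 guarantee $\|\bb'\|_p=\O{\OPT}$ so that the coordinate-wise rounding error is small relative to $\OPT$. A secondary but fiddly point is the bookkeeping when transferring Lewis-weight guarantees from the $\ell_q$ norm on the extended Vandermonde matrices to the $\ell_p$ norm on $\bA$ --- getting the rescaling exponent right and checking that unbiasedness survives so that \lemref{const:to:const} applies as stated.
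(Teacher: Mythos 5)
Your proposal is correct and follows essentially the same route as the paper: reduce the $\ell_p$ problem to $\ell_{p/2^r}$ regression on extended Vandermonde matrices, obtain a constant-factor solution $\tilde{\bx}$ via Lewis weight sampling together with \lemref{const:to:const}, round and truncate the residual $\bb'$, partition the rows by the value of $\bb''$, and build per-group Lewis-weight coresets whose union embeds $\|\bA\bx-\bb''\|_p$. You in fact treat two points more carefully than the paper's own proof---boosting the constant success probability of the preconditioning stage to high probability, and bounding $\|\bT(\bb'-\bb'')\|_p$ so that solving against $\bT\bb$ rather than $\bT\bb''$ is harmless---both of which the paper glosses over.
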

\begin{proof}
Consider \algref{vandermonde:lp} and let $r$ be an integer so that $2^r\le p<2^{r+1}$. 
Note that 
\[\sum_{i\in[n]}|\langle\ba_i,\bx\rangle|^p=\sum_{i\in[n]}|(\langle\ba_i,\bx\rangle)^{2^r}|^{p/2^r}.\]  
Since $\bA$ is Vandermonde, then 
\begin{align*}
(\langle\ba_i,\bx\rangle)^{2^r}=\left(\sum_{j=1}^d a_{i,j}x_j\right)^{2^r}=\left(\sum_{j=1}^d a_{i,1}^j x_j\right)^{2^r}=\sum_{j=1}^{2^r(d-1)+1}a_{i,2}^{j-1} y_j,
\end{align*}
where each $y_j$ is a fixed function of the coordinates of $\bx$. 
Notably, the fixed function \emph{is the same across all} $i\in[n]$. 
Hence, the $\ell_{2^r}$ subspace embedding problem on an input Vandermonde matrix $\bA\in\mathbb{R}^{n\times d}$ can be reshaped as a constrained $\ell_1$ subspace embedding problem on an input Vandermonde matrix of size $n\times(2^r(d-1)+1)$. 
Thus, for $\ell_p$ regression with $p\in[2^r,2^{r+1})$, we have
\[\sum_{i\in[n]}|\langle\ba_i,\bx\rangle|^p=\sum_{i\in[n]}\bigg|\sum_{j=1}^{2^r(d-1)+1}a_{i,2}^{j-1} y_j\bigg|^{p/2^r},\]
which is a constrained $\ell_{p/2^r}$ regression problem on an input Vandermonde matrix $\bM$ of size $n\times(2^r(d-1)+1)$. 

By \thmref{str:lewis}, we can use $\ell_{p/2^r}$ Lewis weight sampling to find a matrix $\bM'$ such that
\[\frac{1}{5}\|\bM\by\|^{p/2^r}_{p/2^r}\le\|\bM'\by\|^{p/2^r}_{p/2^r}\le 5\|\bM\by\|^{p/2^r}_{p/2^r}\]
for all $\by\in\mathbb{R}^{2^r(d-1)+1}$ with high probability. 
Note that by the above argument, if we take the matrix $\bA'$ corresponding to the scaled rows of $\bA$ that are sampled by $\bM'$, then we also have
\[\frac{11}{12}\|\bA\bx\|^p_p\le\|\bA'\bx\|^p_p\le\frac{13}{12}\|\bA\bx\|^p_p\]
and thus
\[\frac{11}{12}\|\bA\bx\|_p\le\|\bA'\bx\|_p\le\frac{13}{12}\|\bA\bx\|_p\]
for all $\bx\in\mathbb{R}^{d+1}$ with high probability. 
Thus by \lemref{const:to:const}, we can find a vector $\tilde{\bx}$ such that
\[\|\bA\tilde{\bx}-\bb\|_p\le12\min_{\bx\in\mathbb{R}^d}\|\bA\bx-\bb\|_p.\]
We use the subroutine $\roundtrunc$ to create the vector $\bb''$ which is the vector with all entries of $\bb'$ rounded to the nearest power of $(1+\eps)$, starting at the maximum entry of $\bb'$ in absolute value, and stopping after we are $\frac{1}{\poly(n)}$ times that, and replacing all remaining entries with $0$. 
By the triangle inequality, we have
\[\|\bA\bx-\bb''\|_p\le\|\bA\bx-\bb'\|_p+\|\bb'-\bb''\|_p\le\|\bA\bx-\bb'\|_p+12\eps \OPT\le(1+12\eps)\|\bA\bx-\bb''\|_p,\]
for any $\bx\in\mathbb{R}^{d+1}$. 

Note that $\bb''$ has discretized the values of $\bb'$ into $\ell=\O{\frac{\log n}{\eps}}$ possible values. 
We partition the rows of $\bA$ into $\ell$ groups $G_1,\ldots,G_\ell$, based on the corresponding values of $\bb''$. 
Suppose that for a group $G_k$, the corresponding values of $\bb''$ are all $t_k$. 
Then we have 
\[\|\bA\bx-\bb''\|_p^p=\sum_k\sum_{i\in G_k}\left|\langle\ba_i,\bx\rangle-t_k\right|^p.\]
Since $\bA$ is Vandermonde, for each $i\in G_k$,
\begin{align*}
(\langle\ba_i,\bx\rangle-t_k)^{2^r}=\left(-t_k+\sum_{j=1}^d a_{i,j}x_j\right)^{2^r}=\left(-t_k+\sum_{j=1}^d a_{i,2}^{j-1} x_j\right)^{2^r}=\sum_{j=1}^{2^{2r}(d-1)+1} a_{i,2}^{j-1} t_{k,j} y_j,
\end{align*}
for some fixed values $t_{k,1},t_{k,2},\ldots$ that can be computed from $t_k$, where again each $y_j$ can be a different function of the coordinates of $\bx$. 
In particular, there can be $2^r$ choices for the exponent of $t_k$. 
For a fixed choice of the exponent of $t_k$, the exponent of $a_{i,2}$ can range from $0$ to $2^r(d-1)$.

Hence, the $\ell_{2^r}$ regression problem on a submatrix of an input Vandermonde matrix $\bA\in\mathbb{R}^{n\times d}$ \emph{with the same fixed measurement values}, i.e., the corresponding coordinates of $\bb''$ are all the same, can be reshaped to a constrained $\ell_1$ regression problem on an input Vandermonde matrix with $2^{2r}(d-1)+1$ columns times a $(2^{2r}(d-1)+1)\times(2^{2r}(d-1)+1)$ diagonal matrix. 

Hence, by invoking \thmref{str:lewis} to sample rows of $\bM$ corresponding to their $\ell_{p/2^r}$ Lewis weights in the submatrix induced by the rows of $G_k$, we obtain a sampling matrix $\bT_k$ such that with high probability,
\[(1-\eps)\|\bT_k\bM\by-\bT_k\bv_k\|^{p/2^r}_{p/2^r}\le\sum_{i\in G_k}\left(\sum_{j=1}^{2^r(d-1)+1}a_{i,2}^{j-1}t_{k,j}y_j\right)^{p/2^r}\le(1+\eps)\|\bT_k\bM\by-\bT_k\bv_k\|^{p/2^r}_{p/2^r},\]
where $\bv_k$ corresponds to the vector $\bb''$ that is set to zero outside of coordinates whose values are $t_k$. 
Note that by the above argument, then we also have
\[(1-\eps)\|\bT_k\bA\bx-\bT_k\bv_k\|^p_p\le\sum_{i\in G_k}|\langle\ba_i,\bx\rangle-t_k|^p\le(1+\eps)\|\bT_k\bA\bx-\bT_k\bv_k\|^p_p,\]
with high probability. 
Summing over all $k$, we have
\[(1-\eps)\sum_k\|\bT_k\bA\bx-\bT_k\bv_k\|^p_p\le\sum_k\sum_{i\in G_k}|\langle\ba_i,\bx\rangle-t_k|^p\le(1+\eps)\sum_k\|\bT_k\bA\bx-\bT_k\bv_k\|^p_p,\]
with high probability. 

Let $\bT$ be the sampling matrix so that $\bT=\bT_1\circ\ldots\circ\bT_\ell$, so that $\sum_k\|\bT_k\bA\bx-\bT_k\bv_k\|^p_p=\|\bT\bA\bx-\bT\bb''\|^p_p$. 
Observe that $\|\bA\bx-\bb''\|^p_p=\sum_k\sum_{i\in G_k}|\langle\ba_i,\bx\rangle-t_k|^p$. 
Hence, 
\[(1-\eps)\|\bT\bA\bx-\bT\bb''\|^p_p\le\|\bA\bx-\bb''\|^p_p\le(1+\eps)\|\bT\bA\bx-\bT\bb''\|^p_p\]
and thus 
\[(1-\eps)\|\bT\bA\bx-\bT\bb''\|_p\le\|\bA\bx-\bb''\|_p\le(1+\eps)\|\bT\bA\bx-\bT\bb''\|_p.\]
Thus we can compute a vector $\widehat{\bx}\in\mathbb{R}^d$ such that
\[\|\bA\widehat{\bx}-\bb\|_p\le(1+\O{\eps})\min_{\bx\in\mathbb{R}^d}\|\bA\bx-\bb\|_p.\]
\end{proof}

Before analyzing the time complexity of \algref{vandermonde:lp}, we first recall the following algorithms for Vandermonde matrix-vector multiplication and approximate $\ell_p$ regression. 
\begin{theorem}[Vandermonde Matrix-Vector Multiplication Runtime, e.g., Table 1 in \cite{GohbergO94}]
\label{thm:vander:matvec:mult:time}
The runtime of computing $\bA\bx$ for a Vandermonde matrix $\bA\in\mathbb{R}^{n\times d}$ and a vector $\bx\in\mathbb{R}^d$ for $d\le n$ is $\O{n\log^2 n}$. 
\end{theorem}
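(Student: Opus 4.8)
The plan is to recognize that computing $\bA\bx$ for a Vandermonde matrix is exactly the classical \emph{multipoint polynomial evaluation} problem, and then to invoke the standard divide-and-conquer algorithm built on a subproduct tree. Denote by $t_1,\dots,t_n$ the generators of $\bA$, so that the $i$-th row of $\bA$ is $[1,t_i,t_i^2,\dots,t_i^{d-1}]$ and hence $(\bA\bx)_i = \sum_{j=0}^{d-1} x_j t_i^{\,j} = q(t_i)$, where $q(t) := \sum_{j=0}^{d-1} x_j t^j$ is a fixed univariate polynomial of degree less than $d \le n$. Thus computing $\bA\bx$ is equivalent to evaluating the single polynomial $q$ (of degree $<n$) at the $n$ points $t_1,\dots,t_n$, and the cost of forming $q$ is $\O{d}=\O{n}$, which will be dominated.

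First I would build a balanced binary \emph{subproduct tree} over the points: the $n$ leaves hold the linear polynomials $t - t_i$, and each internal node holds the product of the two polynomials held by its children. A node whose subtree spans $m$ leaves holds a polynomial of degree $m$, obtained from its two children (of degree $\approx m/2$) by one polynomial multiplication, costing $\O{M(m)}$ operations, where $M(m) = \O{m\log m}$ is the FFT-based cost of multiplying two polynomials of degree $m$. At any fixed depth of the tree the degrees of the nodes sum to $\O{n}$, so a full level costs $\O{n\log n}$; with $\O{\log n}$ levels the tree is built in $\O{n\log^2 n}$ operations.

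Next I would propagate $q$ down the tree by iterated polynomial remaindering. Starting from $q$ at the root, at a node holding a polynomial $P$ of degree $m$ we replace the incoming polynomial $r$ by $r \bmod P$ (of degree $< m$); fast polynomial division --- computing the reciprocal of the reversed divisor as a truncated power series via Newton iteration, followed by one more multiplication --- does this in $\O{M(m)} = \O{m\log m}$ operations. Correctness follows from the transitivity of remaindering, namely $\big((r \bmod P_{\mathrm{par}}) \bmod P_{\mathrm{ch}}\big) = r \bmod P_{\mathrm{ch}}$ whenever $P_{\mathrm{ch}} \mid P_{\mathrm{par}}$, so by induction the value delivered to the leaf $t - t_i$ is $q \bmod (t - t_i) = q(t_i) = (\bA\bx)_i$. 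The per-level and per-tree cost accounting is identical to the build phase: $\O{n\log n}$ per level, $\O{\log n}$ levels, hence $\O{n\log^2 n}$ total, which yields the claimed bound.

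The only genuinely delicate ingredient --- and the reason the statement is recalled from \cite{GohbergO94} rather than reproved from scratch --- is the fast division primitive and its dependence on FFT-based multiplication: one works in an arithmetic/real-RAM model in which length-$m$ FFTs, and therefore multiplication and Newton-iteration division of length-$m$ polynomials, cost $\O{m\log m}$. Granting these standard subroutines, the recursion structure and the level-by-level cost summation above are routine, so I expect no further obstacle.
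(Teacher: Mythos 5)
The paper does not prove this statement; it is imported directly from the literature (Table 1 of the cited reference on computations with structured matrices). Your argument is the standard and correct derivation of that cited bound: reduce $\bA\bx$ to multipoint evaluation of a single degree-$<d\le n$ polynomial, build the subproduct tree in $\O{n\log^2 n}$ via FFT-based multiplication, and push the polynomial down by fast remaindering at the same cost, with correctness from $q \bmod (t-t_i)=q(t_i)$. Nothing is missing, and the one model-dependent ingredient (that length-$m$ polynomial multiplication and Newton-iteration division cost $\O{m\log m}$ arithmetic operations) is correctly identified and isolated.
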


\begin{theorem}[Approximate $\ell_p$ Regression Runtime]
\cite{AdilPS19}
\label{thm:lp:regression:runtime}
Given $\bA\in\mathbb{R}^{n\times d}$ and $p\ge[2,\infty)$, there exists an algorithm that makes $\O{\sqrt{n}\log\frac{n}{\eps}}$ calls to a linear system solver and computes a vector $\tilde{\bx}$ such that
\[\|\bA\tilde{\bx}-\bb\|_p\le(1+\eps)\min_{\bx\in\mathbb{R}^d}\|\bA\bx-\bb\|_p.\]
\end{theorem}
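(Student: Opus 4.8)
The plan is to realize the algorithm as an iteratively reweighted least squares (IRLS) scheme, so that each iteration reduces to solving a single weighted linear system of the form $\bA^\top\bD\bA\,\Delta=\bv$, and then to bound the number of such systems by $\tO{\sqrt{n}}$. Since $p\in[2,\infty)$, the objective $f(\bx)=\|\bA\bx-\bb\|_p^p$ is convex; writing $\br=\bA\bx-\bb$, its gradient is $\bg(\bx)=p\,\bA^\top\!\left(\sgn(\br)\,|\br|^{p-1}\right)$ and its Hessian is $p(p-1)\,\bA^\top\bR\bA$ with $\bR$ diagonal and $R_{ii}=|r_i|^{p-2}$. The natural Newton/IRLS step in this geometry is exactly the solution of a weighted least squares problem, i.e.\ a single linear-system solve, so the entire runtime question reduces to counting how many reweightings are required.

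The first ingredient is an \emph{iterative refinement} reduction that decouples the accuracy parameter from the bulk of the work. At a current iterate $\bx$ I would define a local residual objective that blends the linear term $\langle\bg(\bx),\Delta\rangle$, a quadratic term $\Delta^\top\bA^\top\bR\bA\,\Delta$ built from the current residuals, and a higher-order term $\|\bA\Delta\|_p^p$, and then show that any \emph{constant-factor} minimizer $\Delta$ of this residual objective, applied as the update $\bx\gets\bx+\Delta$, shrinks the optimality gap $f(\bx)-\OPT^p$ by a factor $1-1/\poly(p)$. Iterating yields a $(1+\eps)$-approximation after only $\O{\poly(p)\log(1/\eps)}$ refinement rounds; this is what supplies the $\log\frac{n}{\eps}$ factor in the target bound while contributing no polynomial-in-$n$ overhead.

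It then remains to solve each residual subproblem to constant accuracy, and this inner solve is where both the $\sqrt{n}$ factor and the main difficulty lie. I would solve it by reweighted $\ell_2$: maintain weights $w_i$, solve the resulting weighted least squares problem (one linear-system call), recompute residuals, update the weights, and repeat. The obstruction to fast convergence is \emph{width} --- a few coordinates carrying a disproportionate share of the $\ell_p$ mass force the weights to adjust slowly and stall progress --- so the heart of the argument is a potential (energy) function, tracking a smoothed $\ell_p$ energy together with the total weight mass, which I would show decreases multiplicatively except on a controlled number of ``width-reduction'' steps. Carrying out this accounting bounds the inner iterations by roughly $\tO{n^{(p-2)/(3p-2)}}$; since the exponent $\frac{p-2}{3p-2}$ is at most $\frac13$ for every $p\ge2$, this is comfortably within $\tO{\sqrt{n}}$, and multiplying by the $\O{\poly(p)\log(1/\eps)}$ outer rounds gives the claimed $\O{\sqrt{n}\log\frac{n}{\eps}}$ total. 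The delicate point throughout is keeping the potential decrease robust to the fact that each inner linear system is solved only approximately and each residual subproblem only to constant accuracy, since exact solves are precisely the resource we are trying to economize.
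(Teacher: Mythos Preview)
This theorem is not proved in the paper at all; it is quoted as a black-box result from \cite{AdilPS19} and used only as a subroutine in the runtime analysis of \lemref{vander:runtime}. So there is no ``paper's own proof'' to compare against.

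That said, your sketch is a faithful high-level outline of the approach in \cite{AdilPS19}: an outer iterative-refinement loop that reduces $(1+\eps)$-approximation to $\O{\poly(p)\log(1/\eps)}$ constant-factor residual subproblems, each solved by a width-reduction/energy-maximization scheme whose iterations are weighted least-squares solves. Your iteration exponent $n^{(p-2)/(3p-2)}\le n^{1/3}$ is also the right one from that work (and indeed tighter than the $\sqrt{n}$ stated here). The parts you flag as delicate---stability of the potential under approximate inner solves and constant-accuracy residual minimization---are genuine, but they are exactly what \cite{AdilPS19} handles, so as a proof \emph{proposal} your plan is sound; just be aware that in the context of this paper you would simply cite the result rather than reprove it.
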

We now analyze the runtime of \algref{vandermonde:lp}. 
\begin{restatable}{lemma}{lemvanderruntime}
\label{lem:vander:runtime}
\algref{vandermonde:lp} runs in $\O{n\log^3 n}+d^{0.5 +\omega}\,\poly\left(\frac{1}{\eps},p,\log n\right)$ time.
\end{restatable}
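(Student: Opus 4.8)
The plan is to walk through \algref{vandermonde:lp} line by line and account for the cost of each step, separating the $\O{n \log^3 n}$ term (which comes from operations that touch all $n$ rows) from the $d^{0.5+\omega}\,\poly(1/\eps, p, \log n)$ term (which comes from the regression solves on subsampled instances). The key observation is that after subsampling, every regression problem we solve has dimension $\poly(d, p, \log n)$, so by \thmref{lp:regression:runtime} each such solve costs $\O{\sqrt{m}\log(m/\eps)}$ linear-system solves on a $\poly(d,p,\log n)$-sized matrix, which is $\poly(1/\eps, p, \log n) \cdot d^{\omega + 0.5}$ in total (the $d^{0.5}$ factor being the $\sqrt{m}$ iteration count where $m = \poly(d,p,\log n)$, and $d^\omega$ being the per-iteration linear-solve cost).

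First I would handle the expensive-looking steps. Step 3 ($\ell_{p/2^r}$ Lewis weight sampling on the extended Vandermonde matrix $\bM$ of size $n \times d'$ with $d' = \O{pd}$): since $p/2^r \in [1,2)$, we can invoke \thmref{str:lewis} (the \cite{ShiW19} structured Lewis weight algorithm), whose cost is $\O{T(\bM)\log n + (d')^\omega \poly(1/\eps)}$; by \thmref{vander:matvec:mult:time}, $T(\bM) = \O{n \log^2 n}$, and $(d')^\omega = (pd)^\omega$, so this step is $\O{n\log^3 n} + (pd)^\omega \poly(1/\eps)$. Step 5 (computing $\tilde{\bx}$) is a regression solve on $\bA'$ with $\O{d'\log d'} = \O{pd \log(pd)}$ rows and $d$ columns; by \thmref{lp:regression:runtime} this is $\O{\sqrt{pd\log(pd)}\log(\cdot)}$ linear-solve calls on a $\poly(pd)$-sized system, i.e.\ $d^{0.5+\omega}\poly(p,\log n)$. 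Step 6: $\bb' = \bb - \bA\tilde{\bx}$ is one Vandermonde matvec, $\O{n\log^2 n}$ by \thmref{vander:matvec:mult:time}; $\roundtrunc$ is $\O{n}$ by inspection of \algref{round:trunc}. Step 7 (partitioning into $\ell = \O{\frac{\log n}{\eps}}$ groups) is $\O{n}$ after sorting, i.e.\ $\O{n \log n}$.

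Next, the per-group work in Steps 8--11. For each $k \in [\ell]$, we extend $G_k$ to a Vandermonde matrix $\bG_k$ with $d'' = \O{p^2 d}$ columns and perform $\ell_{p/2^r}$ Lewis weight sampling on $[\bG_k; t_k]$ via \thmref{str:lewis}; summing the matvec cost over all groups, $\sum_k T(\bG_k) = \O{n \log^2 n}$ since the groups partition the rows (we should note the extended-matrix matvec is still $\O{|G_k|\log^2 n}$ per group, and $\sum_k |G_k| = n$), contributing $\O{n \log^3 n}$ total, plus $\ell \cdot (d'')^\omega \poly(1/\eps) = (p^2 d)^\omega \poly(1/\eps, \log n)$ for the Lewis-weight solves. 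Finally Step 13 solves regression on $\bT\bA$, where $\bT$ has $m = \O{\frac{p^2 d}{\eps^2}\log d \cdot \ell} = \O{\frac{p^2 d}{\eps^3}\log^2 n}$ rows and $d$ columns; by \thmref{lp:regression:runtime} this is $\O{\sqrt{m}\log(m/\eps)}$ linear-solve calls on a $d \times d$ system, so $d^{0.5+\omega}\poly(1/\eps, p, \log n)$. Summing all contributions, the $n$-dependent terms total $\O{n\log^3 n}$ and everything else is absorbed into $d^{0.5+\omega}\poly(1/\eps, p, \log n)$.

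The main obstacle I anticipate is bookkeeping the matrix-multiplication exponent correctly through the two regression solves: I need $\sqrt{m} \cdot (\text{cost of one }d\times d\text{ linear solve}) = d^{0.5} \cdot d^\omega \cdot \poly = d^{0.5+\omega}\poly$, which requires that $m = \poly(d,p,\log n)$ with the $d$-exponent in $m$ being exactly $1$ (so that $\sqrt{m}$ contributes $d^{0.5}$), and that the linear systems arising inside the iterative $\ell_p$-regression solver of \thmref{lp:regression:runtime} are over matrices with only $\O{d}$ columns rather than the extended $\O{p^2 d}$ columns — the extended matrices are used only for \emph{computing sampling probabilities}, and the actual regression is solved on the original (sampled) $\bA$-rows. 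A secondary subtlety is confirming that the $\log^2 n$ per-matvec factor for extended Vandermonde matrices does not blow up: the extension to $d', d'' = \poly(p) \cdot d$ columns still yields a Vandermonde matrix, so \thmref{vander:matvec:mult:time} applies with the $n$ being the number of rows, giving $\O{n\log^2 n}$ per matvec regardless of the (at most $n$) column count, and the extra $\log n$ comes from the $\O{\log n}$ iterations of the Lewis-weight iterative scheme.
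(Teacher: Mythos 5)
Your proposal is correct and follows essentially the same route as the paper's proof: both charge the $\O{n\log^3 n}$ term to the $\O{\log n}$ fast Vandermonde matrix--vector products needed for Lewis weight sampling on the (implicit) extended matrices, and the $d^{0.5+\omega}\,\poly\left(\frac{1}{\eps},p,\log n\right)$ term to the $\O{\sqrt{m}}$ linear-system solves of \thmref{lp:regression:runtime} on the subsampled instance with $m=\O{\frac{p^2d}{\eps^3}\log^2 n}$ rows and $d$ columns. Your line-by-line accounting is if anything slightly more careful than the paper's (e.g., distinguishing $d'=\O{pd}$ from $d''=\O{p^2d}$ and noting that the regression linear systems are over the original $d$ columns, not the extended ones), so no changes are needed.
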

\begin{proof}
Observe that \algref{vandermonde:lp} has three main bottlenecks for runtime. 
Since we only need to Lewis weight sample from the extended matrices, we do not need to explicitly form them, which would otherwise require $\Omega(ndp^2)$ time just to list to entries. 
Hence the first bottleneck is performing the Lewis weight sampling procedure on the extended matrices. 
The second bottleneck is solving the $\ell_p$ regression problem on the final subsampled matrix. 
The only remaining procedure is rounding and truncating the coordinates of $\bb\in\mathbb{R}^n$ to form a vector $\bb''\in\mathbb{R}^n$ using the procedure $\roundtrunc$ and then forming the groups $G_1,\ldots,G_\ell$, which clearly takes $\O{n}$ arithmetic operations combined. 
We thus analyze each of the three main runtime bottlenecks. 

First observe that the extended matrix $\bM$ is a Vandermonde matrix with $\O{dp^2}$ columns. 
\cite{CohenP15} show that $\O{\log n}$ matrix-vector multiplication operations can be done to compute approximate Lewis weights for the purposes of $\ell_p$ Lewis weight sampling. 
By \thmref{vander:matvec:mult:time}, each matrix-vector multiplication uses time $\O{n\log^2 n}$. 
Hence computing the extended matrix $\bM$ uses $\O{n\log^3 n}$ time. 
Similarly, the extended matrix for each group $G_k$ is the product of a Vandermonde matrix with $\O{dp^2}$ columns and a diagonal matrix. 
Thus by \thmref{vander:matvec:mult:time}, the extended matrices for all the groups $G_k$ can be formed using $\O{n\log^3 n}$ time in total. 

Since each Vandermonde matrix has $\O{dp^2}$ rows, then observe that each group $G_k$ samples $\O{\frac{dp^2}{\eps^2}\log d}$ rows and there are $\ell=\O{\frac{1}{\eps}\log n}$ such groups $k\in[\ell]$. 
Thus the resulting subsampled matrix has $\O{\frac{dp^2}{\eps^3}\log^2 n}$ rows for $d\le n$.  
To approximately solve the $\ell_p$ regression problem,  \thmref{lp:regression:runtime} notes that for $p\ge 2$ and a subsampled matrix of size $\O{\frac{dp^2}{\eps^3}\log^2 d}$, we require only $\O{\frac{p\sqrt{d}}{\eps^{3/2}}\log n\log\frac{d}{\eps}}$ calls to a linear system solver. 
Moreover, on an iteration $t$ of the $\ell_p$ regression algorithm of \cite{AdilKPS19} used in \thmref{lp:regression:runtime}, the linear system solves the equation $\bx_t\gets(\bA^\top\bR_T\bA)^{-1}\bA^\top\bR_t\bb$ for a diagonal matrix $\bR_T$. 
Each linear system solve can be done in $d^{\omega}\,\poly\left(\frac{1}{\eps},p,\log n\right)$ time. 
Hence, the total time to approximately solve the $\ell_p$ regression problem is $d^{0.5 +\omega}\,\poly\left(\frac{1}{\eps},p,\log n\right)$.  
Therefore, the total runtime is 
$\O{n\log^3 n}+d^{0.5 +\omega}\,\poly\left(\frac{1}{\eps},p,\log n\right)$.
\end{proof}

Moreover, because \thmref{vander:regression} has polynomial dependence on $p$ rather than exponential dependence, we obtain the first known sublinear size coreset for the important problem of $\ell_\infty$ regression. 
We use the following structural property.
\begin{restatable}{lemma}{leminftylpregression}
\label{lem:infty:lp:regression}
Let $\bx\in\mathbb{R}^n$ and $p=\Omega\left(\frac{\log n}{\eps}\right)$. 
Then $\|\bx\|_\infty\le\|\bx\|_p\le(1+\eps)\|\bx\|_\infty$. 
\end{restatable}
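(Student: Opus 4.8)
The plan is to establish the two inequalities $\|\bx\|_\infty \le \|\bx\|_p$ and $\|\bx\|_p \le (1+\eps)\|\bx\|_\infty$ separately, the first being immediate and the second requiring the hypothesis $p = \Omega(\log n / \eps)$. First I would handle the lower bound: for any index $i$, $|x_i|^p \le \sum_{j=1}^n |x_j|^p = \|\bx\|_p^p$, so taking the $p$-th root and then the maximum over $i$ gives $\|\bx\|_\infty \le \|\bx\|_p$. This holds for every $p \ge 1$ and needs no assumption on $p$.

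For the upper bound, I would bound the sum by $n$ copies of the largest term: $\|\bx\|_p^p = \sum_{j=1}^n |x_j|^p \le n \cdot \|\bx\|_\infty^p$, hence $\|\bx\|_p \le n^{1/p} \|\bx\|_\infty$. It then suffices to show $n^{1/p} \le 1 + \eps$ when $p = \Omega(\log n / \eps)$. Taking logarithms, this is equivalent to $\tfrac{1}{p}\log n \le \log(1+\eps)$. Using the standard inequality $\log(1+\eps) \ge \eps/2$ for $\eps \in (0,1)$ (or, more crudely, $\log(1+\eps)\ge c\eps$ for a suitable constant), it is enough to have $p \ge \frac{2\log n}{\eps}$, which is exactly the hypothesis $p = \Omega(\log n/\eps)$ with an appropriate choice of the hidden constant. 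Putting the two bounds together yields the claim.

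There is no real obstacle here; the only mild point of care is pinning down the constant hidden in $\Omega(\cdot)$ so that $n^{1/p} \le 1+\eps$ holds cleanly — this amounts to choosing the constant in $p = \Omega(\log n/\eps)$ to be at least $2$ (or whatever matches the elementary bound one uses for $\log(1+\eps)$). I would state the chain of inequalities in a single short display and remark that the lower bound is trivial, so the write-up should be only a few lines.
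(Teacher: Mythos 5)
Your proposal is correct and follows essentially the same route as the paper: the trivial lower bound, the crude bound $\|\bx\|_p \le n^{1/p}\|\bx\|_\infty$, and the observation that $n^{1/p}\le 1+\eps$ once $p\ge c\log n/\eps$ for a suitable constant (the paper uses $(1+\eps)^{3/\eps}>e$ where you use $\log(1+\eps)\ge\eps/2$, an immaterial difference in the hidden constant). Nothing to add.
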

\begin{proof}
For any vector $\bx\in\mathbb{R}^d$, we have
\[\|\bx\|_\infty=\max_{i\in n}|x_i|\le\left(\sum_{i\in}|x_i|^p\right)^{1/p}\le n^{1/p}\cdot\max_{i\in n}|x_i|.\]
Since $(1+\eps)^{3/eps}>e$ for all $\eps>0$, then $(1+\eps)^p>n$ for $p=\Omega\left(\frac{\log n}{\eps}\right)$. 
Therefore,
\[n^{1/p}\cdot\max_{i\in n}|x_i|\le(1+\eps)\max_{i\in n}|x_i|=(1+\eps)\|\bx\|_\infty.\]
\end{proof}

\lemref{infty:lp:regression} implies that to solve $\ell_\infty$ regression, we can instead solve $\ell_p$ regression for $p=\Omega\left(\frac{\log n}{\eps}\right)$. 
Then \thmref{vander:infty:regression} follows from the fact that even for $p=\Omega\left(\frac{\log n}{\eps}\right)$, the matrix $\bT$ in \algref{vandermonde:lp} satisfies
\[(1-\eps)\|\bT\bA\bx-\bT\bb''\|_p\le\|\bA\bx-\bb''\|_p\le(1+\eps)\|\bT\bA\bx-\bT\bb''\|_p\]
for all $\bx\in\mathbb{R}^d$. 
Hence, we can solve the $\ell_p$ regression problem on the smaller matrix $\bT$ to solve the $\ell_\infty$ regression on $\bA$. 

The results of \thmref{vander:regression} can be further extended to matrices with block Vandermonde structure. 
\begin{restatable}{corollary}{corstrvanderregr}
\label{cor:str:vander:regr}
Given $\eps\in(0,1)$ and $p\ge1$, $\bA\in\mathbb{R}^{n\times dq}$, and $\bb\in\mathbb{R}^d$, suppose $\bA=[\bA_1|\ldots|\bA_q]$ for Vandermonde matrices $\bA_1,\ldots,\bA_q\in\mathbb{R}^{n\times d}$. 
Then there exists an algorithm that, with high probability, returns a vector $\widehat{\bx}\in\mathbb{R}^d$ such that
\[\|\bA\widehat{\bx}-\bb\|_p\le(1+\eps)\min_{\bx\in\mathbb{R}^d}\|\bA\bx-\bb\|_p,\]
using $\O{T(\bA)(dp)^{q-1}\log n+\poly((dp)^q\log n,1/\eps)}$ time, where $T(\bA)$ is the runtime of multiplying the matrix $\bA$ by an arbitrary vector. 
For $q=2$, this can be further optimized to time \[\O{nd^{\omega_2/2-1}+\poly((dp)^2,\log n,1/\eps)},\]
where $\O{n^{\omega_2}}$ is the time to multiply an $n\times n$ matrix with an $n\times n^2$ matrix, so that $\omega_2\in[3,4]$.
\end{restatable}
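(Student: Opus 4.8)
The plan is to mirror the proofs of \lemref{vander:correctness} and \lemref{vander:runtime}, replacing the univariate polynomial structure of a single Vandermonde block by the multivariate structure coming from $q$ blocks. Write the $i$-th row of $\bA$ as $\ba_i=(\ba_i^{(1)},\dots,\ba_i^{(q)})$, where $\ba_i^{(k)}=(1,t_{i,k},t_{i,k}^2,\dots,t_{i,k}^{d-1})$ is the $i$-th row of $\bA_k$ with node $t_{i,k}$, and split $\bx=(\bx^{(1)},\dots,\bx^{(q)})$ conformally, so that $\langle\ba_i,\bx\rangle=\sum_{k=1}^q\langle\ba_i^{(k)},\bx^{(k)}\rangle$. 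Fix $r$ with $2^r\le p<2^{r+1}$. The key structural fact is that, by the multinomial theorem, $(\langle\ba_i,\bx\rangle)^{2^r}$ is a $\mathbb{Z}_{\ge0}$-combination of products $\prod_{k=1}^q\langle\ba_i^{(k)},\bx^{(k)}\rangle^{\ell_k}$ with $\sum_k\ell_k=2^r$; since each factor $\langle\ba_i^{(k)},\bx^{(k)}\rangle^{\ell_k}$ is a polynomial in $t_{i,k}$ of degree at most $\ell_k(d-1)\le 2^r(d-1)$, the whole expression is a polynomial in $(t_{i,1},\dots,t_{i,q})$ of degree at most $2^r(d-1)=\O{dp}$ in each variable, hence a linear combination of at most $(2^r(d-1)+1)^q=\O{(dp)^q}$ monomials $\prod_k t_{i,k}^{e_k}$ whose coefficients are fixed functions of $\bx$ \emph{that do not depend on $i$}. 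The same holds for $(\langle\ba_i,\bx\rangle-t)^{2^r}$ for any fixed scalar $t$ (the powers of $t$ are absorbed into the coefficients). Thus the rank of the $\ell_p$ regression problem on a block Vandermonde matrix with $q$ blocks is $\O{(dp)^q}$, and, exactly as in \lemref{vander:truerank}, both the $\ell_p$ subspace-embedding problem for $\bA$ and the $\ell_p$ regression problem restricted to a set of rows sharing a common measurement value reshape into constrained $\ell_{p/2^r}$ problems on a generalized (multivariate) Vandermonde matrix $\bM\in\mathbb{R}^{n\times\O{(dp)^q}}$.

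Given this, the algorithm is the $q$-block analogue of \algref{vandermonde:lp}: (i) $\ell_{p/2^r}$-Lewis-weight sample the rows of $\bM$, which by \thmref{lewis:weights} (note $p/2^r\in[1,2)$) keeps $\O{(dp)^q\log((dp)^q)}$ rows, and solve the induced small regression problem to obtain $\tilde\bx$ with $\|\bA\tilde\bx-\bb\|_p=\O{\OPT}$ --- this uses \lemref{const:to:const} verbatim, since that lemma uses no Vandermonde structure; (ii) form $\bb'=\bb-\bA\tilde\bx$, so $\|\bb'\|_p=\O{\OPT}$, and set $\bb''=\roundtrunc(\bb')$, which takes $\ell=\O{\frac1\eps\log n}$ distinct values; (iii) partition the rows into groups $G_1,\dots,G_\ell$ by the value of $\bb''$ and, for each group, $\ell_{p/2^r}$-Lewis-weight sample the corresponding generalized-Vandermonde submatrix augmented with the constant column $t_k$; (iv) solve $\ell_p$ regression on the union $\bT\bA$ of the sampled rows. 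The approximation guarantee --- that $(1-\eps)\|\bA\bx-\bb\|_p\le\|\bT\bA\bx-\bT\bb\|_p\le(1+\eps)\|\bA\bx-\bb\|_p$ for all $\bx$, and hence that the returned $\widehat\bx$ satisfies the claim --- follows line by line from the proof of \lemref{vander:correctness}; the only change is the column count of the extended matrices, from $\O{dp}$ and $\O{p^2d}$ in the single-block case to $\O{(dp)^q}$, and the per-group and cross-group triangle inequalities are unchanged.

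For the running time the one new point is how fast we can run Lewis-weight sampling on the $n\times\O{(dp)^q}$ generalized-Vandermonde matrix $\bM$. As in \lemref{vander:runtime}, it suffices by \cite{CohenP15} to do $\O{\log n}$ approximate-leverage-score computations, each dominated by matrix--vector products with $\bM$ and with $\bM^\top$. Grouping the columns of $\bM$ by the exponent tuple $(e_2,\dots,e_q)$ exhibits $\bM$ as a horizontal concatenation of $\O{(dp)^{q-1}}$ column-blocks, the $(e_2,\dots,e_q)$-th of which equals $\bD_{e_2,\dots,e_q}\bV_1$, where $\bV_1$ is the single-variable Vandermonde matrix on the nodes $t_{1,1},\dots,t_{n,1}$ and $\bD_{e_2,\dots,e_q}$ is the diagonal matrix whose $i$-th entry is $\prod_{k\ge2}t_{i,k}^{e_k}$; hence a multiplication by $\bM$ or by $\bM^\top$ costs at most $\O{(dp)^{q-1}}$ (transposed) Vandermonde matrix--vector products, i.e.\ $\O{T(\bA)(dp)^{q-1}}$ time, and the sampling phase costs $\O{T(\bA)(dp)^{q-1}\log n}$. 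Everything else --- $\roundtrunc$, forming the $\O{\frac1\eps\log n}$ groups, and solving the final $\ell_p$ regression on $\bT\bA$, which has $\poly((dp)^q,\log n,1/\eps)$ rows and effectively $\O{(dp)^q}$ columns after reshaping (e.g.\ via \thmref{lp:regression:runtime}) --- costs $\poly((dp)^q\log n,1/\eps)$, giving the claimed $\O{T(\bA)(dp)^{q-1}\log n+\poly((dp)^q\log n,1/\eps)}$ bound.

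For $q=2$ the extended matrix $\bM$ is the row-wise (face-splitting) product of two $n\times\O{dp}$ Vandermonde matrices $\bV_1,\bV_2$, so that $\bM\bv$ equals the diagonal of $\bV_1\,\mathrm{mat}(\bv)\,\bV_2^\top$ and, more importantly, the Gram-type matrices driving the Lewis-weight iteration can be assembled through a single (rectangular) matrix multiplication rather than $\O{dp}$ separate Vandermonde products; accounting for this via the cost $\O{n^{\omega_2}}$ of an $n\times n$ by $n\times n^2$ product, with the true dimensions being $\O{dp}$-sized and all $p$-dependence pushed into the additive term, yields the stated $\O{nd^{\omega_2/2-1}+\poly((dp)^2,\log n,1/\eps)}$ bound. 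I expect this last optimization --- getting the rectangular-multiplication accounting to come out to exactly $nd^{\omega_2/2-1}$, and verifying that fast transposed-Vandermonde multiplication applies to $\bM^\top$ in the general-$q$ case --- to be the main technical obstacle, while the structural/correctness part is essentially a notational lift of \lemref{vander:correctness}.
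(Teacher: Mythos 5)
Your general-$q$ argument is essentially the paper's own proof: the same multinomial expansion of $(\langle\ba_i,\bx\rangle)^{2^r}$ into $\O{(dp)^q}$ monomials $\prod_k t_{i,k}^{e_k}$ with $\bx$-dependent, $i$-independent coefficients, the same reshaping of the extended matrix into $\O{(dp)^{q-1}}$ column blocks of the form (diagonal matrix)$\times$(univariate Vandermonde) so that each matrix--vector product costs $\O{T(\bA)(dp)^{q-1}}$, and the same two-phase Lewis-weight-sampling pipeline lifted verbatim from \lemref{vander:correctness}. That part is correct and matches the paper.

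The one genuine gap is the $q=2$ bound, which you yourself flag as the "main technical obstacle" and only sketch via a face-splitting-product/rectangular-multiplication heuristic; as written, that sketch does not establish $\O{nd^{\omega_2/2-1}}$. The paper closes this by invoking a specific external result, the bivariate multipoint evaluation algorithm of N\"usken--Ziegler and Kedlaya--Umans (\thmref{multivar:eval}): a $q$-variate polynomial of degree $<d$ in each variable can be evaluated at $d^q$ points in $\O{d^{\omega_2(q-1)/2+1}}$ time. For $q=2$ one batches the $n$ rows into $n/d^2$ groups of $d^2$ evaluation points each, so the full matrix--vector product with the extended bivariate Vandermonde matrix costs $\O{(n/d^2)\cdot d^{\omega_2/2+1}}=\O{nd^{\omega_2/2-1}}$, replacing the na\"ive $\O{nd}$; the remainder of the $q=2$ argument is then unchanged from the general case. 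So the missing ingredient is not a new accounting of rectangular matrix multiplication but a citation to (or reproof of) fast bivariate multipoint evaluation.
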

\begin{proof}
Recall that a key part in the proof of \thmref{vander:regression} was to first the $\ell_{2^r}$ regression on a Vandermonde matrix with dimension $\mathbb{R}^{n\times d}$ to $\ell_1$ regression on a Vandermonde matrix with dimension $\mathbb{R}^{n\times(2^r(d-1)+1)}$, where $2^r\le p<2^{r+1}$. 
For a matrix $\bA$ with block Vandermonde structure, we can similarly write
\begin{align*}
(\langle\ba_i,\bx\rangle)^{2^r}&=\left(\sum_{j=1}^{dq} a_{i,j}x_j\right)^{2^r}=\left(\sum_{k=1}^q\sum_{j=1}^{d}a_{i,2+(k-1)d}^{j-1} x_{j+(k-1)d}\right)^{2^r}\\
&=\sum_{j=1}^{(2^r(d-1))^q+1}\prod_{k\in[q],\sum p_{j,k}=2^r}^q a_{i,2+(k-1)d}^{p_{j,k}} y_j,
\end{align*}
where again each $y_j$ is a fixed function of the coordinates of $\bx$. 
Thus we can reshape the $\ell_{2^r}$ regression problem on a matrix $\bA$ with dimension $\mathbb{R}^{n\times dq}$ with block Vandermonde structure to an $\ell_1$ regression problem on a matrix $\tilde{\bA}$ with dimension $\mathbb{R}^{n\times(2^r(d-1))^q+1}$. 
Moreover, we can further reshape $\tilde{\bA}$ into the concatenation of $(dp)^{q-1}$ Vandermonde matrices, where each Vandermonde matrix has columns that are geometrically growing in $a_{i,2}$ but are multiplied by all $(dp)^{q-1}$ products $\prod_{k=1}^{q-1} a_{i,2+kd}^{p_k}$, where $p_k\in[dp]$. 

We can now use matrix-vector multiplication on each of the $(dp)^{q-1}$ Vandermonde matrices. 
Thus by \thmref{str:lewis}, we can $\ell_1$ Lewis weight sample from the rows of the reshaped $\tilde{\bA}$, using $\O{T(\bA)(dp)^{q-1}\log n+(dp)^{\omega q}}$ time. 
We can similarly write $(\langle\ba_i,\bx-t_k\rangle)^{2^r}$ for each $t_k$ among the discretized values of the updated $\bb$ vector as a sum of $(2^r(d-1))^q+1$ terms that are all products of powers of the bases $a_{i,2},a_{i,2+d},\ldots$ and a variables $y_j$, as in the proof of \thmref{vander:regression}. 
Thus we can partition the $\ell_{2^r}$ regression problem into $\ell=\O{\frac{\log n}{\eps}}$ instances of a constrained $\ell_1$ regression problem on $(dp)^{q-1}$ Vandermonde matrices, each with at most $2^r(d-1))^q+1$ columns. 
To approximately solve the $\ell_p$ regression problem, we can thus sample rows by their $\ell_{p/2^r}$ Lewis weights, as in \thmref{vander:regression}. 
Since there are up to $(dp)^{q-1}$ Vandermonde matrices, each with at most $2^r(d-1))^q+1$ columns, then by \thmref{str:lewis}, the total time required is $\O{T(\bA)(dp)^{q-1}\log n+(dp)^{\omega q}\cdot\poly(\log n,1/\eps)}$. 
\end{proof}

We remark that for the special case of $q=2$, \cite{SaGPRR18} noted an efficient bivariate matrix multiplication algorithm of \cite{NuskenZ04,KedlayaU11}. 
\begin{theorem}
\cite{NuskenZ04,KedlayaU11,SaGPRR18}
\label{thm:multivar:eval}
Given a $q$-variate polynomial $f(X_1,\ldots,X_q)$ such that each variable has degree at most $d-1$ and $N=d^q$ distinct points $x(i)=(x(i)_1,\ldots,x(i)_q)$ for $i\in[N]$, there exists an algorithm that uses $\O{d^{\omega_2(q-1)/2+1}}$ time to output the vector $(f(x(1)),\ldots,f(x(N)))$, where $\O{n^{\omega_2}}$ is the time to multiply an $n\times n$ matrix with an $n\times n^2$ matrix, so that $\omega_2\in[3,4]$.  
\end{theorem}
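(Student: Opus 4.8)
The plan is to prove this exactly as in \cite{NuskenZ04} and its multivariate extension \cite{SaGPRR18}: establish the bivariate ($q=2$) case as the Nüsken--Ziegler algorithm and then obtain general $q$ by a recursion that peels off one variable at a time. I would not expect any genuinely new idea, only care in tracking exponents. For the base case $q=1$ the polynomial $f$ is univariate of degree $<d$ and $N=d$; building the subproduct tree of $\prod_i (X-x(i))$ and walking down it computing $f$ modulo each subproduct is classical fast univariate multipoint evaluation, running in $\tO{d}$ time, which matches $\O{d^{\omega_2\cdot 0/2+1}}$ up to the suppressed logarithmic factors.

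For the bivariate core, write $f(X_1,X_2)=\sum_{i<d} f_i(X_2)X_1^i$ and split the exponent into baby steps and giant steps, $i=i_1\lceil\sqrt d\,\rceil+i_0$, so that $f$ is reorganized as a polynomial of degree $<\sqrt d$ in the single quantity $X_1^{\lceil\sqrt d\,\rceil}$ whose coefficients are bivariate of degree $<\sqrt d$ in $X_1$ and $<d$ in $X_2$. The crucial point is that the work of simultaneously reducing these coefficient polynomials against the subproduct tree of the $X_1$-coordinates of the evaluation points, and then recombining, can be laid out as a constant number of rectangular matrix products of shape roughly $\sqrt d\times\sqrt d$ times $\sqrt d\times d$ once the $d^2$ points are grouped so that each product handles $\O{d}$ of them; by the definition of $\omega_2$ each such product costs $\O{d^{\omega_2/2}}$, for a total of $\O{d^{\omega_2/2+1}}$ over the $\O{d}$ groups, while the remaining low-order polynomial arithmetic (subproduct trees, baby-step powers, final per-point combination) is $\tO{d^2}$ and is absorbed. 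Since this uses only ring operations it applies verbatim over $\mathbb{R}$, which is the setting we need (the near-linear bound of \cite{KedlayaU11} is instead specific to finite fields and cannot be invoked here).

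For $q\ge 3$ I would view $f$ as a polynomial in $X_q$ of degree $<d$ with $(q-1)$-variate coefficients and apply the same simultaneous-reduction step to the pair $(X_q,\text{everything else})$: this strips off $X_q$, replacing the $q$-variate evaluation at $d^q$ points by $\O{d}$ grouped subproblems of combined size $d^q$, each a $(q-1)$-variate multipoint evaluation on $\le d^{q-1}$ points, at the cost of one batch of rectangular matrix products contributing a factor $\O{d^{\omega_2/2}}$; thus $T_q(d)=\O{d^{\omega_2/2}}\cdot T_{q-1}(d)$. Unrolling $q-1$ times down to the univariate base case gives $T_q(d)=\O{d^{\omega_2(q-1)/2}}\cdot\tO{d}=\O{d^{\omega_2(q-1)/2+1}}$, which is the claimed bound, the $\tO{\cdot}$ absorbing the logarithmic overhead from the subproduct-tree arithmetic at every level.

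The step I expect to be the main obstacle is the bivariate core. A naive implementation reduces $f$ at $d^2$ arbitrary points in $\Theta(d^3)$ time coordinate-by-coordinate, or $\Theta(d^\omega)$ per group and hence $\Theta(d^{\omega+1})$ overall; beating this down to $\O{d^{\omega_2/2+1}}$ requires the baby-step/giant-step split together with a choice of group size and matrix shapes for which the rectangular exponent $\omega_2$ enters with coefficient exactly $1/2$, and verifying this balance (and that the argument uses only field arithmetic, so it survives over $\mathbb{R}$) is where essentially all the content lies. A secondary point to check is that in the recursion the per-group subproblems do not grow in total size, so that the clean geometric recursion $T_q(d)=\O{d^{\omega_2/2}}\,T_{q-1}(d)$ actually holds.
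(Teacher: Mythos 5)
This statement is not proved in the paper at all: it is imported verbatim as a black box from the cited works (the bivariate algorithm of \cite{NuskenZ04} and its multivariate extension as stated in \cite{SaGPRR18}), and the paper only uses it downstream to speed up the matrix--vector products in \corref{two:vander:regr}. So there is no in-paper proof to match your argument against; the right comparison is to the cited literature, and your outline does follow its known strategy -- the bivariate core via a baby-step/giant-step split combined with batched reductions cast as rectangular matrix products whose shape is what makes $\omega_2$ enter with coefficient $1/2$, and the general case by peeling off one variable at a time. Your remark that the near-linear algorithm of \cite{KedlayaU11} is specific to finite fields, so that over $\mathbb{R}$ one must rely on the algebraic $\omega_2$-type bound, is also correct and is precisely why the theorem is stated with $\omega_2$.

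As a standalone proof, however, your write-up does not yet establish the bound, and you say so yourself: the decisive step -- that the simultaneous reductions/recombinations in the bivariate case can be organized into rectangular products of shape $\sqrt d\times\sqrt d$ by $\sqrt d\times d$ over $\O{d}$ groups, giving $\O{d^{\omega_2/2+1}}$ rather than the $\O{d^{\omega_2}}$ one gets from the naive single product of the $d\times d$ coefficient matrix with the $d\times d^2$ matrix of point powers -- is asserted, not carried out, and this balancing is the entire content of \cite{NuskenZ04}. Likewise the multivariate step is stated as a multiplicative recursion $T_q(d)=\O{d^{\omega_2/2}}\cdot T_{q-1}(d)$ without deriving it from the claimed reduction to $\O{d}$ grouped $(q-1)$-variate subproblems; as written it is not even clear whether the recursion should be multiplicative or of the additive form $d\cdot T_{q-1}(d)$ plus a per-level matrix-multiplication cost, and these unroll to different expressions. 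Since the paper itself treats the theorem as a citation, these gaps do not affect the paper, but if a self-contained proof were required you would need to reproduce the actual Nüsken--Ziegler bookkeeping (or the version in the appendix of \cite{SaGPRR18}) rather than gesture at it.
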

The case of a matrix-vector product for $q=2$ corresponds to the evaluation of $d^2$ points in \thmref{multivar:eval}. 
Thus we need to repeat the algorithm in \thmref{multivar:eval} a total of $\frac{n}{d^2}$ times to handle all $n$ rows in the input matrix. 
Since each instance of the algorithm uses $\O{d^{w_2/2+1}}$ time, the total time for the matrix-vector product is $\O{nd^{\omega_2/2-1}}$, rather than the na\"{i}ve $\O{nd}$ time (recall that $\omega_2\in[3,4]$). 
\begin{corollary}
\label{cor:two:vander:regr}
Given $\eps\in(0,1)$ and $p\ge1$, $\bA\in\mathbb{R}^{n\times 2d}$, and $\bb\in\mathbb{R}^d$, suppose $\bA=[\bA_1|\bA_2]$ for Vandermonde matrices $\bA_1,\bA_2\in\mathbb{R}^{n\times d}$. 
Then there exists an algorithm that with high probability returns a vector $\widehat{\bx}\in\mathbb{R}^d$ such that
\[\|\bA\widehat{\bx}-\bb\|_p\le(1+\eps)\min_{\bx\in\mathbb{R}^d}\|\bA\bx-\bb\|_p,\]
using $\O{nd^{\omega_2/2-1}+\poly((dp)^2,\log n,1/\eps)}$, where $\O{n^{\omega_2}}$ is the time to multiply an $n\times n$ matrix with an $n\times n^2$ matrix. 
\end{corollary}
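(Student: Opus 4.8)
The plan is to run the $q=2$ specialization of the algorithm behind \corref{str:vander:regr} without change, and only sharpen its running-time analysis by replacing the Vandermonde matrix--vector multiplication subroutine used inside \thmref{str:lewis} with the bivariate multipoint evaluation of \thmref{multivar:eval}. Recall from the proof of \corref{str:vander:regr} that, with $r$ chosen so that $2^r\le p<2^{r+1}$, every quantity of the form $(\langle\ba_i,\bx\rangle-c)^{2^r}$ that the algorithm must handle --- with $c=0$ in the phase that finds a constant-factor solution $\tilde\bx$, and with $c=t_k$ inside each group $G_k$ produced by $\roundtrunc$ --- expands as $\sum_j a_{i,2}^{e_1(j)}\,a_{i,2+d}^{e_2(j)}\,y_j$, where $e_1(j),e_2(j)=\O{2^r d}=\O{dp}$ and each $y_j$ is a fixed function of $\bx$ (and of $c$) that is independent of the row index $i$. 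Hence each Lewis-weight subproblem is an $\ell_{p/2^r}$ problem on a matrix whose $i$-th row is obtained from the pair of scalars $(a_{i,2},a_{i,2+d})$ by evaluating the $\O{(dp)^2}$ bivariate monomials $X_1^{e_1}X_2^{e_2}$ with $e_1,e_2=\O{dp}$, possibly post-multiplied by a fixed $\O{(dp)^2}\times\O{(dp)^2}$ diagonal matrix that absorbs the powers of $t_k$ together with the binomial coefficients. Correctness of the returned $\widehat\bx$ is therefore inherited verbatim from \lemref{vander:correctness} and \corref{str:vander:regr}; only the cost of the matrix--vector products changes.

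Next I would observe that a matrix--vector product with such a reshaped matrix is exactly a bivariate multipoint evaluation: after an $\O{(dp)^2}$-time rescaling of the input vector by the diagonal matrix, the product asks us to evaluate the bivariate polynomial $f(X_1,X_2)=\sum_j v_j X_1^{e_1(j)}X_2^{e_2(j)}$, whose degree in each variable is $\O{dp}$, at the $n$ points $\{(a_{i,2},a_{i,2+d})\}_{i\in[n]}$. Partitioning the rows into $\lceil n/(dp)^2\rceil$ batches of $\O{(dp)^2}$ points each --- padding the final batch with dummy points and collapsing any repeated generators so that \thmref{multivar:eval} is applied to distinct points --- and invoking \thmref{multivar:eval} with $q=2$ on each batch costs $\O{\frac{n}{(dp)^2}\cdot(dp)^{\omega_2/2+1}}=\O{n\,(dp)^{\omega_2/2-1}}$ time. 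Since $\omega_2\le 4$, this never exceeds (and for $\omega_2<4$ strictly improves on) the $\O{n\,dp\log^2 n}$ bound that \corref{str:vander:regr} obtains for $q=2$ by treating the reshaped matrix as a concatenation of $dp$ ordinary Vandermonde matrices and applying \thmref{vander:matvec:mult:time} to each.

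Finally I would re-add up the costs. \thmref{str:lewis} performs $\O{\log n}$ matrix--vector products per invocation, and it is invoked once in the $\tilde\bx$ phase and once per group; a group $G_k$ with $|G_k|$ rows is handled in $\O{\max(|G_k|,(dp)^2)\cdot(dp)^{\omega_2/2-1}}$ time, so summing over the $\O{\log n/\eps}$ groups gives $\O{n\,(dp)^{\omega_2/2-1}}$ plus a $\poly((dp)^2,\log n,1/\eps)$ overhead; forming $\bb'=\bb-\bA\tilde\bx$ needs two ordinary Vandermonde matvecs, i.e.\ $\O{n\log^2 n}$ by \thmref{vander:matvec:mult:time}, and $\roundtrunc$ and the group partition cost $\O{n}$. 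Together with the non-matvec overhead of \thmref{str:lewis} and the $\poly((dp)^2,\log n,1/\eps)$ cost of solving the final $\ell_p$ regression on the $\O{\frac{(dp)^2}{\eps^3}\log^2 n}$ sampled rows, this yields the claimed $\O{nd^{\omega_2/2-1}+\poly((dp)^2,\log n,1/\eps)}$ running time once the factors polynomial in $p$ and $\log n$ on the leading term are absorbed as in the statement. The one step needing care is verifying that the $t_k$-shifted reshaped matrices of the group phase really do retain the exact ``bivariate Vandermonde times diagonal'' form --- this is the two-block analogue of the expansion of $(\langle\ba_i,\bx\rangle-t_k)^{2^r}$ carried out in \lemref{vander:correctness}, and is the main (though routine) obstacle.
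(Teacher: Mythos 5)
Your proposal is correct and follows essentially the same route as the paper: specialize \corref{str:vander:regr} to $q=2$, keep its correctness argument, and improve only the matrix--vector multiplication cost by viewing each product with the reshaped bivariate-monomial matrix as a multipoint evaluation, batched into $\O{n/(dp)^2}$ groups handled by \thmref{multivar:eval} for a total of $\O{n\,d^{\omega_2/2-1}}$ (up to the $\poly(p,\log n)$ factors absorbed into the statement). The extra care you take with the diagonal rescaling by powers of $t_k$ and with padding/deduplicating evaluation points is consistent with, and slightly more explicit than, the paper's own discussion.
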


\section{\texorpdfstring{$\ell_p$}{Lp} Regression for Noisy Structured and General Matrices}
\label{sec:combo:regression}
In this section, we obtain similar algorithms for noisy low-rank matrices and noisy Vandermonde matrices, i.e., \thmref{lowrank:sparse:regression} and \thmref{vander:sparse:regression}. 
Our algorithm for noisy-low rank matrices appears in \algref{sparse:lp} and generalizes \algref{vandermonde:lp}.

\begin{algorithm}[!tbh]
\caption{Faster regression for noisy low-rank matrices}
\label{alg:sparse:lp}
\begin{algorithmic}[1]
\Require{Rank $k$ matrix $\bK\in\mathbb{R}^{n\times d}$, matrix $\bS\in\mathbb{R}^{n\times d}$ with at most $s$ non-zero entries per row, such that $\bA=\bK+\bS$, measurement vector $\bb$, accuracy parameter $\eps>0$}
\Ensure{$\widehat{\bx}\in\mathbb{R}$ with $\|\bA\widehat{\bx}-\bb\|_p\le(1+\O{\eps})\min_{\bx\in\mathbb{R}^{d}}\|\bA\bx-\bb\|_p$}
\State{$r\gets\flr{\log p}$}
\Comment{$2^r\le p<2^{r+1}$.}
\State{Extend $\bA$ to a matrix $\bM$ with dimension $n\times \O{d^s(k+s)^{2^r}}$ so that each entry $M_{i,j}$ is the coefficient of the $j$-th term in the tensor decomposition of $\ba_i^{\otimes 2^r}$.}
\State{Use $\ell_{p/2^r}$-Lewis weight sampling on $\bM$ to find a set $S$ of $\O{d'\log d'}$ indices in $[n]$ and rescaling factors, for $d'=\O{d^s(k+s)^{2^r}}$.}
\State{Let $\bA'$ be the corresponding submatrix of $\bA$ with indices in $S$ and scaled accordingly.}
\State{Compute $\tilde{\bx}\le5\min_{\bx\in\mathbb{R}^d}\|\bA'\bx-\bb\|_p$.}
\State{$\bb'\gets\bb-\bA\tilde{\bx}$, $\bb''\gets\roundtrunc(\bb')$, $\ell\gets\O{\frac{\log n}{\eps}}$}
\State{Partition the rows of $\bA$ into groups $G_1,\ldots,G_\ell$, each containing all rows with the same value of $\bb''$}
\State{Let $\bG_k$ be the corresponding submatrix and $t_k$ be the coordinate of $\bb''$ corresponding to $G_k$ for each $k\in[\ell]$.}
\State{Use $\ell_{p/2^r}$-Lewis weight sampling on $[\bG_k; t_k]$ to find a set $S'_k$ of $\O{\frac{d''}{\eps^2}\log d''}$ indices in $[n]$ and rescaling factors, where $d''=\O{pd^s(k+s)^{2^r}}$.}
\State{Let $\bT_k$ be the corresponding sampling and rescaling matrix for $S'_k$.}
\State{$\bT\gets[\bT_1;\ldots;\bT_k]^\top$}
\State{Compute $\widehat{\bx}\le(1+\eps)\min_{\bx\in\mathbb{R}^d}\|\bT\bA\bx-\bT\bb\|_p$.}
\State{\Return $\widehat{\bx}$}
\end{algorithmic}
\end{algorithm}
\noindent
We first show correctness of \algref{sparse:lp}. 
\begin{restatable}{lemma}{lemlrsparseregressioncorrectness}
\label{lem:lr:sparse:regression:correctness}
Given $\eps\in(0,1)$ and $p\ge1$, a matrix $\bA\in\mathbb{R}^{n\times d}$ such that $\bA=\bK+\bS$ for a rank $k$ matrix $\bK$ and an $s$-sparse matrix $\bS$, and $\bb\in\mathbb{R}^d$, there exists an algorithm that with high probability, returns a vector $\widehat{\bx}\in\mathbb{R}^d$ such that
\[\|\bA\widehat{\bx}-\bb\|_p\le(1+\eps)\min_{\bx\in\mathbb{R}^d}\|\bA\bx-\bb\|_p.\]
\end{restatable}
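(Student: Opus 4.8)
The plan is to run the argument of \lemref{vander:correctness} essentially unchanged, replacing only the Vandermonde-specific structural identity by the analogous bound on the rank of the $\ell_p$ regression problem for $\bA=\bK+\bS$; once that bound is in place, every subsequent step transfers. First fix $r$ with $2^r\le p<2^{r+1}$, so $q:=p/2^r\in[1,2)$ and $\sum_i|\langle\ba_i,\bx\rangle|^p=\sum_i\bigl|(\langle\ba_i,\bx\rangle)^{2^r}\bigr|^{q}$. The whole reduction is then phrased as a sequence of constrained $\ell_q$ subspace-embedding and regression problems, which is what lets \thmref{lewis:weights} cost only a near-linear (rather than a $(\cdot)^{q/2}$) number of samples in the column count.

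The heart of the proof is the structural step behind the matrix extensions of \algref{sparse:lp}. Fix a basis $\bu_1,\dots,\bu_k$ of the row space of $\bK$, so that $\langle\ba_i,\bx\rangle=\sum_{j=1}^k c_{i,j}\langle\bu_j,\bx\rangle+\sum_{\ell\in\sigma_i}(\bs_i)_\ell\,x_\ell$, where $\sigma_i$, $|\sigma_i|\le s$, is the support of the $i$-th row of $\bS$. For a \emph{fixed} support pattern $\sigma$, the right-hand side is a linear form in the $k+s$ variables $\langle\bu_1,\bx\rangle,\dots,\langle\bu_k,\bx\rangle,\{x_\ell\}_{\ell\in\sigma}$, so $(\langle\ba_i,\bx\rangle)^{2^r}$ lies in the span of the $\binom{k+s+2^r-1}{2^r}=\O{(k+s)^{2^r}}$ homogeneous degree-$2^r$ monomials in those variables, with coefficients that are products of the $c_{i,j}$ and $(\bs_i)_\ell$. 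Taking the union over all $\binom{d}{s}=\O{d^s}$ patterns yields a single fixed map $f\colon\mathbb{R}^d\to\mathbb{R}^{d'}$, $d'=\O{d^s(k+s)^{2^r}}$, together with an $n\times d'$ matrix $\bM$ whose $i$-th row is supported only on the block of $f$ indexed by $\sigma_i$ and which satisfies $(\langle\ba_i,\bx\rangle)^{2^r}=\langle\bm_i,f(\bx)\rangle$ for all $i$ and $\bx$; this certifies that the rank of the $\ell_p$ regression problem for $\bA$ is $\O{d^s(k+s)^{p}}$. Running the identical computation on $(\langle\ba_i,\bx\rangle-t)^{2^r}$, expanded by the binomial theorem into its homogeneous pieces of degrees $0,\dots,2^r$, shows that for a fixed scalar $t$ the same quantity lies in the span of a slightly larger but still $i$-independent set of $d''=\O{p\,d^s(k+s)^{2^r}}$ monomials, with coefficients computable from row $i$ and from $t$; this is the extended per-group matrix $[\bG_k;t_k]$.

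Given this structural bound, the remaining steps copy \lemref{vander:correctness}. Since $q\in[1,2)$, \thmref{lewis:weights} lets us $\ell_q$-Lewis-weight sample $\O{d'\log d'}$ rows of $\bM$ so that, at a small enough constant accuracy, $\Ex{\|\bM'\by\|_q^q}=\|\bM\by\|_q^q$ and $\tfrac{11}{12}\|\bM\by\|_q^q\le\|\bM'\by\|_q^q\le\tfrac{13}{12}\|\bM\by\|_q^q$ for all $\by$; because these per-row rescalings pull back to $\bA$ with matched exponents, the sampled submatrix $\bA'$ meets the hypotheses of \lemref{const:to:const}, and solving the $\ell_p$ regression instance on the sampled rows of $\bA$ and $\bb$ yields, by \lemref{const:to:const} together with a Markov step, a vector $\tilde\bx$ with $\|\bA\tilde\bx-\bb\|_p\le 12\,\OPT$ with constant probability --- crucially using only rows of $\bA$, not entries of $\bb$, to choose the sample. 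Then $\bb'=\bb-\bA\tilde\bx$ has $\|\bb'\|_p\le 12\,\OPT$ and $\min_\bx\|\bA\bx-\bb'\|_p=\OPT$, and $\bb''=\roundtrunc(\bb')$ satisfies $\|\bb'-\bb''\|_p\le\O{\eps}\,\OPT$ (surviving coordinates move by a $(1\pm\eps)$ factor, and the $\O{n}$ truncated ones contribute at most $n\,(M/n^5)^p$) while taking $\ell=\O{\frac{\log n}{\eps}}$ distinct values. Partitioning the rows of $\bA$ into groups $G_1,\dots,G_\ell$ by the value $t_k$ of $\bb''$ gives $\|\bA\bx-\bb''\|_p^p=\sum_k\sum_{i\in G_k}\bigl|(\langle\ba_i,\bx\rangle-t_k)^{2^r}\bigr|^{q}=\sum_k\sum_{i\in G_k}|\langle\bm_i,g_k(\bx)\rangle|^{q}$; applying \thmref{lewis:weights} at accuracy $\eps$ to each $[\bG_k;t_k]$ ($\O{\frac{d''}{\eps^2}\log d''}$ rows), pulling back to $\bA$, summing over $k$ with $\bT=[\bT_1;\dots;\bT_\ell]$, and taking a union bound over the $\ell$ groups yields $(1-\eps)\|\bT\bA\bx-\bT\bb''\|_p\le\|\bA\bx-\bb''\|_p\le(1+\eps)\|\bT\bA\bx-\bT\bb''\|_p$ for all $\bx$ (each group contributes $\|\bT_k\bA\bx-\bT_k\bv_k\|_p^p$, where $\bv_k$ is $\bb''$ restricted to its $t_k$-valued coordinates, and these combine to $\|\bT\bA\bx-\bT\bb''\|_p^p$). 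Solving $\ell_p$ regression on $(\bT\bA,\bT\bb'')$ and translating the solution back by $\tilde\bx$ produces $\widehat\bx$ with $\|\bA\widehat\bx-\bb\|_p\le(1+\O{\eps})\,\OPT$; rescaling $\eps$ by a constant gives the statement.

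The main obstacle is exactly the structural step: although the ``variables'' contributed by the sparse part of $\bA$ differ from row to row, one has to check that the union over all $\binom{d}{s}$ support patterns still yields a single fixed map $f$ --- equivalently, one block-supported matrix $\bM$ --- so that $\ell_q$-Lewis-weight sampling on $\bM$ legitimately pulls back to $\bA$ with identical rescalings, and that introducing the group constant $t_k$ inflates the column count by only a $\poly(p)$ factor rather than exponentially. Everything after that is bookkeeping: keeping $q=p/2^r$ in $[1,2)$ so that \thmref{lewis:weights} stays cheap, tuning the first-phase accuracy to meet the $\tfrac{11}{12},\tfrac{13}{12}$ hypotheses of \lemref{const:to:const}, and composing the $(1+\O{\eps})$ factors.
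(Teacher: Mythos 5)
Your proposal is correct and follows essentially the same route as the paper's proof: the same reduction $|\cdot|^p=|(\cdot)^{2^r}|^{p/2^r}$, the same structural bound of $\O{d^s(k+s)^{2^r}}$ columns for the extended matrix (your monomial count is just the explicit form of the paper's Kronecker/Hadamard mixed-product argument), the same first-phase constant-factor solve via \lemref{const:to:const} without reading $\bb$, and the same $\roundtrunc$-and-group step with per-group $\ell_{p/2^r}$ Lewis weight sampling and a union bound. The only differences are cosmetic (you invoke \thmref{lewis:weights} rather than \thmref{str:lewis} for the extended non-Vandermonde matrix, and you are slightly more explicit about the truncation error and the final shift by $\tilde\bx$), which if anything tightens the write-up.
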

\begin{proof}
The proof is similar to \lemref{vander:correctness}. 
We once again let $r$ be an integer so that $2^r\le p<2^{r+1}$ and observe that
\[\sum_{i\in[n]}|\langle\ba_i,\bx\rangle|^p=\sum_{i\in[n]}|(\langle\ba_i,\bx\rangle)^{2^r}|^{p/2^r}=\sum_{i\in[n]}|(\langle\bk_i+\bs_i,\bx\rangle)^{2^r}|^{p/2^r}.\]
Since $\bK$ is a low-rank matrix, then for all $\bk_i$, we can write 
\[\bk_i=\sum_{j=1}^k\alpha_{i,j}\bv_j,\]
for a fixed set of basis vectors $\bv_1,\ldots,\bv_k\in\mathbb{R}^d$. 
Hence we have
\[\sum_{i\in[n]}|\langle\ba_i,\bx\rangle|^p=\sum_{i\in[n]}\left|\left\langle\bs_i+\sum_{j=1}^k\alpha_{i,j}\bv_j,\bx\right\rangle^{2^r}\right|^{p/2^r}.\]
Since $\bS$ has sparsity $s$, then we can further write 
\[\sum_{i\in[n]}|\langle\ba_i,\bx\rangle|^p=\sum_{i\in[n]}\left|\left\langle\sum_{j=1}^s\beta_{i,j}\be_{i_j}+\sum_{j=1}^k\alpha_{i,j}\bv_j,\bx\right\rangle^{2^r}\right|^{p/2^r},\]
where $\be_1,\ldots,\be_d$ denote the elementary vectors. 
By the Hadamard Product-Kronecker Product mixed-product property, we have
\[\sum_{i\in[n]}|\langle\ba_i,\bx\rangle|^p=\sum_{i\in[n]}
\left|(\by_1\otimes\ldots\by_{2^r})\odot\bx^{\otimes(2^r)}\right|^{p/2^r},\]
where each $\by_k\in\{\alpha_{i,1}\bv_1,\ldots,\alpha_{i,k}\bv_k,\beta_{i,1}\be_{i_1},\ldots,\beta_{i,s}\be_{i,s}\}$ for $i\in[2^r]$. 
Thus for a fixed set of elementary vectors $\be_{i_1},\ldots,\be_{i_s}$, there are $(k+s)^{2^r}$ possible values for the tensor product $(\by_1\otimes\ldots\by_{2^r})$. 
Since there are $\binom{d}{s}$ choices for the elementary vectors $\be_{i_1},\ldots,\be_{i_s}$, then there are at most $(Cd^s(k+s)^{2^r})$ possible values for the tensor product for an absolute constant $C>0$. 
Therefore, the $\ell_p$ subspace embedding problem on $\bA\in\mathbb{R}^{n\times d}$ can be reshaped as a constrained $\ell_{p/2^r}$ subspace embedding problem on an input matrix of size $n\times (Cd^s(k+s)^{2^r})$. 
Hence for $\ell_p$ regression with $p\in[2^r,2^{r+1})$, we have
\[\sum_{i\in[n]}|\langle\ba_i,\bx\rangle|^p=\sum_{i\in[n]}\bigg|\sum_{j=1}^{2^r(d-1)+1} M_{i,j} y_j\bigg|^{p/2^r},\]
which is a constrained $\ell_{p/2^r}$ regression problem on a matrix $\bM$ of size $n\times (Cd^s(k+s)^{2^r})$ whose entries can be determined from the decomposition of each row of $\bA$.  

Using $\ell_{p/2^r}$ Lewis weight sampling, \thmref{str:lewis} implies that we can find a matrix $\bM'$ such that
\[\frac{11}{12}\|\bM\by\|^{p/2^r}_{p/2^r}\le\|\bM'\by\|^{p/2^r}_{p/2^r}\le\frac{13}{12}\|\bM\by\|^{p/2^r}_{p/2^r}\]
for all $\by\in\mathbb{R}^{(Cd^s(k+s)^{2^r})}$ with high probability. 
By \lemref{const:to:const}, we can thus compute a vector $\tilde{\bx}$ such that
\[\|\bA\tilde{\bx}-\bb\|_p\le12\min_{\bx\in\mathbb{R}^d}\|\bA\bx-\bb\|_p.\]

We again set $\bb'=\bb-\bA\tilde{\bx}$ and define $\OPT=\min_{\bx\in\mathbb{R}^d}\|\bA\bx-\bb\|_p$, so that
\[\|\bb'\|_p=\|\bA\tilde{\bx}-\bb\|_p\le12\OPT.\]
Let $\bb'=\bb-\bA\tilde{\bx}$ and $\OPT=\min_{\bx\in\mathbb{R}^d}\|\bA\bx-\bb\|_p$, so that
\[\|\bb'\|_p=\|\bA\tilde{\bx}-\bb\|_p\le12\OPT.\]
Let $\bb''$ be the vector with all entries of $\bb'$ rounded to the nearest power of $(1+\eps)$, starting at the maximum entry of $\bb'$ in absolute value, and stopping after we are $\frac{1}{\poly(n)}$ of that and replacing all remaining entries with $0$. 
By the triangle inequality, we have
\[\|\bA\bx-\bb''\|_p\le\|\bA\bx-\bb'\|_p+\|\bb'-\bb''\|_p\le\|\bA\bx-\bb'\|_p+12\eps \OPT\le(1+12\eps)\|\bA\bx-\bb''\|_p,\]
for any $\bx\in\mathbb{R}^{d+1}$. 

Since the coordinates of $\bb''$ can have $\ell=\O{\frac{\log n}{\eps}}$ possible distinct values, we can partition the rows of $\bA$ into $\ell$ groups, $G_1,\ldots,G_\ell$, based on the corresponding values of $\bb''$. 
Let $t_m$ be the corresponding value of $\bb''$ for all rows in a group $G_m$, so that
\[\|\bA\bx-\bb''\|_p^p=\sum_m\sum_{i\in G_k}|\langle\ba_i,\bx\rangle-t_m|^p.\]
By the above argument, we have for each $i\in G_k$,
\begin{align*}
(\langle\ba_i,\bx\rangle-t_m)^{2^r}=\left|-t_m+\left\langle\sum_{j=1}^s\beta_{i,j}\be_{i_j}+\sum_{j=1}^k\alpha_{i,j}\bv_j,\bx\right\rangle\right|^{2^r}=\sum_{j=1}^{2^rCd^s(k+s)^{2^r}}B_{i,j}t_{m,j}y_j,
\end{align*}
where (1) $B_{i,j}$ are entries of a matrix $\bB$ with $2^rCd^s(k+s)^{2^r}$ columns that can be computed from $\bA$, (2) $t_{m,1},t_{m,2},\ldots$ are fixed values that can be computed from $m_k$, and (3) each $y_j$ is a fixed function of the coordinates of $\bx$. 
Notably, $\bB$ is the matrix formed by the concatenation of the coefficients of the decomposition of the $\alpha$-fold tensor product of the row into the $\alpha$-fold tensor products of the low-rank and sparse basis elements, for each $\alpha=0,\ldots,p$. 
By comparison, the matrix $\bM$ previously defined in this proof is only the decomposition for $\alpha=p$. 
Hence, the $\ell_{2^r}$ regression problem on a submatrix of $\bA\in\mathbb{R}^{n\times d}$ the same coordinate of $\bb''$, can be reshaped as a constrained $\ell_1$ regression problem on a matrix with $2^rCd^s(k+s)^{2^r}$ columns. 

The remainder of the proof follows from the same grouping argument as \thmref{vander:regression}. 
We apply \thmref{str:lewis} by sampling rows of $\bB$ corresponding to their $\ell_{p/2^r}$ Lewis weights in the submatrix induced by the rows of $G_m$, we obtain a matrix $\bT_m$ such that with high probability,
\[(1-\eps)\|\bT_m\bB\by-\bT_m\bv_m\|^{p/2^r}_{p/2^r}\le\sum_{i\in G_m}\left(\sum_{j=1}^{2^rCd^s(k+s)^{2^r}}B_{i,j}t_{m,j}y_j\right)^{p/2^r}\le(1+\eps)\|\bT_m\bB\by-\bT_m\bv_m\|^{p/2^r}_{p/2^r},\]
where $\bv_m$ is the vector restricted to the coordinates of $\bb''$ that are equal to $t_m$. 
Conditioning on the above inequality holding, it follows that
\[(1-\eps)\|\bT_m\bA\bx-\bT_m\bv_m\|^p_p\le\sum_{i\in G_m}|\langle\ba_i,\bx\rangle-t_m|^p\le(1+\eps)\|\bT_m\bA\bx-\bT_m\bv_m\|^p_p.\]
Therefore by summing over all $m\in[\ell]$, we have that with high probability,
\[(1-\eps)\sum_m\|\bT_m\bA\bx-\bT_m\bv_m\|^p_p\le\sum_m\sum_{i\in G_m}|\langle\ba_i,\bx\rangle-t_m|^p\le(1+\eps)\sum_m\|\bT_m\bA\bx-\bT_m\bv_m\|^p_p.\]

For $\bT=\bT_1\circ\ldots\circ\bT_\ell$, we have $\sum_m\|\bT_m\bA\bx-\bT_m\bv_m\|^p_p=\|\bT\bx-\bb''\|^p_p$. 
Since $\|\bA\bx-\bb''\|^p_p=\sum_m\sum_{i\in G_m}|\langle\ba_i,\bx\rangle-t_m|^p$, then
\[(1-\eps)\|\bT\bA\bx-\bT\bb''\|^p_p\le\|\bA\bx-\bb''\|^p_p\le(1+\eps)\|\bT\bA\bx-\bT\bb''\|^p_p.\]
Therefore for $p\ge 1$,
\[(1-\eps)\|\bT\bA\bx-\bT\bb''\|_p\le\|\bA\bx-\bb''\|_p\le(1+\eps)\|\bT\bA\bx-\bT\bb''\|_p.\]
Because $\ell=\O{\frac{\log n}{\eps}}$ and $\sum \O{T(G_k)}=\O{T(\bA)}$, we can compute a vector $\widehat{\bx}\in\mathbb{R}^d$ such that
\[\|\bA\widehat{\bx}-\bb\|_p\le(1+\O{\eps})\min_{\bx\in\mathbb{R}^d}\|\bA\bx-\bb\|_p.\]
\end{proof}
\noindent
We now justify the runtime of \algref{sparse:lp}. 
\begin{restatable}{lemma}{lemlrsparseregressionruntime}
\label{lem:lr:sparse:regression:runtime}
Given the low-rank factorization of $\bK$, \algref{sparse:lp} uses $n\,\poly\left(2^p,d^s,k^p,s^p,\frac{1}{\eps},\log n\right)$ runtime.
\end{restatable}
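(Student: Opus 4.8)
The plan is to follow the template of \lemref{vander:runtime}: isolate the runtime bottlenecks of \algref{sparse:lp} and charge each to $n\,\poly\left(2^p,d^s,k^p,s^p,\frac1\eps,\log n\right)$. There are five phases: (i) forming the extended coefficient matrix $\bM$ and running $\ell_{p/2^r}$ Lewis weight sampling on it; (ii) solving the constant-factor regression $\min_\bx\|\bA'\bx-\bb\|_p$ to obtain $\tilde{\bx}$; (iii) running $\roundtrunc$ on $\bb'$ and bucketing the rows of $\bA$ into the $\ell=\O{\frac{\log n}{\eps}}$ groups $G_1,\dots,G_\ell$; (iv) for each $k$, forming the per-group coefficient matrix (the matrix $\bB$ from the proof of \lemref{lr:sparse:regression:correctness}) and running $\ell_{p/2^r}$ Lewis weight sampling on $[\bB;t_k]$; and (v) solving the final $\ell_p$ regression on $\bT\bA$.

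First I would record the relevant dimension bounds. Since $2^r\le p<2^{r+1}$ we have $2^r\le p$, $2^{2r}\le p^2$, and $p/2^r\in[1,2)$, so $\bM$ has $d'=\O{d^s(k+s)^{2^r}}$ columns; using $(k+s)^{2^r}\le(k+s)^p\le 2^p(k^p+s^p)$, this is $\O{2^p\,d^s(k^p+s^p)}$, and likewise the per-group matrices have $d''=\O{p\,d^s(k+s)^{2^r}}$ columns, of the same polynomial order, with $\log d''=\O{(s+p)\log(pn)}$ (using $d,k,s\le n$). To build row $i$ of $\bM$, I would use the given low-rank factorization of $\bK$ to read off the coordinates $\alpha_{i,1},\dots,\alpha_{i,k}$ of $\bk_i$ in the fixed basis $\bv_1,\dots,\bv_k$, and read the $s$ nonzeros $\beta_{i,1},\dots,\beta_{i,s}$ of $\bs_i$ from $\bS$; expanding $\bigl(\sum_j\beta_{i,j}x_{i_j}+\sum_j\alpha_{i,j}\langle\bv_j,\bx\rangle\bigr)^{2^r}$ by the multinomial theorem yields at most $(k+s)^{2^r}$ ordered terms, each with a coefficient computable in $\O{2^r}$ time, while the sparsity pattern of $\bs_i$ selects which of the $\binom{d}{s}=\O{d^s}$ column blocks of $\bM$ to fill. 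Over all $n$ rows this is $n\,\poly\left(2^p,d^s,k^p,s^p\right)$ time, and the same estimate — with one extra factor of $\O{2^r}$ for splitting off the powers of $t_k$, and using $\sum_k|G_k|=n$ — bounds phase (iv)'s matrix-formation cost.

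For the Lewis weight sampling itself, since $p/2^r\in[1,4)$ I would approximate the weights of $\bM$ (resp.\ of each $[\bB;t_k]$) via $\approxlewisweights$ from \figref{lewis:iter}: by \cite{CohenP15} a $\polylog(n)$ number of iterations suffices, each an approximate leverage-score computation on an $n\times d'$ (resp.\ $n\times d''$) matrix costing $\O{\mathrm{nnz}\cdot\log n+(d')^\omega\polylog(n)}=n\,\poly\left(2^p,d^s,k^p,s^p,\frac1\eps,\log n\right)$; summing phase (iv) over the $\ell=\O{\frac{\log n}{\eps}}$ groups preserves this form. Phase (iii) — $\roundtrunc$ plus bucketing the $n$ entries of $\bb'$ — is $\O{n\log n}$. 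For phases (ii) and (v), the regression instances have $N$ rows with $N=\O{d'\log d'}$ (on $\bA'$) and $N=m=\O{\frac{p\,d^s(k+s)^p(s+p)}{\eps^3}\log^2(pn)}$ (on $\bT\bA$), a row count that is $\poly\left(2^p,d^s,k^p,s^p,\frac1\eps,\log n\right)$ with \emph{no} factor of $n$; to avoid a spurious $d^\omega$ term when $d$ is large I would first compute an orthonormal basis of the (at most $N$-dimensional) row space in $\O{N^2 d}\le n\,\poly(\cdots)$ time and reparametrize, then invoke \thmref{lp:regression:runtime} (or, for $p\in[1,2)$, the analogous $\ell_p$ solver), whose $\O{\sqrt N\log\frac N\eps}$ linear-system solves each cost $\poly\left(N,p,\frac1\eps\right)$ time. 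Adding the five phases yields the claimed $n\,\poly\left(2^p,d^s,k^p,s^p,\frac1\eps,\log n\right)$ runtime.

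The main obstacle is the careful accounting in phases (i), (ii), (iv), and (v). One must verify that the coefficient matrices $\bM$ and $\bB$ cost only $n$ times a polynomial in the \emph{allowed} parameters — which hinges on $2^r\le p$ so that $(k+s)^{2^r}$ collapses into $2^p(k^p+s^p)$, and on there being only $\O{d^s}$ sparsity patterns — and that the two $\ell_p$ regression solves do not secretly incur $d^\omega$; the latter is legitimate because both $\bA'$ and $\bT\bA$ have a row count independent of $n$, so one may first restrict to their low-dimensional row spaces before calling the solver.
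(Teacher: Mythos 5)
Your proof is correct and follows essentially the same phase-by-phase accounting as the paper: form the extended coefficient matrices in $n\,\poly(2^p,d^s,k^p,s^p)$ time, run the iterative Lewis weight computation (valid since $p/2^r<4$), bucket via $\roundtrunc$ in $\O{n\log n}$ time, and solve the two small regression instances whose row counts are independent of $n$. Your extra precaution of reparametrizing to avoid a $d^\omega$ term is harmless but unnecessary, since $d^\omega=\poly(d^s)$ is already within the allowed parameters.
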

\begin{proof}
We analyze the runtime of \algref{sparse:lp}. 
First note that we can compute the extended matrix in time $\O{n((k+1)^pd^s s^p)}$ to perform the $\ell_{p/2^r}$ Lewis weight sampling, where we recall that $r$ is the unique integer such that $2^r\le p<2^{r+1}$. 
To perform Lewis weight sampling on the extended matrix, we require matrix-vector multiplication, which requires time $\O{nk}$ for a low-rank matrix and time $\O{ns}$ for a matrix whose rows have at most $s$ nonzero entries. 

After the first iteration of Lewis weight sampling, we can round and truncate the coordinates of $\bb\in\mathbb{R}^n$ to form a vector $\bb''\in\mathbb{R}^n$ using the procedure $\roundtrunc$ and then forming the groups $G_1,\ldots,G_\ell$, which clearly takes $\O{n}$ arithmetic operations combined. 
Once the groups are formed, we can compute the extended matrix in time $\O{n((k+1)^pd^s s^p)}$ and perform $\ell_{p/2^r}$ Lewis weight sampling on each group, which takes $\O{nk+ns}$ total time across all groups. 
To approximately solve the resulting $\ell_p$ regression problem formed by the subsampled rows, we require $\poly\left(d^s,k^p,s^p,\frac{1}{\eps}\right)$ time. 
Therefore, the total runtime is $n\,\poly\left(2^p,d^s,k^p,s^p,\frac{1}{\eps},\log n\right)$.
\end{proof}

\thmref{lowrank:sparse:regression} then follows from \lemref{lr:sparse:regression:correctness} and \lemref{lr:sparse:regression:runtime}. 
\thmref{vander:sparse:regression} is achieved through similar analysis for a noisy Vandermonde matrix. 
In particular, it follows from the same proof structure as \lemref{vander:correctness} by showing for $2^r\le p<2^{r+1}$, the $\ell_{2^r}$ subspace embedding problem on $\bA=\bV+\bS$ for a given Vandermonde matrix $\bV\in\mathbb{R}^{n\times d}$ and a sparse matrix $\bS\in\mathbb{R}^{n\times d}$ can be reshaped as a constrained $\ell_1$ subspace embedding problem on an input Vandermonde matrix of size $n\times d'$, where $d'=(d^s)(s^p)(pd)$.  



Finally, we describe in \algref{general:lp} a practical approach for $\ell_p$ regression for arbitrary matrices \emph{without requiring any structural assumptions}. 

\begin{algorithm}[!tbh]
\caption{Faster $\ell_p$ regression for general matrices}
\label{alg:general:lp}
\begin{algorithmic}[1]
\Require{Matrix $\bA\in\mathbb{R}^{n\times d}$, measurement vector $\bb$, accuracy parameter $\eps>0$}
\Ensure{$\widehat{\bx}\in\mathbb{R}$ with $\|\bA\widehat{\bx}-\bb\|_p\le(1+\eps)\min_{\bx\in\mathbb{R}^{d}}\|\bA\bx-\bb\|_p$, parameter $p\ge4$}
\State{$r\gets\flr{\log p}-1$}
\Comment{$2^{r+1}\le p<2^{r+2}$.}
\State{Extend $\bA$ to a matrix $\bM$ with dimension $n\times \O{d^{2^r}}$ so that each row in $M_i$ is the $2^r$-fold tensor product of $\ba_i$ reshaped into a $d^{2^r}$ length vector.}
\State{Use $\ell_{p/2^r}$-Lewis weight sampling on $\bM$ to find a set $S$ of $\O{d^{p/2}\log d}$ indices in $[n]$ and rescaling factors.}
\State{Let $\bA'$ be the corresponding submatrix of $\bA$ with indices in $S$ and scaled accordingly.}
\State{Compute $\tilde{\bx}\le5\min_{\bx\in\mathbb{R}^d}\|\bA'\bx-\bb\|_p$.}
\State{$\bb'\gets\bb-\bA\tilde{\bx}$, $\bb''\gets\roundtrunc(\bb')$, $\ell\gets\O{\frac{\log n}{\eps}}$}
\State{Partition the rows of $\bA$ into groups $G_1,\ldots,G_\ell$, each containing all rows with the same value of $\bb''$}
\State{Let $\bG_k$ be the corresponding submatrix and $t_k$ be the coordinate of $\bb''$ corresponding to $G_k$ for each $k\in[\ell]$.}
\State{Use $\ell_{p/2^r}$-Lewis weight sampling on $[\bG_k; t_k]$ to find a set $S'_k$ of $\O{\frac{1}{\eps^5}d^{p/2}\log d}$ indices in $[n]$ and rescaling factors.}
\State{Let $\bT_k$ be the corresponding rows with indices in $S'_k$ and scaled accordingly.}
\State{$\bT\gets[\bT_1;\ldots;\bT_k]^\top$}
\State{Compute $\widehat{\bx}\le(1+\eps)\min_{\bx\in\mathbb{R}^d}\|\bT\bx-\bT\bb\|_p$.}
\State{\Return $\widehat{\bx}$}
\end{algorithmic}
\end{algorithm}

\begin{restatable}{theorem}{thmgeneralregression}
\label{thm:general:regression}
\algref{general:lp} outputs a vector $\widehat{\bx}$ such that $\|\bA\widehat{\bx}-\bb\|_p\le(1+\eps)\min\|\bA\bx-\bb\|_p$.
The runtime of \algref{general:lp} is $T(n,d^{p/2})+\poly\left(d^{p/2},\log n,\frac{1}{\eps}\right)$, where $T(n,d^{p/2})$ is the time to multiply a matrix of size $n\times d^{p/2}$ by a vector of length $d^{p/2}$. 
\end{restatable}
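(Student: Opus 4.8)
The plan is to re-run the argument of \lemref{vander:correctness} almost verbatim; the only genuinely new ingredient is the structural reduction, which I would establish for an \emph{arbitrary} $\bA\in\mathbb{R}^{n\times d}$. Fix $r=\flr{\log p}-1$, so that $q:=p/2^r\in[2,4)$ and $2^r\ge 2$ (this is where $p\ge 4$ enters), and note that only the integer $2^r$, never $p$ itself, is ever exponentiated. By multilinearity of the tensor product, for every row $\ba_i$ and every $\bx$,
\[
|\langle\ba_i,\bx\rangle|^p=\bigl|\langle\ba_i,\bx\rangle^{2^r}\bigr|^{q}=\bigl|\langle\ba_i^{\otimes 2^r},\bx^{\otimes 2^r}\rangle\bigr|^{q},
\]
so the $\ell_p$ subspace-embedding problem for $\bA$ becomes a constrained $\ell_q$ subspace-embedding problem for the matrix $\bM\in\mathbb{R}^{n\times D}$, $D=\O{d^{2^r}}$, whose $i$-th row is the flattened tensor $\ba_i^{\otimes 2^r}$. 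First I would apply \thmref{lewis:weights} in $\ell_q$ with $q\in[2,4)$: since $D^{q/2}=(d^{2^r})^{p/2^{r+1}}=d^{p/2}$, a set $S$ of $\O{d^{p/2}\,\polylog(d)}$ rescaled rows of $\bM$ is a constant-factor $\ell_q$ subspace embedding of $\bM$, and restricting $S$ to $\bA$ (with each rescaling raised to the power $1/2^r$) yields $\bA'$ with $\tfrac{11}{12}\|\bA\bx\|_p\le\|\bA'\bx\|_p\le\tfrac{13}{12}\|\bA\bx\|_p$, exactly the hypothesis of \lemref{const:to:const}. Solving the sampled $\ell_p$ regression problem and invoking \lemref{const:to:const} then produces $\tilde\bx$ with $\|\bA\tilde\bx-\bb\|_p\le 12\,\OPT$, computed \emph{without reading $\bb$ during the sampling phase}.

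Next I would carry out the round--truncate--partition step exactly as in \lemref{vander:correctness}. Setting $\bb'=\bb-\bA\tilde\bx$ gives $\|\bb'\|_p\le 12\,\OPT$; $\bb''=\roundtrunc(\bb')$ perturbs each surviving coordinate by a factor at most $1+\eps$ while the truncated coordinates contribute at most $n\,(M/n^5)^p$ to $\|\bb'-\bb''\|_p^p$ with $M=\max_i|b'_i|\le\|\bb'\|_p$, so $\|\bb'-\bb''\|_p\le\O{\eps}\cdot\OPT$ and, by the triangle inequality, $\|\bA\bx-\bb''\|_p$ agrees with $\|\bA\bx-\bb'\|_p$ to within a factor $1+\O{\eps}$. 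Since $\bb''$ has only $\ell=\O{\frac{\log n}{\eps}}$ distinct values, it induces a partition $G_1,\dots,G_\ell$, and for a group $G_k$ with common value $t_k$ the binomial theorem gives
\[
(\langle\ba_i,\bx\rangle-t_k)^{2^r}=\sum_{j=0}^{2^r}\binom{2^r}{j}(-t_k)^{2^r-j}\,\langle\ba_i^{\otimes j},\bx^{\otimes j}\rangle,
\]
a linear form in the $\sum_{j=0}^{2^r}d^{j}=\O{d^{2^r}}$ coordinates of $(\bx^{\otimes 0},\dots,\bx^{\otimes 2^r})$ with coefficients that are fixed functions of $t_k$. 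Hence $\ell_q$-Lewis-weight sampling on $[\bG_k;t_k]$ restricted to the rows of $G_k$ produces $\bT_k$ with $\|\bT_k\bA\bx-\bT_k\bv_k\|_p^p=(1\pm\eps)\sum_{i\in G_k}|\langle\ba_i,\bx\rangle-t_k|^p$, where $\bv_k$ is $\bb''$ restricted to $G_k$; summing over $k$ gives $(1-\O{\eps})\|\bT\bA\bx-\bT\bb''\|_p\le\|\bA\bx-\bb''\|_p\le(1+\O{\eps})\|\bT\bA\bx-\bT\bb''\|_p$ for $\bT=[\bT_1;\dots;\bT_\ell]$. Solving the $\O{\poly(d^{p/2},\log n,1/\eps)}$-row regression problem induced by $\bT$ then returns $\widehat\bx$ with $\|\bA\widehat\bx-\bb\|_p\le(1+\O{\eps})\,\OPT$, just as in \lemref{vander:correctness}; rescaling $\eps$ by a constant gives the stated $(1+\eps)$ bound.

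For the runtime, the point is that every $n$-dependent step collapses into $T(n,d^{p/2})$ up to polylogarithmic overhead, because $2^r\le p/2$ forces the extended matrices $\bM$ and $[\bG_k;t_k]$ to have only $\O{d^{2^r}}\le\O{d^{p/2}}$ columns, hence $\O{nd^{2^r}}$ entries to form and touch. Since $q=p/2^r\in[2,4)$, the approximate $\ell_q$-Lewis weights needed in each of the $1+\ell=\O{\frac{\log n}{\eps}}$ sampling tasks are computed by the iterative scheme of \figref{lewis:iter} --- with \emph{no} convex program --- in $\O{\log n}$ rounds, each round being a constant-factor leverage-score estimation on an $n\times\O{d^{p/2}}$ matrix at cost $\tO{T(n,d^{p/2})}+\poly(d^{p/2})$ by standard sketching (using $(d^{2^r})^{\omega}\le(d^{p/2})^{\omega}=\poly(d^{p/2})$). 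The $\roundtrunc$ call and the partition cost $\O{n}$, and the final $\ell_p$ regression on the $\poly(d^{p/2},\log n,1/\eps)$-row matrix $\bT$ costs $\poly(d^{p/2},\log n,1/\eps)$ by \thmref{lp:regression:runtime}, for a total of $T(n,d^{p/2})+\poly(d^{p/2},\log n,1/\eps)$. The hard part will not be any single estimate but keeping this bookkeeping honest: verifying that the $2^r$-fold (rather than $p$-fold) tensoring holds the working dimension at $d^{2^r}\le d^{p/2}$ while \thmref{lewis:weights} applied in $\ell_q$ still samples only $(d^{2^r})^{q/2}=d^{p/2}$ rows, and that every dependence on $n$ outside $T(n,d^{p/2})$ is genuinely polylogarithmic.
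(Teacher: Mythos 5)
Your proposal is correct and follows essentially the same route the paper intends for this theorem: reduce via the $2^r$-fold tensor product to a constrained $\ell_q$ problem with $q=p/2^r\in[2,4)$ of dimension $d^{2^r}$ (so $\ell_q$ Lewis weight sampling via the iteration of \figref{lewis:iter} needs only $(d^{2^r})^{q/2}=d^{p/2}$ rows), obtain a constant-factor solution via \lemref{const:to:const} without reading $\bb$, then round, truncate, partition by the values of $\bb''$, and re-sample each group exactly as in \lemref{vander:correctness}. Your bookkeeping (including the binomial expansion per group and the honest acknowledgment of the polylogarithmic overhead on $T(n,d^{p/2})$ from the $\O{\log n}$ Lewis-weight iterations) matches the paper's argument at the same level of detail.
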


\thmref{general:regression} achieves the same optimal sample complexity as previous $\ell_p$ Lewis weight algorithms of roughly $d^{p/2}$. 
However, the main advantage of \algref{general:lp} is that it performs $\ell_q$ Lewis weight sampling for some $q\in[2,4)$, which is quite efficient because we can use an iterative method rather than solving a convex program.

\section{Applications to Polynomial Regression}
In the polynomial regression problem, the goal is to find a degree $d$ polynomial $\hat{q}$ such that
\[\|\widehat{q}(t)-f(t)\|_p^p\le(1+\eps)\cdot\min_{q:\deg(q)\le d}\|q(t)-f(t)\|_p^p,\]
where $\eps>0$ is an accuracy parameter given as input and $\norm{\cdot}_p^p$ is the $\ell_p$ norm to the $p^\text{th}$ power, $\norm{g}_p^p = \int_{-1}^1 |g(t)|^p dt$.
The polynomial regression problem is a fundamental problem in statistics, computational mathematics, machine learning, and more.  
The problem has been studied as early as the 19th century with the work of Legendre and Gauss on least squares polynomial regression and has applications in learning half-spaces \cite{KalaiKMS08}, solving parametric PDEs \cite{HamptonDoostan:2015}, and surface reconstruction~\cite{Pratt87}.

Given the flexibility to choose query locations $x_1,\ldots,x_s$, we can consider the polynomial regression problem as an \emph{active learning} or \emph{experimental design} problem. 
Thus we would like to minimize the number of queries $s$, as a function of the approximation degree $d$, the norm $p$, and the accuracy parameter $\eps$, to find $\hat{q}$. 
Observe that $d+1$ queries are obviously necessary, but also that $d+1$ queries suffice when $f$ can be exactly fit by a degree $d$ polynomial, by using direct interpolation. 
In general, however, in the case when $\min_{q:\deg(q)\le d}\|q(t)-f(t)\|_p^p \neq 0$ we require $s > d+1$ queries. 
Our Vandermonde $\ell_p$ regression results can be used to give the first result showing that for all $p\ge 1$,  $d\,\exp(\O{p})\cdot\poly\left(\frac{1}{\eps}\right)$ queries suffice to obtain a $(1+\eps)$-approximation to the best polynomial fit. 

We require the following structural theorem reducing the $\ell_p$ polynomial regression problem to a problem of solving $\ell_p$ regression on Vandermonde matrices:
\begin{theorem}
\cite{KaneKP17,MeyerMMWZ21}
\label{thm:polynomial:regression}
Suppose $s_1,\ldots,s_{n_0}$ are drawn uniformly from $[-1,1]$. 
Let $\bA\in\mathbb{R}^{n_0 \times (d+1)}$ be the associated Vandermonde matrix, so that $\bA_{i,j} = s_i^{j-1}$. 
Let $n_0 = \exp(\O{p})\tilde{O}\left(\frac{1}{\eps^{2+2p}}\,d^5\right)$ and let $\bb\in\mathbb{R}^{n_0}$ be the evaluations of $f$, so that $\bb_i = f(s_i)$. 
Then with probability $\frac{11}{12}$, the sketched solution $\hat{\bx} = \argmin_{\bx} \|\bA\bx-\bb\|_p$ satisfies
\[\|\calP\bx-f\|_p^p \leq(1+\eps)\min_{\bx\in\mathbb{R}^{d+1}}\|\calP\bx-f\|_p^p.\]
\end{theorem}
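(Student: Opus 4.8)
The plan is to treat this as a statement that uniform sampling discretizes the continuous $\ell_p$ regression functional $g\mapsto\norm{g}_p^p=\int_{-1}^1|g(t)|^p\,dt$ accurately over the relevant low-dimensional family of functions. Write $\calP\bx$ for the degree at most $d$ polynomial with coefficient vector $\bx$, let $\bx^*$ be a continuous minimizer, $r^*=\calP\bx^*-f$ the optimal residual, and $\widetilde{D}(g)^p=\frac{2}{n_0}\sum_{i=1}^{n_0}|g(s_i)|^p$ (an unbiased estimate of $\norm{g}_p^p$, and a global scalar multiple of $\norm{\bA\bx-\bb}_p^p$ when $g=\calP\bx-f$). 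It suffices to show that with constant probability two events hold, which can then be boosted to probability $\frac{11}{12}$: \textbf{(E1)}, a two-sided embedding $(1-\eps)\norm{q}_p^p\le\widetilde{D}(q)^p\le(1+\eps)\norm{q}_p^p$ simultaneously for all polynomials $q$ of degree at most $d$; and \textbf{(E2)}, the one-sided bound $\widetilde{D}(r^*)^p\le(1+\eps)\norm{r^*}_p^p$. Event \textbf{(E2)} follows from a direct Bernstein/Markov argument on the nonnegative summands $\frac{2}{n_0}|r^*(s_i)|^p$, whose mean is exactly $\frac{1}{n_0}\norm{r^*}_p^p$; this is where the $\eps^{-(2+2p)}$-type factor in $n_0$ originates, since controlling $\norm{r^*}_p^p$ to relative error $\eps$ when $|r^*|^p$ may be concentrated on a tiny set forces a $\poly(1/\eps)^p$ sample overhead.

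The heart of the proof is \textbf{(E1)}, where the key point is that \emph{uniform} sampling — rather than sampling from the $\ell_p$ Lewis density of the polynomial system on $[-1,1]$, which concentrates near $\pm1$ like $(1-t^2)^{-1/2}$ — already suffices up to only an $\exp(\O{p})\poly(d)$ multiplicative overhead in $n_0$. I would prove this by bounding the $\ell_p$ Lewis weight function $w(\cdot)$ of the continuous Vandermonde system $\{1,t,\ldots,t^d\}$, normalized so $\int_{-1}^1 w(t)\,dt=d+1$, against the uniform density: the target estimate is $\sup_{t\in[-1,1]}w(t)\le\exp(\O{p})\poly(d)$. This is a purely approximation-theoretic fact about polynomials, and the cleanest route is the Chebyshev substitution $t=\cos\theta$, which maps degree-$d$ algebraic polynomials to degree-$d$ trigonometric polynomials, flattens the Lewis weight away from the endpoints, and reduces the endpoint behavior to the statement that on the scale $1-|t|\gtrsim d^{-2}$ the weight is $\poly(d)$-bounded, the remaining boundary layers being handled by the Markov--Bernstein and Nikolskii (Remez-type) inequalities for algebraic polynomials. (Alternatively, and more in the style of this paper, one can first apply \lemref{vander:truerank} to reshape the degree-$d$, $\ell_p$ Vandermonde problem into an $\ell_q$ Vandermonde problem of degree $\O{pd}$ with $q=p/2^r\in[1,2)$ and effective dimension only $\O{pd}$, and then bound the uniform-sampling overhead for that reshaped polynomial space by the same endpoint estimates.) Once $\sup_t w(t)\le\exp(\O{p})\poly(d)$ is established, the uniform density dominates a constant multiple of the normalized Lewis density up to that factor, so a uniform sample of size $n_0=\exp(\O{p})\poly(d)\cdot\poly(1/\eps,\log(d/\eps))$ is a legitimate $\ell_p$ Lewis-weight sample in the sense of \thmref{lewis:weights} (in its continuous-measure form, or applied to a fine deterministic pre-discretization of $[-1,1]$), which yields \textbf{(E1)}.

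It remains to upgrade \textbf{(E1)} and \textbf{(E2)} to the claimed $(1+\eps)$-guarantee for the discrete minimizer $\widehat{\bx}=\argmin_\bx\norm{\bA\bx-\bb}_p$. For any $\bx$, decompose $\calP\bx-f=(\calP\bx-\calP\bx^*)-r^*$ where $\calP\bx-\calP\bx^*$ has degree at most $d$; combining \textbf{(E1)} on that polynomial, the triangle inequality for $\norm{\cdot}_p$, \textbf{(E2)}, and the optimality $\widetilde{D}(\calP\widehat{\bx}-f)\le\widetilde{D}(r^*)$ first gives $\norm{\calP\widehat{\bx}-\calP\bx^*}_p=\O{\norm{r^*}_p}$, i.e.\ a constant-factor solution. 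Feeding this back into the standard $\ell_p$-subspace-embedding-to-regression analysis (as underlies \thmref{lewis:weights} and appears in \cite{CohenP15,MeyerMMWZ21}) — which requires, in addition to \textbf{(E1)}, a $(1\pm\eps)$ embedding of the $(d+2)$-dimensional space spanned by the monomials together with $r^*$, obtained by applying the same uniform-versus-Lewis domination to that single larger space — upgrades this to $\norm{\calP\widehat{\bx}-f}_p^p\le(1+\O{\eps})\norm{r^*}_p^p$, after which rescaling $\eps$ by a constant completes the proof. I expect the main obstacle to be exactly the polynomial-specific estimate $\sup_t w(t)\le\exp(\O{p})\poly(d)$ of the second paragraph, together with making \textbf{(E2)} rigorous for a genuinely arbitrary target $f$, for which $r^*$ carries no a priori pointwise bound; everything else is the generic machinery already available via \thmref{lewis:weights}.
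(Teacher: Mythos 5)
The paper does not actually prove this theorem --- it is imported wholesale from \cite{KaneKP17,MeyerMMWZ21} --- so the comparison below is against the argument in those references rather than against anything internal to this paper. Your event \textbf{(E1)} is the right idea and matches what those works do: the fact that uniform sampling on $[-1,1]$ is an oversampled $\ell_p$ sensitivity/Lewis-weight sample for the degree-$d$ polynomial space, via Nikolskii/Markov--Bernstein-type bounds showing $\sup_t |q(t)|^p/\norm{q}_p^p \le \poly(d)$ (with an $\exp(\O{p})$ constant), is exactly the polynomial-specific ingredient, and it is where the $\poly(d)$ factor in $n_0$ comes from.

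The genuine gap is in \textbf{(E2)} and in your final upgrade step, and it is not the technicality you flag it as --- it forces a restructuring of the argument. For an arbitrary target $f$, the summands $\frac{2}{n_0}|r^*(s_i)|^p$ are i.i.d.\ and nonnegative with the right mean, but their higher moments are completely uncontrolled: $|r^*|^p$ can put all of its mass on a set of measure far below $1/n_0$. Consequently no Bernstein argument applies, and Markov's inequality only yields $\widetilde{D}(r^*)^p \le C\norm{r^*}_p^p$ for a constant $C$ with constant probability; the bound $\widetilde{D}(r^*)^p\le(1+\eps)\norm{r^*}_p^p$ holds with probability only about $\eps$, not $1-o(1)$. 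The same obstruction kills your proposed final step: a $(1\pm\eps)$ embedding of the $(d+2)$-dimensional space spanned by the monomials together with $r^*$ cannot be obtained by uniform (or any non-adaptive) sampling, because $r^*$ is not a polynomial and admits no sensitivity bound --- a spike supported on a tiny set is simply missed. (That augmented-space trick works elsewhere in this paper only because there $\bb$ is a finite input vector whose Lewis weights in $[\bA;\bb]$ are explicitly computed.) The proofs in \cite{KaneKP17,MeyerMMWZ21} avoid ever estimating $\norm{r^*}_p^p$ to relative accuracy: after the constant-factor step, they control the \emph{difference of losses} $\widetilde{D}(\calP\bx-f)^p-\widetilde{D}(r^*)^p$ against $\norm{\calP\bx-f}_p^p-\norm{r^*}_p^p$, uniformly over $\bx$ with $\norm{\calP\bx-\calP\bx^*}_p\le C\norm{r^*}_p$, using the pointwise inequality $\bigl||a-b|^p-|b|^p\bigr|\le \eps'|b|^p+(Cp/\eps')^{p-1}|a|^p$ so that the uncontrolled $|r^*(s_i)|^p$ terms cancel and only polynomial evaluations (which do have sensitivity bounds) need to concentrate. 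That inequality is also the source of the $\exp(\O{p})\eps^{-2p}$ factor in $n_0$, which your sketch attributes to the wrong place. So your outline reproduces the easy half of the cited proof but replaces its essential mechanism with two steps that fail for adversarial $f$.
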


\thmref{polynomial:regression} states that we can uniformly sample $\exp(\O{p})\cdot\poly\left(d,\frac{1}{\eps}\right)$ points from $[-1,1]$. 
We can then form an $\ell_p$ regression problem by using the evaluation each of the polynomial bases at the sampled points to form the design matrix $\bA$ and querying the underlying signal $f$ at the sampled points to form the measurement vector $\bb$. 
\thmref{polynomial:regression} says that the optimal solution $\hat{\bx} = \argmin_{\bx} \|\bA\bx-\bb\|_p$ is a $(1+\eps)$-approximation to the best fit degree $d$ polynomial. 
We can na\"{i}vely approximately solve the $\ell_p$ regression on the Vandermonde matrix $\bA$ and the measurement vector $\bb$ by standard $\ell_p$ regression techniques such as Lewis weight sampling, which would result in total query complexity $\exp(\O{p})\tilde{O}\left(\frac{1}{\eps^{2+2p}}\,d^5\right)$. 
However, we can instead note that \lemref{vander:truerank} implies we can instead solve an $\ell_q$ regression problem on a Vandermonde matrix with $\O{dp}$ columns for $q\in[1,2]$. 
Crucially for $q\in[1,2]$, there exist active $\ell_q$ regression algorithms that only require reading $\tilde{O}(d)\cdot\poly\left(\frac{1}{\eps}\right)$ entries of $\bb$:
\begin{theorem}
\label{thm:active:matrix}
\cite{MuscoMWY21}
Given $p\in[1,2]$ and an input matrix $\bA\in\mathbb{R}^{n\times d}$, there exists an algorithm that reads $\tilde{O}(d)\cdot\poly\left(\frac{1}{\eps}\right)$ entries of $\bb$ and with probability at least $0.99$, outputs $\tilde{\bx}$ such that
\[\|\bA\bx-\bb\|_p\le(1+\eps)\min_{\bx} \|\bA\bx-\bb\|_p.\]
\end{theorem}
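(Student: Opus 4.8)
The plan is to prove this statement --- due to \cite{MuscoMWY21}, stated here for $p\in[1,2]$ --- via a two-stage $\ell_p$ Lewis weight sampling scheme that reads the response $\bb$ only at the sampled coordinates. First I would compute a constant-factor approximation to the $\ell_p$ Lewis weights $w_1(\bA),\dots,w_n(\bA)$ of $\bA$; this depends on $\bA$ alone, and since $p<4$ it can be done by the iteration of \figref{lewis:iter}. I then draw a set $S_1$ of $m_1=\tilde{O}(d)\poly(1/\eps)$ rows with probability proportional to these weights, rescale, read $\bb$ on $S_1$, and form the sampling-and-rescaling matrix $\bS_1$. By \thmref{lewis:weights} with $p\le 2$, $\bS_1$ is an $\ell_p$ subspace embedding for $\bA$ with high probability, and since the sampling is unbiased importance sampling, $\Ex{\|\bS_1\by\|_p^p}=\|\by\|_p^p$ for every fixed $\by$. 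Then $\tilde{\bx}:=\argmin_{\bx}\|\bS_1(\bA\bx-\bb)\|_p$ satisfies $\|\bA\tilde{\bx}-\bb\|_p=\O{\OPT}$, where $\OPT:=\min_{\bx}\|\bA\bx-\bb\|_p$: this is exactly the argument of \lemref{const:to:const} (with the constants enlarged), in which one writes $\bA\bx-\bb=\bA(\bx-\bx^*)+\bb^*$ for $\bx^*$ optimal and $\bb^*:=\bA\bx^*-\bb$, controls the first summand by the subspace embedding, and controls $\|\bS_1\bb^*\|_p=\O{\OPT}$ by a single Markov bound applied to the fixed vector $\bb^*$.

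The second stage recenters to reach $(1+\eps)$. Having read $\bb$ on $S_1$, I set $\bb':=\bb-\bA\tilde{\bx}$, so $\|\bb'\|_p=\O{\OPT}$, and note that $\min_{\bx}\|\bA\bx-\bb'\|_p=\OPT$ and that the output $\widehat{\bx}:=\tilde{\bx}+\bv$, where $\bv:=\argmin_{\bx}\|\bS_2(\bA\bx-\bb')\|_p$ for a second Lewis weight sample $\bS_2$ of $m_2=\tilde{O}(d)\poly(1/\eps)$ rows, satisfies $\|\bA\widehat{\bx}-\bb\|_p=\|\bA\bv-\bb'\|_p$. So it suffices to solve the \emph{recentered} instance $\min_{\bx}\|\bA\bx-\bb'\|_p$ to relative error $(1+\eps)$ while reading $\bb'$ (equivalently $\bb$) only on $S_2$. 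Recentering helps because the response $\bb'$ now has norm comparable to the optimum, so any near-optimal $\bx$ also has $\|\bA\bx\|_p=\O{\OPT}$, i.e.\ $\bA\bx$ ranges over a \emph{bounded} region of the column space of $\bA$; one can therefore union-bound a Bernstein-type concentration of the unbiased estimator $\|\bS_2(\bA\bx-\bb')\|_p^p$ over an $\eps$-net of that region, the per-net-point variance being controlled by the Lewis weight property of the sampled rows together with $\|\bb'\|_p=\O{\OPT}$. This gives $\|\bS_2(\bA\bx-\bb')\|_p=(1\pm\eps)\|\bA\bx-\bb'\|_p$ simultaneously for all near-optimal $\bx$, hence $\|\bA\bv-\bb'\|_p\le(1+\O{\eps})\OPT$; rescaling $\eps$ gives the theorem, and $m_1+m_2=\tilde{O}(d)\poly(1/\eps)$ coordinates of $\bb$ are read in total.

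The main obstacle is precisely this last uniform-concentration step, which is what makes the problem nontrivial: were we allowed to sample by the Lewis weights of the augmented matrix $[\bA\mid\bb]$, the estimate would be immediate from \thmref{lewis:weights}, but that requires reading all of $\bb$, so we must sample by the Lewis weights of $\bA$ alone and lean on recentering. The subtlety inside that step is handling rows where $b'_i$ is large but $w_i(\bA)$ is tiny, so that the importance-sampling estimator has heavy tails: rows with $\ba_i=0$ contribute a fixed additive cost to every solution and are removed before sampling (and added back at the end), using only knowledge of $\bA$; and the remaining ``heavy'' rows are few and are accounted for using $\sum_i|b'_i|^p=\|\bb'\|_p^p=\O{\OPT^p}$, combined with the Lewis weight lower bounds. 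Finally, amplifying to success probability $0.99$ is routine: choosing a large enough Markov threshold makes the first stage succeed with probability $0.995$ (at the cost of a larger constant factor there, which only enters as $\|\bb'\|_p=\O{\OPT}$), the subspace-embedding and net events hold with probability $1-1/\poly(d)$, and a union bound finishes.
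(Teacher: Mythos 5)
The paper itself contains no proof of this statement: it is imported verbatim from \cite{MuscoMWY21}, so the only comparison available is against the argument in that reference. Your two-stage outline --- Lewis weight sampling of $\bA$ alone, a constant-factor solution via \lemref{const:to:const}-style reasoning plus a Markov bound on the fixed residual, recentering to $\bb'=\bb-\bA\tilde{\bx}$ with $\|\bb'\|_p=\O{\OPT}$, then a second sample --- does match the strategy of the cited work, and you correctly identify that the entire difficulty is that one cannot sample by the Lewis weights of $[\bA\mid\bb]$ without reading all of $\bb$.

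The gap is in the step you yourself flag as the crux, and it is not merely an omitted computation: the claim that $\|\bS_2(\bA\bx-\bb')\|_p=(1\pm\eps)\|\bA\bx-\bb'\|_p$ uniformly over near-optimal $\bx$ is false in general, so no net-plus-Bernstein argument can establish it. Consider a row with $\ba_i\neq 0$ but $w_i(\bA)$ arbitrarily small, while $|b'_i|$ is a constant fraction of $\OPT$. Under Lewis weight sampling of $\bA$ this row is included with probability proportional to $w_i(\bA)$, hence is missed with probability close to $1$; yet it contributes $|b'_i|^p=\Omega(\OPT^p)$ to the objective of every candidate $\bx$ in the relevant region, so the sampled objective underestimates the true one by a constant factor with high probability, even at a single fixed point. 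There are also no ``Lewis weight lower bounds'' to invoke: the Lewis weights of $\bA$ carry no information about $\bb'$, and knowing $\|\bb'\|_p^p=\O{\OPT^p}$ does not let you estimate the heavy coordinates' contributions without reading them. The resolution in \cite{MuscoMWY21} is to never estimate the objective itself but only its variation relative to $\|\bb'\|_p^p$: one bounds $\bigl|\,|\langle\ba_i,\bx\rangle-b'_i|^p-|b'_i|^p\,\bigr|$ by $\O{|\langle\ba_i,\bx\rangle|^p+|\langle\ba_i,\bx\rangle|\cdot|b'_i|^{p-1}}$, whose first term is governed by the Lewis weights (sensitivities) of $\bA$ and whose second is handled by H\"older together with $\|\bb'\|_p=\O{\OPT}$, while the unestimable common part $\sum_i|b'_i|^p$ cancels when comparing the sampled minimizer against $\bx^*$. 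Without this shift from estimating objectives to comparing them, the second stage of your argument does not go through.
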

Hence by \thmref{active:matrix}, we can approximately solve the $\ell_q$ regression on a Vandermonde matrix with $dp$ rows by reading $\tilde{O}(dp)\cdot\poly\left(\frac{1}{\eps}\right)$ entries of $\bb$. 
By \lemref{vander:truerank}, the approximate solution will also be a $(1+\eps)$-approximation to the optimal $\ell_p$ regression on the Vandermonde matrix $\bA$. 
By \thmref{polynomial:regression}, the approximate solution will also form the coefficient vector of a polynomial that is a $(1+\eps)$-approximation to the polynomial $\ell_p$ regression problem. 
Therefore, we obtain the following guarantees for the polynomial regression problem:
\begin{theorem}
\label{thm:polynomial:regression:main}
For any degree $d$ and norm $p \geq 1$, there exists an algorithm that queries $f$ at $s = dp\,\poly\left(\log(dp),\frac{1}{\eps}\right)$ points and outputs a degree $d$ polynomial $\widehat{q}(t)$ such that
\begin{align*}
\|\widehat{q}(t)-f(t)\|_p^p\le(1+\eps)\cdot\min_{q:\deg(q)\le d}\|q(t)-f(t)\|_p^p,
\end{align*}
with probability at least $\frac{2}{3}$. 
\end{theorem}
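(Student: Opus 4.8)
The plan is to chain three ingredients already established: the reduction of $\ell_p$ polynomial regression to $\ell_p$ regression on a Vandermonde design matrix (\thmref{polynomial:regression}), the structural fact that this Vandermonde regression problem has ``regression rank'' $\O{dp}$ (\lemref{vander:truerank}), and the active $\ell_q$ regression algorithm for $q\in[1,2]$ that reads only $\tO{\#\text{columns}}\cdot\poly(1/\eps)$ entries of the response vector (\thmref{active:matrix}).

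First I would invoke \thmref{polynomial:regression}: draw $n_0=\exp(\O{p})\cdot\poly(d,1/\eps)$ i.i.d.\ uniform points $s_1,\dots,s_{n_0}\in[-1,1]$, form the $(d+1)$-column Vandermonde matrix $\bA$ with $\bA_{i,j}=s_i^{j-1}$, and let $\bb_i=f(s_i)$. By \thmref{polynomial:regression} and a constant rescaling of $\eps$, it suffices to produce, with good probability, a vector $\widehat\bx$ with $\|\bA\widehat\bx-\bb\|_p\le(1+\eps)\min_\bx\|\bA\bx-\bb\|_p$; the polynomial with coefficients $\widehat\bx$ is then a $(1+\O{\eps})$-approximate solution. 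The key accounting point is that every entry of $\bb$ the downstream algorithm inspects costs exactly one query to $f$, so it is enough to bound the number of entries of $\bb$ read.

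Next, set $r=\flr{\log_2 p}$ so that $q:=p/2^r\in[1,2)$ (for $p\in[1,2)$ this is just $r=0$, $q=p$, and the reshaping below is the identity). Using the reshaping behind \lemref{vander:truerank}, rewrite $\|\bA\bx-\bb\|_p^p=\sum_i|\langle\ba_i,\bx\rangle-b_i|^p=\sum_i\bigl|(\langle\ba_i,\bx\rangle-b_i)^{2^r}\bigr|^{q}$ as a constrained $\ell_q$ regression instance on an $n_0\times\O{dp}$ matrix, whose $i$-th row is a fixed function of the pair $(s_i,b_i)$ alone and whose variable is a fixed function $g(\bx)$ of the polynomial coefficients. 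Then run the active $\ell_q$ sampling scheme of \thmref{active:matrix} on this $\O{dp}$-column instance: it queries $\tO{dp}\cdot\poly(1/\eps)$ rows, hence $f$ at that many points, and solving the reshaped $\ell_q$ problem on the kept rows returns a $(1+\eps)$-approximate minimizer. Because the reshaping is an exact pointwise identity, this pulls back to a $(1+\eps)$-approximate minimizer $\widehat\bx$ of $\|\bA\bx-\bb\|_p$. Union-bounding the failure probabilities of \thmref{polynomial:regression} and \thmref{active:matrix} ($\tfrac1{12}$ and $\tfrac1{100}$) gives success probability $\ge\tfrac23$, and composing the two $(1+\eps)$ factors gives the claimed guarantee with $s=dp\cdot\poly(\log(dp),1/\eps)$ queries.

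The main obstacle is the reshaping step and its compatibility with active sampling: I need the reshaped $\ell_q$ instance to simultaneously (i) have only $\O{dp}$ columns, so that \thmref{active:matrix} touches only $\tO{dp}\cdot\poly(1/\eps)$ entries; (ii) be \emph{query-local}, i.e.\ each reshaped row and each reshaped response coordinate depends only on the single pair $(s_i,b_i)$, so that running the sampling scheme on it still costs only $\tO{dp}\cdot\poly(1/\eps)$ evaluations of $f$; and (iii) satisfy an exact pointwise identity with the original $\ell_p$ objective and admit the one-sided subspace-embedding guarantee over all of $\mathbb{R}^{\O{dp}}$ (hence over the image of $g$), so that near-optimality transfers back to the original coefficient vector $\bx$. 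The remaining pieces --- the point count, the probability union bound, and the composition of approximation factors --- are routine.
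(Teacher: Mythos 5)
Your proposal follows the same route as the paper: reduce to Vandermonde $\ell_p$ regression via \thmref{polynomial:regression}, reshape via \lemref{vander:truerank} to an $\O{dp}$-column $\ell_q$ instance with $q=p/2^r\in[1,2)$, and invoke the active $\ell_q$ algorithm of \thmref{active:matrix}; the paper's own argument is exactly this chain and is not written in more detail. The one caveat is that your condition (ii) is the right thing to worry about but is stated too weakly: a reshaped design row that depends on $b_i$ at all (as the expansion of $(\langle\ba_i,\bx\rangle-b_i)^{2^r}$ genuinely does, through the powers $b_i^\ell$) already breaks the query bound, since the active scheme's sampling probabilities are computed from the design matrix and would then require querying $f$ at every point. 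The intended resolution, implicit in the paper, is to sample according to the $\ell_q$ Lewis weights of the $\bb$-free extended Vandermonde matrix $\bM$ with $\O{dp}$ columns (the matrix from the first phase of \algref{vandermonde:lp}) and read $\bb$ only at the sampled indices; with that reading, your argument matches the paper's.
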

The previous-best algorithm~\cite{KaneKP17,MeyerMMWZ21} for $\ell_p$ polynomial regression sampled $\O{d^5}$ points uniform from $[-1,1]$ and then used standard $\ell_p$ regression algorithms that required reading the signal at all $\O{d^5}$ sampled points, for a total query complexity of $\O{d^5}$. 
By comparison, \thmref{polynomial:regression:main} only has linear dependency in $d$ due to the structural property of \lemref{vander:truerank}. 
Since $\Omega(d)$ queries are clearly necessary, our result settles the dependency of $d$ in the query complexity for $\ell_p$ polynomial regression for all $d$ and all $p\ge 1$.

\section{Empirical Verification}
\label{sec:experiments}

We provide empirical evidence to validate our core statistical claim about Vandermonde regression: that the relative error \(\eps\) achieved by \(\ell_{p/2^r}\) Lewis Weight subsampling is polynomial in \(p\), and not exponential in \(p\).
More precisely, we compute the gap between the error achieved from exact \(\ell_p\) Vandermonde regression and the error achieved from the subsampled regression:
\[
    \eps_{empirical} = \frac{\|\bV\widehat\bx - \bb\|_p - \|\bV\bx^*-\bb\|_p}{\|\bV\bx^* - \bb\|_p}
\]
Our theory tells us that \((\eps_{empirical})^3 \leq \tO{\frac{d p^2}{m}}\), where \(m\) is the total number of subsampled rows.
The prior work on unstructured matrices instead suggests \((\eps_{empirical})^c \leq \tO{\frac{d^{\O{p}}}{m}}\) \cite{ShiW19}.
So, to visually distinguish these two settings, we look at the logarithm of both sides:
\vspace{0.25em}
\[
    \log(\eps_{empirical}) \leq \O{\ln(p) + \ln(\tsfrac dm)}
    \hspace{1cm}
    \textsc{or}
    \hspace{1cm}
    \log(\eps_{empirical}) \leq \O{p \ln(d) + \ln(\tsfrac1m)}
\]
\vspace{0.25em}
\noindent
In particular, our work suggests a \textit{logarithmic} dependence on \(p\), while the prior work suggests a \textit{linear} dependence on \(p\).

To validate our theory, we plot \(\log(\eps_{empirical})\) versus \(p\) and \(m\) on synthetic data.
Specifically, we generate \(n=25,000\) i.i.d.~\(N(0,1)\) times samples to form a Vandermonde matrix \(\bV\) with \(d=20\) columns, then compute the polynomial \(q(t)=t^{10}\) at each time sample, add \(N(0,10^{10})\) additive noise, and save the corresponding values in \(\bb\).
We then compute \(\eps_{empirical}\) for this \(\ell_p\) regression problem.

Notably, in order to compute \(\widehat\bx\), we omit the rounding procedure in our code, since the rounding is designed for worst-case inputs.
Instead, we simply compute \(\widehat\bx\) by solving \(\min_\bx\|\widehat\bV\bx-\widehat\bb\|_p\) where \(\widehat\bV\) and \(\widehat\bb\) are computed by sampling and rescaling \(\bV\) and \(\bb\) with the \(\ell_{p/2^r}\) Lewis Weights.

\begin{figure}
    \centering
    \hfill
    \subcaptionbox{\makebox[2cm][l]{Relative Error versus \(p\)}\label{fig:vandermonde_p_plot}}{%
        \includegraphics[width=0.44\columnwidth]{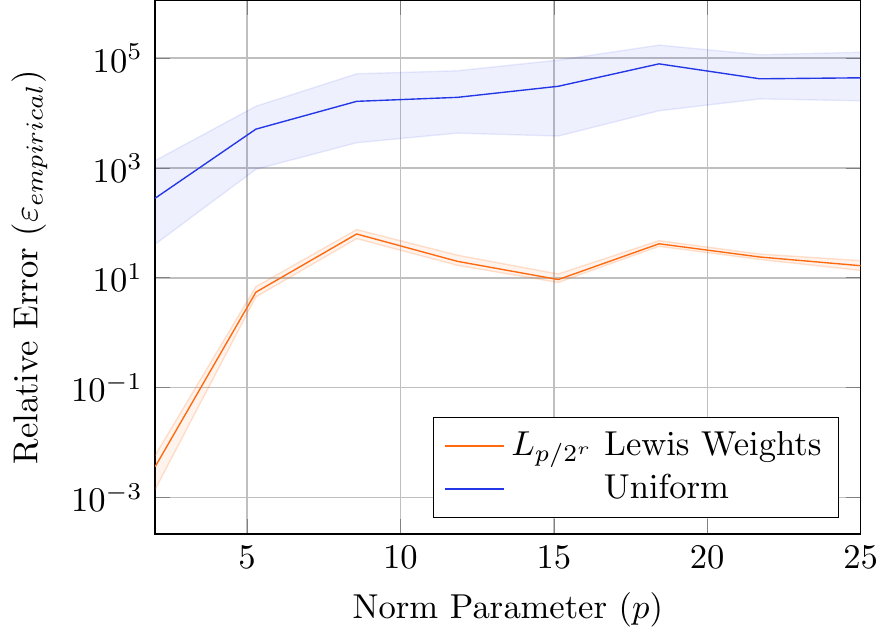}
    }
    \hspace{1cm}
    \subcaptionbox{\makebox[2.4cm][l]{Relative Error versus \(m\)}\label{fig:vandermonde_m_plot}}{%
        \includegraphics[width=0.44\columnwidth]{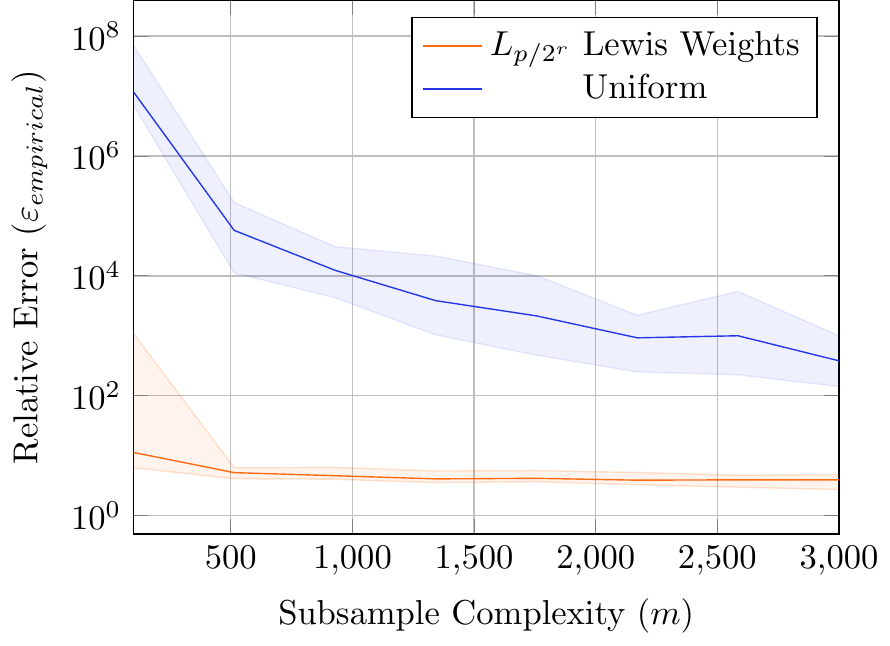}
    }
    \hfill
    \caption{
        Comparison of relative error \((\eps_{empirical})\) versus \(\ell_p\) parameter \(p\) and sample complexity \(m\) on Vandermonde data.
        We ran both experiments \(30\) times and plot the median, \(25^{th}\) quartile, and \(75^{th}\) quartile for each value of \(p\) and \(m\).
    }
    \label{fig:vandermonde_plots}
\end{figure}


\figref{vandermonde_plots} shows the result of these experiments, which were run in Julia 1.6.1, on Windows 10 with an Intel i7-7700K CPU and 16Gb RAM.
In \figref{vandermonde_p_plot}, we fix \(m=1000\) and vary \(p\in[2,25]\).
The trendline of Lewis Weight sampling clearly better fits a logarithmic model, as opposed to a linear model.
This reinforces our analysis by showing that the dependence on \(p\) is notably sub-exponential, beating the known bounds for \(\ell_p\) subsampling on unstructured matrices.

As a benchmark, we first compare our Lewis Weight sampling method to uniform sampling.
The noise in \(\bb\) is large enough that most rows of \(\bA\) have little information about the underlying polynomial \(q(t)\).
Lewis weight sampling takes avoids these rows, while uniform sampling does not, explaining why uniform sampling is much weaker in \figref{vandermonde_p_plot}.
Further, in \figref{vandermonde_m_plot}, we fix \(p=6\) and vary \(m\in[100,3000]\), showing that Lewis Weight sampling outperforms uniform sampling across both \(m\) and \(p\). 


We also experimentally demonstrate similar results for unstructured matrix regression, verifying that \(\ell_{p/2^r}\) Lewis Weight sampling works for unstructured matrices, thereby validating the analysis of \thmref{general:regression}.
We take a similar approach as before to verify that \(\ell_{p/2^r}\) Lewis Weight sampling is correct for \(\ell_p\) regression on unstructured matrices.
For this test, we fix \(n=25,000\), \(d=10\), and \(p=6\), while varying \(m\in[1, 1000]\).
We let \(\bA = \left[\begin{smallmatrix} \bG_1 & 0 \\ 0 & \bG_2 \end{smallmatrix}\right]\) where \(\bG_1\in\mathbb{R}^{100\times6}\) and \(\bG_2\in\mathbb{R}^{24,900\times4}\) are i.i.d.~\(N(0,1)\) matrices.
To generate \(\bb\), we sample a vector \(\bx\in\mathbb{R}^{10}\) whose first 6 entries are \(N(0,100^2)\) and remaining 4 entries are \(N(0,1)\), and let \(\bb = \bA\bx + \bz\) where \(\bz\in\mathbb{R}^{25,000}\) is a \(N(0,1)\) iid vector.


We generate this matrix \(\bA\) and response vector \(\bb\) just once and run \(\ell_{p/2^r}\) Lewis Weight sampling many times, so the variance in the plot comes only from the random sampling algorithms.
Note that we again omit the rounding procedure on \(\bb\).
\figref{unstructured-plot} shows the result of this test, and we clearly see that the error shrinks quickly in \(m\) for our algorithm.
This approach is much more practical than the prior Lewis Weight approximation method for unstructured matrices when \(p>4\).
That approach required solving a non-linearly constrained SDP \(\O{\log n}\) times \cite{CohenP15}, while our method requires only a Gaussian sketch matrix and the standard Lewis Weight Iteration, which converges very quickly.

Since \(\bb\) is so large on its first 100 entries, it is important for any subsampling algorithm to sample at least 6 of the first 100 rows.
Uniform sampling picks none of these rows until \(m\approx \frac{25000}{100} = 250\), which is why uniform sampling fails to converge to a good solution for small \(m\).
Lewis weight sampling instead gives much higher priority to the first 100 rows, avoiding any issue.
This is why the the gap between Lewis Weight sampling and uniform sampling is so large for this experiment.


\begin{figure}[!tb]
    \centering
    \includegraphics[width=0.45\columnwidth]{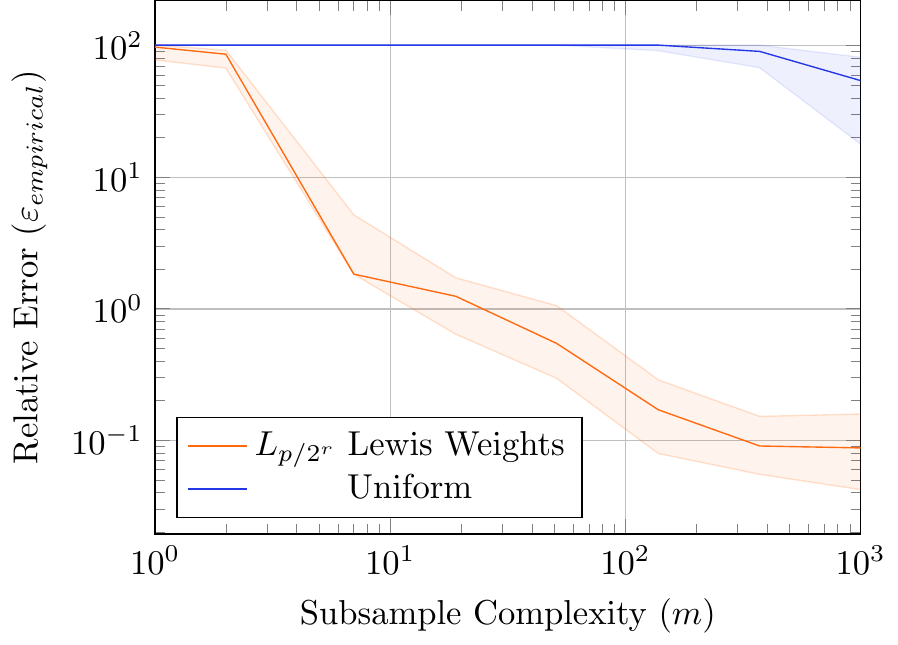}
    \caption{
        Empirical Relative Error (\(\eps_{empirical}\)) versus subsample complexity \(m\).
        We ran the experiment 50 times and plot the median, \(25^{th}\) quartile, and \(75^{th}\) quartile for each value of \(m\).
    }
    \label{fig:unstructured-plot}
\end{figure}


\section*{Acknowledgements}
Cameron Musco was supported by NSF grants 2046235 and 1763618, and an Adobe Research grant.
David P. Woodruff and Samson Zhou were supported by National Institute of Health grant 5401 HG 10798-2 and a Simons Investigator Award. Christopher Musco and Raphael Meyer were supported by NSF grant 2045590 and DOE Award DE-SC0022266.

\def\shortbib{0}
\bibliography{references}
\bibliographystyle{alpha}

\end{document}